\begin{document}

\title[Enhanced area law for the entanglement entropy]{How much delocalisation is needed for an\\ enhanced area law of the entanglement entropy?}

\author[P.\ M\"uller]{Peter M\"uller}
\address[P.\ M\"uller]{Mathematisches Institut,
  Ludwig-Maximilians-Universit\"at M\"unchen,
  Theresienstra\ss{e} 39,
  80333 M\"unchen, Germany}
\email{mueller@lmu.de}

\author[L.\ Pastur]{Leonid Pastur}
\address[L.\ Pastur]{B.\ Verkin Institute for Low Temperature Physics and Engineering, 
	Science Avenue 47, 61103 Kharkiv, Ukraine}
\email{pastur@ilt.kharkov.ua}

\author[R.\ Schulte]{Ruth Schulte}
\address[R.\ Schulte]{Mathematisches Institut,
  Ludwig-Maximilians-Universit\"at M\"unchen,
  Theresienstra\ss{e} 39,
  80333 M\"unchen, Germany}
\email{schulte@math.lmu.de}

\begin{abstract}
	We consider the random dimer model in one space dimension with Bernoulli disorder. For sufficiently 
	small disorder, we show that the entanglement entropy exhibits at least a logarithmically enhanced area law 
	if the Fermi energy coincides with a critical energy of the model where the localisation length diverges. 
\end{abstract}

\maketitle


\section{Introduction}

Entanglement is one of the core features of quantum mechanics, having no counterpart in classical mechanics. 
Its different facets have been the object of major research activities in various branches of modern 
physics and quantum information science \cite{Horodecki:2009gb}. 
Bipartite entanglement entropies serve as a popular quantifier of the 
degree of entanglement between two subsystems of a quantum system. Depending on the state in which the 
quantum system is prepared, entanglement entropies can show peculiar features. 
For example, Bekenstein's investigations of toy models 
for the Hawking entropy of black holes revealed \cite{PhysRevD.7.2333, Bekenstein04} 
that an entanglement entropy 
is not always an extensive quantity 
which scales with the volume but rather with the surface area of the (sub-) system. 
This seminal discovery of a so-called area law led to a wealth of further research in physics
\cite{Amico:2008en, Laflorencie:2016kg}. Soon after,
area laws for entanglement entropies were found in a variety of quantum systems that are prepared in their 
ground states, see e.g.\ \cite{RevModPhys.82.277}. 
One example concerns the entanglement entropy $S_{\Psi}$ for the spatial 
bipartition of a multi-particle or spin system in $d$ space dimensions prepared in its ground state $\Psi$. 
If $\Psi$ is energetically separated from the excited states by 
an energy gap, it is expected -- and proved for rather general one-dimensional systems 
\cite{Hastings1D07} -- that an area law will hold, $S_{\Psi} \sim L^{d-1}$. Here, $L$ is the linear 
system size.  More generally, exponential decay of ground-state 
correlations are known to be a sufficient criterion \cite{MR3296162,Stolz:2019uc}. 
Spin chains in a random magnetic field provide examples that this is fulfilled if there is a 
(suitable) mobility gap instead of a spectral gap \cite{AbdulRhamanStolz15, MR3671049, MR3749589, MR3799968, Stolz:2019uc}. 
The same has been verified for certain bosonic systems, too: randomly coupled harmonic oscillator 
systems with a mobility gap lead to an area law of the entanglement entropy \cite{MR3088230, MR3777285, 
2018arXiv181209144B}. 
On the other hand, in the absence of gaps, violations of area laws 
for entanglement entropies of ground states show up. 
They manifest themselves in a growth rate faster than the area and slower than the volume, see 
e.g.\ \cite{MR3582443}. 
Quite often, a logarithmic enhancement to the area law is found, $S_{\Psi} \sim L^{d-1} \ln L$, most 
notably if the system is probed at a quantum critical point
\cite{MR2083138, MR2566337, MR2566332}.  

Our general understanding of the scaling behaviour of entanglement entropies is far from satisfactory.
Therefore it is not only legitimate but demanded to turn to simpler or toy systems in order 
to gain further insight. Quasi-free fermion gases have proved very successful in this way from 
the perspective of mathematical physics. Their ground states $\Psi=\Psi_{E_{F}}$ are parametrised by the Fermi energy $E_{F}$, and we write $S_{E_{F}} := S_{\Psi_{E_{F}}}$ for the corresponding entanglement entropy. Several recent mathematically rigorous studies contribute 
to a first understanding of the scaling of $S_{E_{F}}$ on a more fundamental level: 
~(i)~ In case the single-particle Hamiltonian is a multi-dimensional random Schr\"odinger operator and the Fermi energy 
lies in the region of complete localisation, the validity of an area law, $S_{E_{F}} \sim L^{d-1}$, 
is established in 
\cite{PhysRevLett.113.150404,ElgartPasturShcherbina2016, MR3744386}. 	The proofs rely on the exponential decay
in space of the Fermi projection for $E_{F}$ in the region of complete localisation.
~(ii)~ Logarithmically enhanced area laws, $S_{E_{F}} \sim L^{d-1} \ln L$, are proven to occur in the case of free fermions in $d$ dimensions
\cite{Wolf:2006ek, Helling:2011gr, LeschkeSobolevSpitzer14, LeschkeSobolevSpitzer17} and if the single-particle Hamiltonian is a one-dimensional periodic Schr\"odinger operator \cite{PfirschSobolev18}. These works
even provide the exact asymptotics of $S_{E_{F}}$. In the continuum, this is achieved by relating 
it to Widom's conjecture which was proven in  celebrated works by Sobolev \cite{MR3076415, Sobolev:2014ew}.

Whereas in (i) the Fermi energy lies in a spectral region of dense pure point spectrum with corresponding
eigenfunctions that are not only exponentially localised in space but also give rise to dynamical 
localisation, the spectrum in case (ii) is absolutely continuous with delocalised generalised eigenfunctions. 
It is therefore only natural to ask the following questions: does a logarithmic enhancement to an 
area law require absolutely continuous spectrum or is a weaker breakdown of localisation 
already sufficient? Is it possible for disordered fermions to violate the area law at all?

This paper answers both questions affirmatively. To this end 
we consider the random dimer model in one dimension with Bernoulli
disorder. Its almost-sure spectrum is only pure point, but there exist two critical energies where the
localisation length diverges. Suppose the Fermi energy coincides with such a critical energy. 
Then our main result (Theorem~\ref{thm:main}) establishes a logarithmically diverging 
lower bound for the disorder-averaged entanglement entropy. Thus, this paper provides the first 
mathematical proof for the violation of an area law for a non-exactly solvable system. 
An important ingredient in our proof are the delocalisation properties -- approximate clock-spacing of
eigenvalues and flatness of eigenfunctions -- for the finite-volume random dimer model in a 
critical energy window proved by Jitomirskaya, Schulz-Baldes and Stolz \cite{JitomirskayaSchSt}. 
We combine them with a careful analysis of Pr\"ufer angles for finite-volume systems in 
Section~\ref{sec:good-cont} and an approximation argument to pass to the infinite-volume limit in 
Section~\ref{sec:ee-mod-ee}. The latter turns out to be delicate because it leads to a logarithmically growing error term
which must be dominated by the logarithmically growing main term.
Finally, we emphasise that complete localisation prevails everywhere else in the spectrum of the 
random dimer model except at the critical energies, and therefore an area law holds for the entanglement 
entropy whenever the Fermi energy does not coincide with a critical energy.

\section{Model and results}\label{ch:results}
We consider a system of quasi-free fermions whose configuration space is the one-dimensional lattice of integers $\Z$.
Its entanglement entropy of the (zero-temperature) 
ground state reduced to a spatial subset $A\subset\Z$ can be entirely expressed in terms of single-particle quantities, see e.g.\ \cite{Peschel:2003gz, Klich06, AbdulRhamanStolz15}. It is given by 
\begin{equation}
	\label{ee-def}
	S_{E_{F}}(A) := \tr{h\big(1_{A}(X) 1_{< E_{F}}(H) 1_{A}(X)\big) },
\end{equation}
where $E_{F} \in \R$ is the Fermi energy that characterises the ground state, and the trace is over the single-particle Hilbert space $\ell^{2}(\Z)$ of complex-valued square-summable sequences indexed by $\Z$. The function $h:[0,1]\rightarrow\R_{\ge0}$ is given by 
\begin{equation}
	\label{gdef}
	h(\lambda):=-\lambda \log_{2}\lambda - (1-\lambda)\log_{2}(1-\lambda) 
\end{equation}
for $\lambda\in[0,1]$ with the convention $0\log_{2} 0 :=0$. We write $1_{M}$ for the indicator function of a set $M$ and, in abuse of notation, $1_{< E_{F}} := 1_{]-\infty, E_{F}[}$. Finally, $H$ denotes the single-particle Hamiltonian and $X$ the position operator. Our particular interest lies in the case where $H$ is 
an operator-valued random variable.

The Hamiltonian $H: \Omega \ni \omega \mapsto H^\omega$ of the random dimer model is given by the sum of the kinetic part represented by the discrete Laplacian and a random potential, 
\begin{equation}
	\label{RSO-def}
	H^\omega:=-\sum_{x\in\Z} \big(\proj{\delta_{x}}{\delta_{x+1}}+\proj{\delta_{x+1}}{\delta_{x}} \big) + v \sum_{x\in\Z}V^\omega(x)\proj{\delta_{x}}{\delta_{x}}.
\end{equation}
Here, $(\Omega, \mathcal{A},\mathbb{P})$ is a probability space and, for a given disorder configuration $\omega$, the realisation $H^{\omega}$ acts as a bounded linear operator on $\ell^{2}(\Z)$. We write 
	$\{\delta_{x}\}_{x\in\Z}$ for the canonical basis of $\ell^{2}(\Z)$ and use the Dirac notation for rank-1 operators.
The random potential with disorder strength $v>0$ acts as the multiplication operator by the single-site potentials $\big(V^{\omega}(x)\big)_{x\in\Z}$, which are the realisations of a family of real-valued random variables with the properties $V(2x)=V(2x+1)$
for all $x\in\Z$ and $\big(V(2x)\big)_{x\in\Z}$ are independently and identically distributed. This means that every other pair of consecutive sites shares the same value of the potential. The random variable $V(0)$ is Bernoulli distributed. It attains the two different potential values $V_\pm \in\R$ with probability $p_{\pm}\in\;]0,1[\,$, subject to $p_{+} + p_{-} = 1$.  Without loss of generality, we set $V_-:=0$ and $V_+:=1$. The random Schr\"odinger operator $H$ describes a random infinite sequence of two kinds of homodimers linked together to an infinite chain.
Standard ergodicity arguments \cite{carlac1990random, pastfig1992random, AizWarBook} -- here with respect to $2\Z$-translations -- imply that the spectrum of the operator \eqref{RSO-def} is given by $\sigma(H^\omega)=[-2,+2]\cup[-2+v,2+v]$ for $\mathbb{P}$-almost every $\omega \in\Omega$.

The most interesting property of this model is that, although it is a one-dimensional discrete model, it does exhibit characteristics of delocalisation at the isolated critical energies $\{0,v\}$ in the spectrum, as was proven in \cite{JitomirskayaSchSt}, provided $v<2$. For the convenience of the reader, we state the precise result in the next section. These points of delocalisation can only occur at energies where the Lyapunov exponent vanishes. There exist further possibilities for a vanishing Lyapunov exponent in this model according to \cite{DeBievre:2000tq}. However, it is not clear what kind of transport to expect at these other energies. In any case, \cite{DeBievre:2000tq} prove strong dynamical localisation away from all these exceptional energies. 

Our main result shows the presence of at least a logarithmic enhancement to the area law of the disorder-averaged entanglement entropy. It pertains to the zero-temperature ground state of the non-interacting fermion system with  single-particle Hamiltonian given by  \eqref{RSO-def}. The Fermi energy is critical and the disorder strength sufficiently weak. Given $L\in\N$, let $\Lambda_{L} :=\{1,\ldots,L\}$ be 
	a box in $\Z$ consisting of $|\Lambda_{L}| =L$ sites.

\begin{theorem}
	\label{thm:main}
	Consider the entanglement entropy \eqref{ee-def} for the Hamiltonian \eqref{RSO-def} of the 
	random dimer model. Then, there exists a maximal disorder strength $v_{0} \in \;]0,2[$ such that 
	for every $v\in \;]0, v_{0}]$ and for a critical Fermi energy $E_F\in\{0,v\}$, we have
	\begin{equation}
		\label{eq:main}
		\liminf_{L \to\infty} \, \frac{\E{S_{E_{F}}(\Lambda_L)}}{\ln L} >0.
	\end{equation}
	Here, $\mathbb{E}$ denotes the expectation corresponding to the probability measure $\mathbb{P}$.
\end{theorem}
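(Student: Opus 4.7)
The plan is to compare the entanglement entropy with a finite-volume ``modified'' entropy that is directly amenable to the spectral analysis of the dimer Hamiltonian, and then to bound both pieces separately. For $M\gg L$, let $H_M$ denote the restriction of $H$ to $\Lambda_M$ with (say) Dirichlet boundary conditions, set $P_M:=1_{<E_F}(H_M)$, and define the modified entropy
\begin{equation*}
\widetilde S_{E_F}(\Lambda_L):=\Tr\{h(1_{\Lambda_L}P_M 1_{\Lambda_L})\}.
\end{equation*}
I would aim to prove (i) a lower bound $\E{\widetilde S_{E_F}(\Lambda_L)}\geq c_1\ln L$ with an \emph{explicit} constant $c_1>0$ for a suitable polynomial choice $M=M(L)$, and (ii) an approximation estimate $\E{|S_{E_F}(\Lambda_L)-\widetilde S_{E_F}(\Lambda_L)|}\leq c_2\ln L$ with $c_2<c_1$. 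These are the two programmes carried out in Sections~\ref{sec:good-cont} and \ref{sec:ee-mod-ee}, and together they imply \eqref{eq:main}.

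For (i), I would use the elementary inequality $h(\lambda)\geq c\,\lambda(1-\lambda)$ on $[0,1]$ together with the identity
\begin{equation*}
\Tr\{R(1-R)\}=\|1_{\Lambda_L^c}P_M 1_{\Lambda_L}\|_{\mathrm{HS}}^2,\qquad R:=1_{\Lambda_L}P_M 1_{\Lambda_L},
\end{equation*}
which reduces the task to a lower bound on the Hilbert--Schmidt mass of the off-diagonal block. Writing $P_M=\sum_{E_j<E_F}\ket{\psi_j}\bra{\psi_j}$ in the eigenbasis of $H_M$, this norm is bilinear in the matrix elements $\langle\psi_j,1_{\Lambda_L}\psi_k\rangle$. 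The Jitomirskaya--Schulz-Baldes--Stolz result supplies, on a high-probability event and inside a critical window $|E-E_F|\leq\eta_M$ of appropriate width, approximate clock-spacing of the $E_j$ with gap $\sim\pi/M$ and flatness $|\psi_j(x)|^2\sim 1/M$. The Pr\"ufer-angle analysis of Section~\ref{sec:good-cont} should promote this to an asymptotic of the form $\psi_j(x)\approx M^{-1/2}u(x)\sin\theta_{E_j}(x)$, with $u$ a bounded $2$-periodic deterministic profile and the Pr\"ufer phase $\theta_{E_j}$ advancing approximately linearly in $x$ (reflecting the ellipticity, for $v\leq v_0$, of the critical two-site transfer matrices). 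Substituted into the matrix elements, this produces to leading order a discrete sine-kernel structure -- exactly the mechanism responsible for the $\tfrac{1}{3}\ln L$ asymptotics of free lattice fermions -- and hence the desired $c_1\ln L$ lower bound on the good event.

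The principal obstacle is that $P-P_M$ is not trace class, and a naive bound on $|S_{E_F}(\Lambda_L)-\widetilde S_{E_F}(\Lambda_L)|$ already grows logarithmically, of the \emph{same} order as the main term. Step (ii) therefore requires a genuinely quantitative argument -- exploiting residual decay of $P(x,y)$ away from $\Lambda_M$ and standard projection-difference estimates for spectral projections of comparable self-adjoint operators -- and, crucially, a constant $c_2$ strictly smaller than the $c_1$ obtained in (i). A second subtle point is to show that the off-diagonal Gram entries $\langle\psi_j,1_{\Lambda_L}\psi_k\rangle$ do not all conspire to cancel: enough Pr\"ufer-phase variation across the critical window must survive so that the non-zero spectrum of $R$ does not collapse onto $\{0,L/M\}$. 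Controlling both phenomena on a single high-probability event, uniformly in $L$, is where the bulk of the technical work must lie; the smallness assumption $v\leq v_0$ presumably enters precisely by forcing the deterministic skeleton of the Pr\"ufer dynamics to dominate the random corrections.
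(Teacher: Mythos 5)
Your proposal follows essentially the same route as the paper: replace $h$ by a quadratic from below, prove a $\ln L$ lower bound for the finite-volume quantity from the Jitomirskaya--Schulz-Baldes--Stolz clock-spacing and flatness estimates via a Pr\"ufer-angle non-cancellation argument, and then beat a logarithmically growing finite-volume approximation error by a strictly larger main term. The paper's concrete realisation of the two points you leave open is (a) the commutator identity $\langle\psi_E,1_A\psi_{E'}\rangle=(E'-E)^{-1}\langle\psi_E,[H_L,1_A]\psi_{E'}\rangle$, which avoids any pointwise sine-kernel asymptotics for the eigenfunctions and reduces everything to two boundary terms whose mutual cancellation is excluded by the positioning of $A=\Lambda_L^{\gamma,\gamma^2}$ inside an outer box of size $2L$ (linear, not polynomially larger, in $L$), and (b) the observation that the Krein/Combes--Thomas error is proportional to $\gamma^{2}K\ln L$ with $K\sim 1/\gamma$, hence is dominated by the $\gamma$-independent main constant $2^{-15}$ once $\gamma$ is chosen small.
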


In proving the theorem, we obtain, as an intermediate result, an enhancement to the area law for a modified entanglement entropy. The modification consists in replacing the infinite-volume Hamiltonian \eqref{RSO-def} in \eqref{ee-def} by its simple restriction $H_L^\omega$ to the box $\Gamma_{L} :=\{-L,\hdots,L-1\} \subset\Z$. For $A \subset \Gamma_{L}$, we thus define this modified entanglement entropy as
\begin{equation}
	\label{mod-ee-def}
	S^{\omega}_{E_{F}}(A,{L}) := \tr{h\big(1_{A}(X) 1_{< E_{F}}(H^{\omega}_{L}) 1_{A}(X)\big)}
\end{equation}
for $\mathbb{P}$-almost every $\omega\in\Omega$.


\begin{theorem}
	\label{thm:GeneralExpectationLogLowerBound}
	Let $v\in \;]0,2[$ and the Fermi energy $E_F\in\{0,v\}$ be critical.
	Then there exists $\delta_{0}\in\;]0,1[$ such that for all $\delta\in\;]0,\delta_{0}]$ the modified 
	entanglement entropy 
	satisfies
	\begin{equation}
		\label{eq:thm2}
		\liminf_{L\to\infty} \, \frac{S^{\omega}_{E_{F}}(\Lambda_{L}',L)}{\ln L} >0
	\end{equation}
	for $\mathbb{P}$-almost all $\omega\in\Omega$. Here, we have defined
	$\Lambda_{L}' := [-L, -(1-\delta) L] \cap \Z$.
\end{theorem}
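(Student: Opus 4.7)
The plan is to combine the finite-volume delocalisation properties from \cite{JitomirskayaSchSt} with a quantitative Pr\"ufer angle analysis, then reduce the logarithmic lower bound to an explicit sine-kernel-type calculation on $\Lambda_L'$.

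First I diagonalise $H_L^\omega$ on $\ell^2(\Gamma_L)$, with eigenpairs $(\lambda_n^\omega,\psi_n^\omega)$, and use the elementary bound $h(\mu)\ge 4\mu(1-\mu)$ on $[0,1]$ to reduce the problem to
\[
	S^\omega_{E_F}(\Lambda_L',L) \geq 4\bigl(\Tr[M_L^\omega] - \Tr[(M_L^\omega)^2]\bigr) = 4\sum_{\lambda_n^\omega<E_F\le\lambda_m^\omega}\bigl|\scalar{\psi_n^\omega}{1_{\Lambda_L'}\psi_m^\omega}\bigr|^2,
\]
where $M_L^\omega:=1_{\Lambda_L'}1_{<E_F}(H_L^\omega)1_{\Lambda_L'}$. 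It therefore suffices to bound the double sum on the right-hand side from below by a constant times $\ln L$ almost surely.

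Next, invoking the theorem of \cite{JitomirskayaSchSt}, I restrict attention to indices $n,m$ with $\lambda_n^\omega,\lambda_m^\omega$ in a small deterministic window $I_\eta=[E_F-\eta,E_F+\eta]$, on which the eigenvalues are clock-spaced, $\lambda_{n+1}^\omega-\lambda_n^\omega\asymp 1/L$, and the eigenfunctions are flat, $|\psi_n^\omega(x)|^2\le C/L$, on an event of probability converging to one. To extract the sinusoidal structure needed for a sine-kernel computation, I analyse the Pr\"ufer angle $\theta^\omega_E(x)$ of the dimer transfer matrix at energies $E\in I_\eta$: after conjugating to the two-step dynamics, the angle performs small, on-average unbiased rotations, so that a quantitative good-continuity estimate (Section~\ref{sec:good-cont}) yields
\[
	\psi_n^\omega(x) = \sqrt{\tfrac{2}{|\Gamma_L|}}\,\sin\bigl(k_n^\omega x + \varphi_n^\omega\bigr) + r_n^\omega(x),
\]
with wave numbers $k_n^\omega$ inherited from the clock-spacing and a remainder $r_n^\omega$ that is uniformly smaller than the leading term on $\Lambda_L'$.

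Substituting this sinusoidal representation, the inner products $\scalar{\psi_n^\omega}{1_{\Lambda_L'}\psi_m^\omega}$ become partial geometric sums of length $|\Lambda_L'|=\delta L$, whose squared moduli behave like $\min\bigl(\delta^2, c(n-m)^{-2}\bigr)$ plus a controllable fast-oscillating remainder. Summing over boundary pairs $n<n_F\le m$, where $n_F$ is the index at $E_F$, a dyadic decomposition of $\sum_k k\min(\delta^2, ck^{-2})$ produces the sought logarithm: the range $1\le k\lesssim 1/\delta$ contributes $O(1)$, while $1/\delta\lesssim k\lesssim\eta L$ yields a term of size $\ln(\eta\delta L)\sim \ln L$. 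A Borel--Cantelli argument along the integer scales $L$ finally upgrades the high-probability JSS bounds to the almost-sure statement \eqref{eq:thm2}.

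The main obstacle will be the quantitative Pr\"ufer angle analysis in Section~\ref{sec:good-cont}: the dimer randomness makes the angle process non-stationary and the individual rotations are only small in a mean sense, so uniform control of the sinusoidal approximation for all $\psi_n^\omega$ with $\lambda_n^\omega\in I_\eta$ is delicate. Moreover, the window width $\eta$ must be chosen small enough that the approximation is tight on $\Lambda_L'$, yet large enough that the number of eigenvalues in $I_\eta$ dominates the $\ln L$ lower bound in the sine-kernel calculation. Balancing these competing requirements -- while keeping the remainder $r_n^\omega$ small in $\ell^2(\Lambda_L')$ -- is the technical heart of the argument.
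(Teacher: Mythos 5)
Your opening reduction is the same as the paper's: bound $h$ below by $4\lambda(1-\lambda)$ and write $4(\Tr M_L^\omega-\Tr[(M_L^\omega)^2])$ as a double sum of $|\langle\psi_n^\omega,1_{\Lambda_L'}\psi_m^\omega\rangle|^2$ over pairs straddling $E_F$, with the clock-spacing and flatness of \cite{JitomirskayaSchSt} restricted to a critical window and Borel--Cantelli at the end. The divergence, and the gap, is in how you evaluate the overlaps. You propose the representation $\psi_n^\omega(x)=\sqrt{2/|\Gamma_L|}\,\sin(k_n^\omega x+\varphi_n^\omega)+r_n^\omega(x)$ with a remainder uniformly smaller than the leading term on $\Lambda_L'$, and then compute the overlaps as partial geometric sums. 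This is not available from \cite{JitomirskayaSchSt} and is in general false for the dimer model: the Pr\"ufer radius only satisfies $r_x^\omega(E)^2\in[1/(CL),C/L]$ with $C>1$ bounded away from $1$ for fixed $v$, so the ``remainder'' is of the same order as the leading term site by site; and the Pr\"ufer phase $\theta_x^\omega(E)$ accumulates random increments (roughly $\pi$ per $V=0$ dimer and a different angle per $V=v$ dimer), so it is not of the form $k_n^\omega x+\varphi_n^\omega$ for any single wave number. What \emph{is} controlled is the $E$-derivative of the phase, i.e.\ the \emph{difference} $\theta_x(E)-\theta_x(E')$, not the phase itself as a function of $x$. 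Consequently the bulk sum $\sum_{x\in\Lambda_L'}\psi_n(x)\psi_m(x)$ contains a fast-oscillating part $\sum_x r_xr_x'\cos(\theta_x(E)+\theta_x(E'))$ whose cancellation you cannot justify with the JSS input.

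The paper avoids this entirely by the commutator identity: $\langle\psi_E,1_A\psi_{E'}\rangle=(E'-E)^{-1}\langle\psi_E,[H_L,1_A]\psi_{E'}\rangle$, which turns the bulk overlap into boundary data, namely Pr\"ufer radii and angle \emph{differences} at the two endpoints of $A$ only --- exactly the quantities JSS controls. Moreover, for this particular theorem the box $\Lambda_L'$ is attached to the left boundary of $\Gamma_L$, so the Dirichlet condition kills one of the two boundary terms, a single $|\sin(2z^-_{j,k})|\ge 1/\sqrt2$ count per antinode suffices, no cancellation between two boundary terms has to be ruled out, and the result holds for all $v\in\;]0,2[$ without requiring $C$ close to $1$. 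Your dyadic summation $\sum_k k\min(\delta^2,ck^{-2})\sim\ln L$ is morally the same final count as the paper's double integral of $(\varepsilon-\eta)^{-2}$, but the step feeding into it --- the effective sine-kernel form of the overlaps --- is exactly the part your argument does not establish. To repair it you would either need to prove equidistribution of the summed phases $\theta_x(E)+\theta_x(E')$ along $\Lambda_L'$ (a substantially harder statement than anything in \cite{JitomirskayaSchSt}), or adopt the commutator reduction.
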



\begin{remarks}
\item The proof of Theorem~\ref{thm:main} shows that the left-hand side of \eqref{eq:main} 
	is bounded from below by $ 2^{-16}$, see \eqref{eq:main-term-est} and \eqref{calL-cond}.
	More interestingly, the proof of Theorem~\ref{thm:GeneralExpectationLogLowerBound} yields a
	strictly positive constant, which depends only on $v$, but not on $\omega$ that serves as a lower 
	bound for the limit inferior in \eqref{eq:thm2}.	 
\item We point out that, in contrast to Theorem \ref{thm:main}, the validity of 
	Theorem~\ref{thm:GeneralExpectationLogLowerBound} is not restricted to weak disorder. Furthermore, 
	it provides an almost-sure bound, whereas Theorem \ref{thm:main} holds in expectation. This is of relevance, because 
	the entanglement entropy is known \emph{not} to be self-averaging in one dimension \cite{MR3744386}.
	The price we pay is that the box $\Lambda'_L$ is attached to one boundary point of $\Gamma_L$. 
	Our methods in Section \ref{sec:ee-mod-ee} do not allow us to pass to the (non-modified) entanglement 
	entropy in this situation.
\item Modified entanglement entropies with boxes attached to a boundary as in \eqref{eq:thm2} are often considered in
	physics, see e.g.\	 \cite{PhysRevB.85.094417, PhysRevB.89.115104}.
\item
	A divergent lower bound for the modified entanglement entropy of a deterministic system of coupled harmonic oscillators was recently proven in \cite{2018arXiv181209144B}.
\item For all energies at which the Lyapunov exponent does not vanish, the multi-scale analysis can 
	be applied to prove (strong dynamical) localisation, despite the Bernoulli distribution of the random variables
	\cite{CarmonaKleinMar87, DeBievre:2000tq}. Some additional work then yields fast decay of the Fermi projection 
	at all these energies. Thus, it follows from  \cite{PhysRevLett.113.150404,ElgartPasturShcherbina2016} 
	that the entanglement entropy exhibits an area law at all non-critical Fermi energies of the random dimer model.
\end{remarks}


\section{Proof of Theorem~\ref{thm:GeneralExpectationLogLowerBound}}

\subsection{General idea and strategy}

The aim is to construct a suitable lower bound on the expectation of the 
modified entanglement entropy $S_{E_{F}}(A,L)$ from \eqref{mod-ee-def} that grows logarithmically in $L$. 
In doing so, we choose a subregion $\Lambda_L \subset \Gamma_L$ for $A$ whose length is proportional to 
$L$ and which is carefully positioned within $\Gamma_{L}$. This will allow us to control the error 
when passing to the (non-modified) entanglement entropy \eqref{ee-def} in the next section. 

A typical first step \cite{PhysRevLett.113.150404} in obtaining a lower bound for the (modified) 
entanglement entropy is to replace the function $h$
in its definition by a parabola.

\begin{definition}
	\label{def:quadratic-ent}
 	Let $g: [0,1] \rightarrow \R_{\ge 0}$,
	\begin{equation}
 		\label{hdef}
		g(\lambda) :=  4 \lambda(1-\lambda),
	\end{equation}	
	so that $g\le h$. For $E_{F}\in\R$ and $A\subset\Z$, respectively $A\subset\Gamma_{L}$, we introduce the quadratic analogue to the (modified)	entanglement entropy
	\begin{equation}
		\begin{split}
 			Q_{E_{F}}^{\omega}(A) &:= \tr{g\big(1_{A}(X) 1_{< E_{F}}(H^{\omega}) 1_{A}(X)\big) } 
			\le S^{\omega}_{E_{F}}(A), \\
			Q_{E_{F}}^{\omega}(A,L) &:= \tr{g\big(1_{A}(X) 1_{< E_{F}}(H^{\omega}_{L}) 1_{A}(X)\big)} \le 
				S^{\omega}_{E_{F}}(A,{L}),
		\end{split}
	\end{equation}
	$\omega\in\Omega$, of the random dimer model.
\end{definition}

Since the finite-volume Schr\"odinger operator has only discrete spectrum, $Q_{E_{F}}^{\omega}(A,L)$
can be conveniently rewritten in terms of the non-degenerate eigenvalues 
$E \in \sigma(H_{L}^{\omega})$ and corresponding $\ell^{2}(\Gamma_{L})$-normalised eigenfunctions 
$\psi_{E}^{\omega}$ (for which we drop the index $L$ from the notation). For convenience, we set $\psi_{E}^{\omega}(-L-1) := 0 =: \psi_{E}^{\omega}(L)$.

%
%

\begin{lemma}\label{prop:Summe}
	Let $x_{1},x_{2} \in\Z$ with $-L \le  x_1 < x_2 \le L-1$, $A:= [x_{1},x_{2}] \cap\Z$ and $E_{F} \in\R$.  
	Then we have
	\begin{equation}
		\label{eq:EntropyEnergySum}
			Q^{\omega}_{E_{F}}(A,L) 
			= 4\sum_{\substack{E,E' \in \sigma(H_{L}^{\omega}): \\[.5ex] E < E_{F}, \,E' \ge E_F}}
			\frac{1}{(E' -E)^2} \; |\langle\psi_{E}^{\omega},[H_L^{\omega}, 1_{A}(X)]\psi_{E'}^{\omega}\rangle|^2
	\end{equation}
	for $\mathbb{P}$-a.a.\ $\omega\in\Omega$, where the commutator is a boundary operator
	\begin{equation}
		[H_L^{\omega},1_{A}(X)]= \proj{\delta_{x_{1}}}{\delta_{x_{1}-1}} - \proj{\delta_{x_{1}-1}}{\delta_{x_{1}}}
			+ \proj{\delta_{x_{2}}}{\delta_{x_{2}+1}} - \proj{\delta_{x_{2}+1}}{\delta_{x_{2}}}
	\end{equation}
	that is independent of randomness.
\end{lemma}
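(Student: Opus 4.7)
The plan is to reduce the quadratic entropy $Q^{\omega}_{E_F}(A,L)$ to a Hilbert--Schmidt norm of a commutator, diagonalise in the eigenbasis of $H_L^\omega$, and finally compute the commutator $[H_L^\omega, 1_A(X)]$ by hand. Throughout, write $P^\omega := 1_{<E_F}(H_L^\omega)$ and $T^\omega := 1_A(X) P^\omega 1_A(X)$.

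First I would exploit the identity $g(\lambda) = 4\lambda(1-\lambda)$. Expanding the trace and using $1_A^2 = 1_A$ together with $(P^\omega)^2 = P^\omega$ and the cyclicity of the trace yields
\begin{equation*}
  Q^\omega_{E_F}(A,L) = 4\bigl(\Tr\{1_A P^\omega\} - \Tr\{(1_A P^\omega)^2\}\bigr).
\end{equation*}
The standard computation for the commutator $[P^\omega, 1_A]$ (which is anti-self-adjoint) gives
\begin{equation*}
  \Tr\bigl\{[P^\omega,1_A]^{*}[P^\omega,1_A]\bigr\} = 2\bigl(\Tr\{1_A P^\omega\} - \Tr\{(1_A P^\omega)^2\}\bigr),
\end{equation*}
so $Q^\omega_{E_F}(A,L) = 2\,\|[P^\omega, 1_A(X)]\|_{\mathrm{HS}}^{2}$. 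This reduction is standard in the area-law literature and ensures that only the off-diagonal block (in the spectral decomposition with respect to $E_F$) of $1_A$ will contribute.

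Next I would diagonalise in the orthonormal eigenbasis $(\psi_{E}^{\omega})_{E\in\sigma(H_L^\omega)}$ of the finite-volume Hamiltonian. Since $P^\omega$ is diagonal in this basis,
\begin{equation*}
  \langle\psi_E^\omega, [P^\omega, 1_A(X)]\psi_{E'}^\omega\rangle = \bigl(1_{<E_F}(E) - 1_{<E_F}(E')\bigr) \langle\psi_E^\omega, 1_A(X)\psi_{E'}^\omega\rangle,
\end{equation*}
which vanishes unless exactly one of $E,E'$ is below $E_F$. Summing the squared moduli yields a factor $2$ from the two orderings, so
\begin{equation*}
  Q^\omega_{E_F}(A,L) = 4\sum_{E<E_F,\,E'\ge E_F} |\langle\psi_E^\omega, 1_A(X)\psi_{E'}^\omega\rangle|^{2}.
\end{equation*}
To introduce the commutator with $H_L^\omega$, I use that $H_L^\omega\psi_E^\omega = E\psi_E^\omega$ and $E\ne E'$ on the restricted sum, so
$\langle\psi_E^\omega, [H_L^\omega, 1_A(X)]\psi_{E'}^\omega\rangle = (E-E')\langle\psi_E^\omega, 1_A(X)\psi_{E'}^\omega\rangle$,
which allows to replace $\langle\psi_E^\omega, 1_A(X)\psi_{E'}^\omega\rangle$ by the quantity appearing in \eqref{eq:EntropyEnergySum}, producing the denominator $(E'-E)^{2}$.

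Finally I would evaluate the commutator $[H_L^\omega, 1_A(X)]$ explicitly. Since $1_A(X)$ and the potential part of $H_L^\omega$ are both diagonal in the position basis, they commute, and only the discrete Laplacian contributes. A short computation gives
\begin{equation*}
  [{-}\Delta,1_A(X)] = -\sum_{z\in\Z}\bigl(1_A(z+1)-1_A(z)\bigr)\bigl(\proj{\delta_z}{\delta_{z+1}} - \proj{\delta_{z+1}}{\delta_z}\bigr),
\end{equation*}
and for $A=[x_1,x_2]\cap\Z$ the factor $1_A(z+1)-1_A(z)$ is nonzero only at $z=x_1-1$ (value $+1$) and $z=x_2$ (value $-1$), producing exactly the four rank-one terms in the statement. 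This step is deterministic in $\omega$, as claimed. There is no real obstacle here; the only point requiring a bit of care is tracking the factor of $2$ in the Hilbert--Schmidt identity, but this is just bookkeeping once the computation $[P^\omega,1_A]^{*}[P^\omega,1_A] = 1_A P^\omega 1_A + P^\omega 1_A P^\omega - 1_A P^\omega 1_A P^\omega - P^\omega 1_A P^\omega 1_A$ is written out.
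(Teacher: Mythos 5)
Your proof is correct and follows essentially the same route as the paper: expand the quadratic $g$ using idempotency of $1_A(X)$ and the Fermi projection, diagonalise in the eigenbasis to obtain $4\sum_{E<E_F,\,E'\ge E_F}|\langle\psi_E^\omega,1_A(X)\psi_{E'}^\omega\rangle|^2$, and then use the identity $(E-E')\langle\psi_E^\omega,1_A(X)\psi_{E'}^\omega\rangle=\langle\psi_E^\omega,[H_L^\omega,1_A(X)]\psi_{E'}^\omega\rangle$ to introduce the energy denominator. The intermediate packaging as $2\|[P^\omega,1_A(X)]\|_{\mathrm{HS}}^2$ is a cosmetic detour, and your explicit evaluation of the commutator (which the paper leaves implicit) is a welcome addition.
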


\begin{proof}
	We introduce the abbreviations 
	$P:=1_{A}(X)$ and $Q:=1_{< E_{F}}(H^{\omega}_{L})$. A straightforward calculation of the trace yields
	\begin{align}
		\frac{1}{4}\tr{g(PQP)} &= \tr{PQP(1-Q)P} \notag \\
		&= \sum_{\substack{E,E' \in \sigma(H_{L}^{\omega}): \\[.5ex]
			 E < E_{F}, \,E' \ge E_F}}
			 \tr{P\proj{\psi_{E}^{\omega}}{\psi_{E}^{\omega}}P\proj{\psi_{E'}^{\omega}}{\psi_{E'}^{\omega}}P} 
			 \notag\\
		&= \sum_{\substack{E,E' \in \sigma(H_{L}^{\omega}): \\[.5ex]
			 E < E_{F}, \,E' \ge E_F}}
		 	 |\langle\psi_{E}^{\omega}, P\psi_{E'}^{\omega}\rangle|^2.
	\end{align}
	The matrix elements of $P$ can be rewritten in terms of the commutator according to 
	\begin{equation}
 		E \langle\psi_E^{\omega}, P\psi_{E'}^{\omega}\rangle
		= \langle\psi_{E}^{\omega}, H_L^{\omega}P\psi_{E'}^{\omega}\rangle
		= E' \langle\psi_{E}^{\omega}, P\psi_{E'}^{\omega}\rangle 
			+ \langle\psi_E^{\omega}, [H_L^{\omega},P]\psi_{E'}^{\omega}\rangle.
	\end{equation}
	This concludes the proof.                                                                                                                                                                                                                                                                                                                                                                                                                                                                                                                                                                                                                                                                                                                                                                                                                                                                                                                                                                                                                                                                                                                                                                                                                                                                                                                                                                                                                                                                                                                                                                                                                                                                                                                                                                                                                                                                                                                                                                                                                                                                                                                                                                                                                                                                                                                                                                                                                                                                                                                                                                                                                                                                                                                                                                                                                                                                                                                                                                                                                                                                                                                                                                                                                                                                                                                                                                                                                                                                                                                                                                                                                                                                                                                                                                                                                                                                                                                                                                                                                                                                                                                                                                                                                                        
\end{proof}

\begin{remark}
 	It is clear that \eqref{eq:EntropyEnergySum} holds for general self-adjoint operators $H$ 
	on $\ell^{2}(\Gamma_{L})$.
\end{remark}

The overall idea of our argument is that the energy denominator in 
\eqref{eq:EntropyEnergySum} provides the mechanism 
for the potential logarithmic enhancement to the area law. The enhancement can only occur if eigenfunctions
corresponding to near-by energies have a significant overlap somewhere on the surface of the spatial region 
$A$. For Anderson-localised systems, this is typically not the case, because the localisation centres of 
two eigenfunctions are expected to be separated by a distance that grows logarithmically with the inverse
of their energy difference \cite{Mott68PhilMag, Mott70PhilMag, MR2025824}.
Consequently, the entanglement entropy is expected to obey a strict 
area law for localised systems. 
Indeed, this was proven in \cite{PhysRevLett.113.150404, ElgartPasturShcherbina2016}, who follow another line of 
reasoning. In the dimer model, however, localisation breaks down at the critical energies and delocalisation
properties occur in an energy window around the critical energies. In fact, delocalisation manifests 
itself almost as nicely as for the Laplacian \cite{JitomirskayaSchSt}.  

In order to formulate this we introduce some more notation. Given $E\in\R$ and $\omega\in \Omega$, let $\phi_{E}^{\omega}: \Z \rightarrow \R$ be the unique 
solution of the difference equation
\begin{equation}
	\label{diff-eq}
	- \phi_{E}^{\omega}(x-1) - \phi_{E}^{\omega}(x+1) + vV^{\omega}(x) \phi_{E}^{\omega}(x) 
	= E \phi_{E}^{\omega}(x), \qquad x \in \Z,
\end{equation}
subject to the constraints $\phi_{E}^{\omega}(-L-1)=0$, $\phi_{E}^{\omega}(-L) > 0$ and 
$\sum_{x\in\Gamma_{L}} \phi_{E}^{\omega}(x)^{2}=1$. 
We write 
\begin{equation}
	\label{gen-ef}
 	\psi_{E}^{\omega} :=\phi_{E}^{\omega}\big|_{\Gamma_{L}} 
\end{equation}
for its restriction onto $\Gamma_{L}$.
This function is an eigenfunction of $H_{L}^{\omega}$ if and only if the boundary condition 
$\phi_{E}^{\omega}(L)=0$ holds also at the right border of $\Gamma_{L}$. Thus, \eqref{gen-ef} generalises our 
previous notation of eigenfunctions.

\begin{theorem}[Jitomirskaya, Schulz-Baldes, Stolz \cite{JitomirskayaSchSt}]
	\label{thm:Jitomirskaya}
	Let $v\in \; ]0,2[$ and $E_{c} \in \{0,v\}$. Then
	\begin{nummer}
	\item
	For every $\alpha>0$ there exist a minimal length $L_{\mathrm{min}} \in \N$ and quantities $c>0$ and $C>1$,
	depending on $\alpha$ and on the system parameters, with 
	\begin{equation}
		\label{eq:Cnach1}
 		\lim_{v\downarrow 0} C = 1,
	\end{equation}
	such that for all $L \ge L_{\mathrm{min}}$ there are exceptional events 
	$\Omega_L(\alpha)\subseteq\Omega$ of small probability
	\begin{equation}
		\label{prob-bad-event}
		\Pp[\Omega_L(\alpha)] \le \e^{-cL^\alpha}
	\end{equation}
	such that for every non-exceptional $\omega\in(\Omega_L(\alpha))^c$ the following statement holds:
	the eigenvalues of $H_{L}^{\omega}$ in the critical energy window 
	$\mathcal{W}_{L} := [E_c-L^{-1/2-\alpha},E_c+L^{-1/2-\alpha}]$ are 
	equally spaced in the sense that any two adjacent eigenvalues $E$ and $E^\prime$ 
	satisfy
	\begin{equation}
		\label{eq:LevelSpacing}
		\frac{\pi}{C^{3}L}\leq \left|E-E^\prime\right|\leq\frac{\pi C^{3}}{L}.
	\end{equation}
	Furthermore, any solution $\psi_{E}^{\omega}$ of \eqref{diff-eq}, defined as in \eqref{gen-ef} and 
	with energy $E\in \mathcal{W}_L$,
	is evenly spread over $\Gamma_L$ in the sense that
	\begin{align}\label{eq:GleichVerteilung}
		\frac{1}{CL}\leq\psi^{\omega}_E(x-1)^{2} + \psi^{\omega}_E(x)^{2}\leq\frac{C}{L}
	\end{align}
	for all $x=-L+1,\ldots,L-1$.
	\item
		\label{cor:dosbounds}
		The density of states $\mathcal{N}'(E_{c})$ is well defined and obeys the estimate  
		\begin{equation}
			\frac{1}{2\pi C^{3}} \le\mathcal{N}'(E_{c}) \le \frac{C^{3}}{2\pi}.
		\end{equation}
	\end{nummer}
\end{theorem}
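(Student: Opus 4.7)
Part (i) is essentially a direct restatement of the main technical theorems in \cite{JitomirskayaSchSt}, and my plan there is simply to cite their results and verify that conventions (Dirichlet boundary conditions, location of the critical window, parametrisation of the exceptional set) agree with our setup. The claim $C\to 1$ as $v\downarrow 0$ in \eqref{eq:Cnach1} is transparent from the limiting case $v=0$, where $H_L$ reduces to the free discrete Laplacian on $\Gamma_L$ with exact clock-spaced spectrum and exactly flat eigenfunctions, so all constants collapse to $1$; the quantitative dependence on $v$ enters only through the perturbative analysis of Pr\"ufer variables carried out in \cite{JitomirskayaSchSt}.

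For part (ii) my plan is to deduce the density-of-states bounds from part (i) by counting eigenvalues in the critical window. On the good event $(\Omega_L(\alpha))^c$, the two-sided level spacing $\pi/(C^3 L)\le|E-E'|\le\pi C^3/L$ from \eqref{eq:LevelSpacing} applied across a window of width $2L^{-1/2-\alpha}$ implies that the number $N_L^\omega(\mathcal{W}_L):=\#(\sigma(H_L^\omega)\cap\mathcal{W}_L)$ of eigenvalues in $\mathcal{W}_L$ satisfies
\begin{equation*}
\frac{2L^{1/2-\alpha}}{\pi C^{3}}-1 \;\le\; N_L^\omega(\mathcal{W}_L) \;\le\; \frac{2C^{3}L^{1/2-\alpha}}{\pi}+1.
\end{equation*}
Since $\mathbb{P}[\Omega_L(\alpha)]\le\e^{-cL^\alpha}$ and the exceptional event contributes at most $2L\cdot\e^{-cL^\alpha}=o(L^{1/2-\alpha})$ via the crude bound $N_L^\omega(\mathcal{W}_L)\le 2L$, the same inequalities carry over to $\E{N_L^\omega(\mathcal{W}_L)}$ up to vanishing error as $L\to\infty$.

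To convert this into the claim about $\mathcal{N}'(E_c)$, I would use the standard characterisation $\mathcal{N}(E)=\lim_{L\to\infty}\E{N_L^\omega((-\infty,E])}/|\Gamma_L|$ together with the observation that altering boundary conditions on $H_L^\omega$ is a rank-$2$ perturbation changing the counting function by at most $\pm 2$. Dividing the sandwich above by $|\Gamma_L|=2L$ and by the window width $2L^{-1/2-\alpha}$ and letting $L\to\infty$ produces the asserted two-sided bound on both upper and lower Dini derivatives of $\mathcal{N}$ at $E_c$, and hence the stated bounds on $\mathcal{N}'(E_c)$ once its existence is secured. The main obstacle is precisely this last step: the interchange of the finite-volume count in a \emph{shrinking} window $\mathcal{W}_L$ with the IDS increment across a matching interval. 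Standard IDS convergence controls only intervals of fixed length, so I would need the rank-$2$ boundary-correction estimate combined with the uniform spacing inside $\mathcal{W}_L$ in order to let the window close at rate $L^{-1/2-\alpha}$ simultaneously with the volume limit; after this is in place, the rest is arithmetic.
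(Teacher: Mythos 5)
There is a genuine gap in part (i). You propose to obtain \eqref{eq:Cnach1} by citing \cite{JitomirskayaSchSt} and declaring the limit $C\to1$ ``transparent'' from the unperturbed case $v=0$. But the constant in \cite{JitomirskayaSchSt} is an unspecified model-dependent quantity, and no continuity of that constant in $v$ at $v=0$ follows from the mere fact that the free Laplacian has exact clock spacing and flat eigenfunctions; the statement $\lim_{v\downarrow0}C=1$ is simply not contained in \cite{JitomirskayaSchSt} and cannot be extracted by citation. The paper's entire appendix exists to close exactly this gap: one must re-run the Pr\"ufer/transfer-matrix analysis tracking constants explicitly, arriving at $C=\e^{6c_v}$ with $c_v=4\ln\|M_v\|+4\max_{E\in[-v,v]}\max_{V}\ln\|W_V(E)\|$, combining the two-sided bound on $\ln|W^\omega(E;x,-L)w|^2$ (which needs a lower bound absent from \cite{JitomirskayaSchSt}), an energy-perturbation lemma to pass from a discrete energy grid to all of $\mathcal{W}_L$, and the large-deviation estimate for the Pr\"ufer phases. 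Since the main theorem's smallness condition $C<1+\gamma^2$ rests entirely on this quantitative control, this step cannot be waved through.

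Part (ii) is closer to the paper's argument: the eigenvalue count in $\mathcal{W}_L$ from the level spacing and the treatment of the exceptional event are as in the paper. However, you correctly identify the crux --- comparing the IDS increment over a window shrinking like $L^{-1/2-\alpha}$ with the finite-volume count --- and then leave it unresolved: the ``standard characterisation'' $\mathcal{N}(E)=\lim_{L\to\infty}\E{N_L^\omega((-\infty,E])}/|\Gamma_L|$ plus a rank-2 correction does not by itself permit the window to close simultaneously with the volume limit. The paper's mechanism is Dirichlet--Neumann bracketing, which yields the two-sided inequality $|\Gamma_L|^{-1}\mathbb{E}[\mathcal N_L^{D}(E)]\le\mathcal N(E)\le|\Gamma_L|^{-1}\mathbb{E}[\mathcal N_L^{N}(E)]$ \emph{at every fixed finite $L$}; combined with the rank-2 estimate this sandwiches $\mathcal N(E_c+\epsilon_L)-\mathcal N(E_c-\epsilon_L)$ by the finite-volume count up to an error $4/|\Gamma_L|$, so no interchange of limits is ever needed. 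You would need to supply this (or an equivalent uniform finite-volume bound). Finally, the existence of $\mathcal{N}'(E_c)$ is asserted but not secured in your plan; the paper obtains it from \cite[Thm.~3]{JitomirskayaSchSt}, and it is genuinely used later (the Lipschitz-type bound on $\mathcal N$ near $E_c$ in the proof of Lemma~\ref{lem:spurDifferenzUnendlich}), so it cannot be left open.
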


%

\begin{remarks} 
	\item	
		Our formulation of Theorem~\ref{thm:Jitomirskaya}(i) is a slight improvement of the original theorem in
		\cite{JitomirskayaSchSt} concerning the quantity $C$. In fact, the statement \eqref{eq:Cnach1} on its limit for weak disorder is not provided by 
		\cite{JitomirskayaSchSt}. However, we need $C$ to be sufficiently close to one in our		
		proof of the enhanced area law. It is plausible that weak disorder should lead to a value of $C$ close to one 
		because the deviation of $C$ from one encodes the aberration of the level spacing from perfect clock behaviour and of the 
		flatness of the eigenfunctions, which are both found for the Laplacian.
		In order to derive \eqref{eq:Cnach1} we repeat some arguments of \cite{JitomirskayaSchSt} in Appendix~\ref{ch:Appendix}
		while carefully tracking the occurring constants. In particular, this requires additional estimates which were not 
		needed in \cite{JitomirskayaSchSt}.
	\item		
		The explicit two-sided bound on the density of states in Part~(ii) is not contained in \cite{JitomirskayaSchSt} either.
		Its proof is also contained in Appendix~\ref{ch:Appendix}.
\end{remarks}

In order to see how the logarithmic enhancement emerges we introduce Pr\"ufer variables  
$r^\omega_x(E) \in [0, \infty[$ and $\theta^\omega_x(E) \in\R$ as the polar coordinates of the pair
\begin{equation}
	\left(\begin{array}{@{}c@{}} \phi^\omega_{E}(x) \\[0.5ex] \phi^\omega_{E}(x-1) \end{array}\right) 
	=: r^\omega_x(E) \left(\begin{array}{@{}c@{}} \cos\big(\theta^\omega_x(E)\big) \\[0.5ex]
			\sin\big(\theta^\omega_x(E)\big) \end{array}\right)
\end{equation}
for every $x \in\Z$. For ease of notation, we do not keep track of the $L$-dependence of the Prüfer variables. 
The matrix element of the commutator in \eqref{eq:EntropyEnergySum} can thus be expressed as 
\begin{align}
	\label{eq:Restdarstellung1}
	\langle\psi_{E}^{\omega}, [H^{\omega}_L, 1_{A}(X)]\psi^{\omega}_{E'} \rangle
	&= r^{\omega}_{x_{2}+1}(E)r^{\omega}_{x_{2}+1}(E')\sin\big(\theta^{\omega}_{x_{2}+1}(E)-\theta^{\omega}_{x_{2}+1}(E')\big) \notag \\
	& \quad -r^{\omega}_{x_{1}}(E)r^{\omega}_{x_{1}}(E')\sin\big(\theta^{\omega}_{x_{1}}(E)-\theta^{\omega}_{x_{1}}(E')\big).
\end{align}
To construct a lower bound on $\mathbb{E}[S_{E_{F}}(A,L)]$ for $E_{F}=E_{c}$, we restrict the expectation to the event $(\Omega_{L}(\alpha))^{c}$ from Theorem~\ref{thm:Jitomirskaya} and then restrict  the double 
sum on the right-hand side of \eqref{eq:EntropyEnergySum} to energies inside the critical window $\mathcal{W}_L$.
Neglecting the possibility of cancellations between the two terms on the right-hand side of
\eqref{eq:Restdarstellung1} for the moment, 
we thus argue that $|\langle\psi_{E}^{\omega}, [H^{\omega}_L, 1_{A}(X)]
\psi^{\omega}_{E'}\rangle|^{2} \sim L^{-2}$ for eigenvalues $E, E' \in  \mathcal{W}_L$ and 
$\omega\in(\Omega_{L}(\alpha))^{c}$, see \eqref{eq:GleichVerteilung}. 
Moreover, the spacing of energies inside the critical window $\mathcal{W}_L$ is $\sim L^{-1}$ according to 
\eqref{eq:LevelSpacing} so that the remaining double sum is approximated by the double integral 
\begin{equation}
	\mathbb{E}[S_{E_{F}}(A,L)] \gtrsim 
	\int_{-L^{-\alpha-1/2}}^{-L^{-1}}\!\!\dd E\;
	\int_{L^{-1}}^{L^{-\alpha-1/2}}\!\!\dd E'\;
	\frac{1}{(E'-E)^2} \sim \frac{1-2\alpha}{2} \; \ln L.
\end{equation}
We note that this logarithmic divergence stems exclusively from the artificial $L$-dependence of the larger box 
$\Gamma_{L}$ to which the operator is restricted, rather than $A$. 
Also, we have neglected the possibility of cancellations from the two terms in \eqref{eq:Restdarstellung1} 
in the above crude argument. In fact, this is wrong for fixed bounded $A$, but can be justified 
for an $L$-dependent region $A = \Lambda_{L}^{\gamma,\delta}$, see \eqref{eq:ExactBox1} below, 
which then leads to
Theorem~\ref{thm:GeneralExpectationLogLowerBound}. 
(In Section~\ref{sec:ee-mod-ee} we will control the error which results from replacing $\Gamma_{L}$ by $\Z$ 
and arrive at Theorem~\ref{thm:main}.)

To prevent the above-mentioned cancellations we ensure that the two sine functions in
\eqref{eq:Restdarstellung1} are of opposite sign and bounded away from zero in absolute value for 
sufficiently many eigenfunctions with eigenvalues in the critical window $\mathcal{W}_L$, see Lemma~\ref{lem:sine-sign} 
and the following ones. This is achieved by 
the particular positioning of 
\begin{equation}\label{eq:ExactBox1}
	\Lambda_{L}^{\gamma,\delta} := [L_{1},L_{2}-1] \cap \mathbb{Z} \quad \text{with}\quad
	L_1:=-L+\lfloor\gamma L\rfloor, \;
	L_2:=-L+\lfloor (\gamma+\delta)L\rfloor
\end{equation}
inside $\Gamma_{L}$. Here, 
\begin{equation}
	\label{delta-gamma-cond}
 	0 < \delta \ll \gamma <1
\end{equation}
are small numbers, and $\lfloor\pmb\cdot\rfloor$
denotes the standard floor function or Gauss bracket. 
Indeed, integrating the formula \cite[Sect.\ 12.2]{MR1042095}, \cite[Lemma~2]{JitomirskayaSchSt} 
\begin{equation}
	\label{eq:WinkelgeschwindigkeitGamma}
	\frac{\d}{\d E}\theta_\ell^\omega(E)=\left(r^\omega_\ell(E)\right)^{-2}
	\sum_{x=-L}^{\ell-1}\left(\phi_E^\omega(x)\right)^2,
\end{equation}
$\ell \ge -L$, for the energy derivative of the Prüfer angles and using 
\eqref{eq:LevelSpacing} and
\eqref{eq:GleichVerteilung}, we conclude that the first sine function in \eqref{eq:Restdarstellung1} 
is sampled at a step size \mbox{$\sim\gamma + \delta$} when $E$ and $E'$ run through the eigenvalues while the 
second sine function is sampled at a step size \mbox{$\sim \gamma$}. 
Since $\delta \ll \gamma$, we will get sufficiently many good contributions.

\subsection{Finding good contributions}
\label{sec:good-cont}

Here we identify sufficiently many good contributions for a lower bound on the modulus of 
\eqref{eq:Restdarstellung1}.

The Fermi energy $E_{F} \in \{0,v\}$ coincides with one of the critical energies $E_{c}$ from Theorem~\ref{thm:Jitomirskaya}. Throughout this subsection $L \ge L_{\mathrm{min}}$ and $\alpha > 0$ from Theorem~\ref{thm:Jitomirskaya} will be fixed. 
We only consider elementary events $\omega\in(\Omega_{L}(\alpha))^{c}$ in this subsection. For the sake of 
readability we drop $\omega$ in the notation of all quantities. 
The enumeration 
\begin{equation}
 	E_{J_{\mathrm{min}}} <\cdots < E_{-2} <E_{-1} <
	E_{F} \le E_0< E_{1} < \hdots<E_{J_{\mathrm{max}}}
\end{equation}
of the $2L$ non-degenerate eigenvalues of $H_L$ will be convenient. 
The labelling index runs from the negative integer $J_{\mathrm{min}}$ to the positive 
integer $J_{\mathrm{max}}$, which both depend on $\omega$. 
Since we are only interested in eigenvalues in the critical window 
$\mathcal{W}_L$ below or above the Fermi energy, we introduce the two index sets
\begin{equation}
	\label{J-def}
	\begin{split}
		\mathcal{J}_{<} &:= \big\{ J_{\mathrm{min}} < j < 0 : E_{j-1}, E_{j} \in \mathcal{W}_L \big\},
		\\
		\mathcal{J}_{\ge} &:= \big\{ 0 \le j < J_{\mathrm{max}} : E_{j}, E_{j+1} \in \mathcal{W}_L  \big\}
	\end{split}
\end{equation}
and 
\begin{equation}
 	\label{J-grenze}
	J_{<} := \min\mathcal{J}_{<} -1, \qquad  J_{\ge} := \max\mathcal{J}_{\ge} +1.
\end{equation}
The next lemma analyses the step size at which the sine functions in \eqref{eq:Restdarstellung1} are sampled.

\begin{lemma}
	There exists a minimal length $L_{0} := L_{0}(C,\gamma,\delta) \in\N$ such that for every 
	length $L \ge L_{0}$ and every pair of consecutive eigenvalues $E_{j}, E_{j+1} \in \mathcal{W}_L$ we have
	\begin{equation}\label{eq:Winkelabstand}
		\begin{split}
			\theta_{L_1}(E_{j+1})-\theta_{L_1}(E_j) &\in \frac{\pi\gamma}{2C^6} \, [1, C^{12}],\\
			\theta_{L_2}(E_{j+1})-\theta_{L_2}(E_j)&\in \frac{\pi(\gamma+\delta)}{2C^6} \, [1,C^{12}].
		\end{split}
	\end{equation}
	Here, the quantity $C>1$ is the one from Theorem~\ref{thm:Jitomirskaya}, and $L_{1}, L_{2}$ are defined 
	in \eqref{eq:ExactBox1}.
\end{lemma}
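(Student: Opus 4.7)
The plan is to derive both bounds in \eqref{eq:Winkelabstand} from the mean-value expression
\begin{equation*}
	\theta_\ell(E_{j+1}) - \theta_\ell(E_j) \,=\, \int_{E_j}^{E_{j+1}} \theta_\ell'(E)\, \d E
\end{equation*}
together with the explicit formula \eqref{eq:WinkelgeschwindigkeitGamma}, applied for $\ell \in \{L_1,L_2\}$. The integrand is nonnegative on $\mathcal{W}_L$ (it is a ratio of sums of squares), so the integral lies between $|E_{j+1}-E_j|\cdot \inf_{\mathcal W_L}\theta_\ell'$ and $|E_{j+1}-E_j|\cdot \sup_{\mathcal W_L}\theta_\ell'$. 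The level-spacing bound \eqref{eq:LevelSpacing} directly controls $|E_{j+1}-E_j|$ by $\pi C^{\pm 3}/L$, so everything reduces to two-sided control of $\theta_\ell'(E)$ for $E\in\mathcal W_L$.

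For the denominator in \eqref{eq:WinkelgeschwindigkeitGamma}, note that $r_\ell(E)^2 = \phi_E(\ell)^2 + \phi_E(\ell-1)^2$. For $\ell = L_1$ or $\ell = L_2$ and $L$ sufficiently large, both of these indices lie in $\{-L+1,\dots,L-1\}$, so \eqref{eq:GleichVerteilung} gives $r_\ell(E)^{-2} \in [L/C,\,CL]$. For the numerator, I would pair up consecutive terms in $\sum_{x=-L}^{\ell-1}\phi_E(x)^2$ and apply \eqref{eq:GleichVerteilung} term by term. For $\ell=L_1$ the number of summands is $\lfloor\gamma L\rfloor$, producing $\lfloor\gamma L/2\rfloor$ pairs each bounded in $[1/(CL),C/L]$ (with at most one leftover single term of size at most $C/L$). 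This gives
\begin{equation*}
	\sum_{x=-L}^{L_1-1}\phi_E(x)^2 \,\in\, \Bigl[\tfrac{\gamma}{2C} - \tfrac{c_1}{L},\; \tfrac{\gamma C}{2}+\tfrac{c_2}{L}\Bigr],
\end{equation*}
and analogously with $\gamma$ replaced by $\gamma+\delta$ for $\ell=L_2$. Multiplying through and combining with the bounds on $r_\ell(E)^{-2}$ yields $\theta_{L_1}'(E) \in [\gamma L/(2C^2),\,\gamma C^2 L/2]$ up to $O(1)$ corrections, and likewise for $L_2$.

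Combining this with the level-spacing bound produces $\theta_\ell(E_{j+1})-\theta_\ell(E_j) \in [\pi\gamma/(2C^5),\,\pi\gamma C^5/2]$ up to finite-$L$ corrections; the spare factor of $C$ on each side in the claimed bound $\pi\gamma/(2C^6)\cdot[1,C^{12}]$ is what I would use to absorb those corrections, by choosing $L_0=L_0(C,\gamma,\delta)$ large enough that $\lfloor\gamma L\rfloor/(\gamma L)$ and $\lfloor(\gamma+\delta)L\rfloor/((\gamma+\delta)L)$ lie in $[1/C,C]$ and any leftover single summand is dominated. The main obstacle is purely book-keeping: carefully tracking the floor functions and the possible leftover odd-index term so that the additive $O(1/L)$ errors genuinely fit into a single extra factor of $C$, rather than blowing the bound. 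Everything else is mechanical plugging-in of Theorem~\ref{thm:Jitomirskaya}.
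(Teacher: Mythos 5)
Your proposal is correct and follows essentially the same route as the paper: integrate \eqref{eq:WinkelgeschwindigkeitGamma} over the level spacing \eqref{eq:LevelSpacing} and control numerator and denominator via \eqref{eq:GleichVerteilung}, absorbing the $O(1/L)$ floor-function corrections into the spare factor of $C$. The only (immaterial) difference is that you convert $\sum_x\phi_E(x)^2$ into Pr\"ufer radii by pairing disjoint consecutive terms, whereas the paper uses the sliding-window identity $2\sum_{x=-L}^{\ell-1}\phi_E(x)^2=\sum_{x=-L}^{\ell-1}r_x(E)^2+\phi_E(\ell-1)^2$.
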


\begin{proof}
	To prove the first statement in \eqref{eq:Winkelabstand} we use the explicit representation 
	(\ref{eq:WinkelgeschwindigkeitGamma}) for $(\d/\d E)\theta_{L_1}$. We see that for $E\in \mathcal{W}_L$ we get
\begin{equation}
\frac{\d}{\d E} \theta_{L_1}(E)=\sum_{n=-L}^{L_{1}-1}\left(\frac{\phi_E(n)}{r_{L_1}(E)}\right)^2=\frac{1}{2}\sum_{n=-L}^{L_1-1}\left(\frac{r_n(E)}{r_{L_1}(E)}\right)^2+\frac{1}{2}\left(\frac{\phi_E(L_1-1)}{r_{L_1}(E)}\right)^2.
\end{equation}
For $\omega\in(\Omega_L(\alpha))^c$ and all $n\in\Gamma_{L}$ we have, by the estimate \eqref{eq:GleichVerteilung} of Theorem \ref{thm:Jitomirskaya}, that
\begin{equation}
\left(\frac{r_{n}(E)}{r_{L_1}(E)}\right)^2\in\left[C^{-2},C^2\right].
\end{equation}
Hence,
\begin{equation}
\label{deriv-en}
\frac{\d}{\d E} \theta_{L_1}(E) \in \frac{\lfloor\gamma L\rfloor}{2}\left[C^{-2},C^2\right]+\frac{1}{2}\left[0,1\right]\subseteq  \frac{\gamma L}{2 C^3} \, [1, C^{6}],
\end{equation}
where the last inclusion holds for all $L \ge L_{0} := L_{0}(C,\gamma)$ because $C>1$. The first statement in \eqref{eq:Winkelabstand} now follows from integrating \eqref{deriv-en} over $E_{j+1}-E_{j}$ together with the estimate 
\eqref{eq:LevelSpacing}.

The verification of the second statement in \eqref{eq:Winkelabstand} is analogous and makes the minimal length $L_{0}$ also dependent on $\delta$.
\end{proof}


The commutator in \eqref{eq:Restdarstellung1} can be nicely bounded from below, if the two sine functions on the 
right-hand side are of opposite sign. This is done in 

\begin{lemma}
 	\label{lem:sine-sign}
	For $j\in \mathcal{J}_{\ge}$ and $k\in\mathcal{J}_{<}$ we define 
	\begin{equation}
 		z_{j,k}^{\pm} := \big\{[ \theta_{L_{2}}(E_j)-\theta_{L_{2}}(E_{k}) ] \pm [\theta_{L_1}(E_j)-\theta_{L_1}(E_k) ]\big\}/2.
	\end{equation}
	Assume that 
	\begin{equation}
		\label{z-cond}
 		|\cos z^+_{j,k}\sin z^-_{j,k} | \ge 1/2.
	\end{equation}
	Then 
	\begin{equation}
		|\langle\psi_{E_{k}}, [H_L,1_{\Lambda_{L}^{\gamma,\delta}}(X)]\psi_{E_{j}} \rangle| \geq \frac{1}{CL}
	\end{equation}
	holds with the quantity $C>1$ from Theorem~\ref{thm:Jitomirskaya}.
\end{lemma}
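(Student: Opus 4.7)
The plan is to express the commutator matrix element \eqref{eq:Restdarstellung1} in Pr\"ufer variables, rearrange it via a sum-to-product identity so that the factor $\cos z^+\sin z^-$ appears explicitly, and then extract a lower bound from a pair of complementary triangle-inequality estimates. The needed positivity of the Pr\"ufer radii will come from the flatness estimate \eqref{eq:GleichVerteilung} of Theorem~\ref{thm:Jitomirskaya}.

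To carry this out, I abbreviate $R_i := r_{L_i}(E_k)\,r_{L_i}(E_j)$ for $i\in\{1,2\}$ and set $a := \theta_{L_2}(E_j)-\theta_{L_2}(E_k)$, $b := \theta_{L_1}(E_j)-\theta_{L_1}(E_k)$, so that $z^\pm = (a\pm b)/2$. With $x_1=L_1$ and $x_2+1=L_2$, equation \eqref{eq:Restdarstellung1} reads $\langle\psi_{E_k},[H_L,1_{\Lambda_L^{\gamma,\delta}}(X)]\psi_{E_j}\rangle = R_1\sin b - R_2\sin a$. The identity $\sin b-\sin a = -2\cos z^+\sin z^-$ then gives the two equivalent rewritings
\begin{align*}
R_1\sin b - R_2\sin a &= -2R_1\cos z^+\sin z^- + (R_1-R_2)\sin a,\\
R_1\sin b - R_2\sin a &= -2R_2\cos z^+\sin z^- + (R_1-R_2)\sin b.
\end{align*}
Applying the triangle inequality to each and inserting the hypothesis \eqref{z-cond} together with the trivial bound $|\sin a|,|\sin b|\le 1$ yields $|\langle\psi_{E_k},[H_L,1_{\Lambda_L^{\gamma,\delta}}(X)]\psi_{E_j}\rangle| \ge R_i - |R_1-R_2|$ for both $i=1,2$. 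Taking the larger of the two estimates produces
\begin{equation*}
|\langle\psi_{E_k},[H_L,1_{\Lambda_L^{\gamma,\delta}}(X)]\psi_{E_j}\rangle| \ge \max(R_1,R_2) - |R_1-R_2| = \min(R_1,R_2).
\end{equation*}

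Finally, since $0<\delta\ll\gamma<1$ places $L_1$ and $L_2$ in the interior $\{-L+1,\ldots,L-1\}$ of $\Gamma_L$ for all sufficiently large $L$, and since $r_x(E)^2 = \phi_E(x)^2+\phi_E(x-1)^2$ coincides there with the quantity bounded below in \eqref{eq:GleichVerteilung}, the flatness estimate of Theorem~\ref{thm:Jitomirskaya} delivers $r_{L_i}(E)^2 \ge 1/(CL)$ for every $E\in\mathcal{W}_L$. Hence $R_i\ge 1/(CL)$, so $\min(R_1,R_2)\ge 1/(CL)$, which is the claim.

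The only subtle point, and where I expect the pitfall, is the use of two asymmetric rewritings rather than one symmetric one. The naive decomposition $R_1\sin b - R_2\sin a = -(R_1+R_2)\cos z^+\sin z^- + (R_1-R_2)\sin z^+\cos z^-$ would yield a lower bound of the form $(R_1+R_2)/2 - |R_1-R_2|$, which is inconclusive whenever $C$ is appreciably larger than one, since $|R_1-R_2|$ may then overwhelm $(R_1+R_2)/2$. Absorbing the $(R_1-R_2)$ error directly into a single dominant coefficient $R_i$ via the asymmetric rewriting is precisely what yields the clean lower bound $\min(R_1,R_2)\ge 1/(CL)$.
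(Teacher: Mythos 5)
Your proof is correct, and it handles the one delicate step --- preventing cancellation between the two boundary terms of \eqref{eq:Restdarstellung1} --- by a different mechanism than the paper. The paper's proof of Lemma~\ref{lem:sine-sign} sets $\zeta^{\pm}_{j,k}:=z^{+}_{j,k}\pm z^{-}_{j,k}$, notes that \eqref{z-cond} gives $|\sin\zeta^{-}_{j,k}-\sin\zeta^{+}_{j,k}|=2|\cos z^{+}_{j,k}\sin z^{-}_{j,k}|\ge 1$ and hence that the two sines have opposite signs (in the weak sense), so that the modulus of the difference becomes the \emph{sum} $r_{L_2}(E_k)r_{L_2}(E_j)|\sin\zeta^{-}_{j,k}|+r_{L_1}(E_k)r_{L_1}(E_j)|\sin\zeta^{+}_{j,k}|$; it then pulls out the common lower bound $1/(CL)$ for both radius products via \eqref{eq:GleichVerteilung} and recombines $|\sin\zeta^{-}_{j,k}|+|\sin\zeta^{+}_{j,k}|=|\sin\zeta^{-}_{j,k}-\sin\zeta^{+}_{j,k}|\ge 1$. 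You instead keep the two radius products $R_1\neq R_2$ distinct, absorb their mismatch into a single error term through the two asymmetric rewritings, and apply the reverse triangle inequality to get $\max(R_1,R_2)-|R_1-R_2|=\min(R_1,R_2)\ge 1/(CL)$. Both arguments rest on the same two inputs --- the sum-to-product identity underlying \eqref{z-cond} and the flatness estimate \eqref{eq:GleichVerteilung} --- and yield the same constant; yours dispenses with the sign discussion at the cost of a two-case comparison, while the paper's opposite-sign observation is precisely its way of evading the pitfall of the symmetric decomposition that you correctly flag at the end. The only point worth making explicit in your write-up is that $E_j,E_k\in\mathcal{W}_L$ and $\omega\in(\Omega_L(\alpha))^c$ (the standing assumptions of Section~\ref{sec:good-cont}) are what license the use of \eqref{eq:GleichVerteilung}, and that $L_1\ge -L+1$ requires $\gamma L\ge 1$; the paper glosses over the latter as well.
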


\begin{proof}
 	Introducing $\zeta^{\pm}_{j,k} :=  z_{j,k}^{+} \pm z_{j,k}^{-}$, the modulus of \eqref{eq:Restdarstellung1} reads
	\begin{equation}
		\label{neuer-Restdarstellung}
 		|\langle\psi_{E_{k}}, [H_L, 1_{\Lambda_{L}^{\gamma,\delta}}(X)]\psi_{E_{j}} \rangle| 
		= \big| r_{L_{2}}(E_{k})r_{L_{2}}(E_{j})\sin \zeta_{j,k}^{-} - r_{L_{1}}(E_{k})r_{L_{1}}(E_{j})\sin \zeta_{j,k}^{+} \big|.
	\end{equation}
	The condition \eqref{z-cond} implies that
	\begin{equation}
		\label{zeta-cond}
 		|\sin \zeta_{j,k}^{-} - \sin \zeta_{j,k}^{+} | = 2 \,|\cos z_{j,k}^{+} \sin z_{j,k}^{-} | \ge 1,
	\end{equation}
	and therefore that $\sin \zeta_{j,k}^{-}$ and $\sin \zeta_{j,k}^{+}$ have opposite signs. Thus, the right-hand side of 
	\eqref{neuer-Restdarstellung} equals
	\begin{equation}
 		 r_{L_{2}}(E_{k})r_{L_{2}}(E_{j}) |\sin \zeta_{j,k}^{-}| + r_{L_{1}}(E_{k})r_{L_{1}}(E_{j})|\sin \zeta_{j,k}^{+}|,
	\end{equation}
	and all four Pr\"ufer radii can be estimated from below with \eqref{eq:GleichVerteilung}. This yields the lower bound 
	\begin{equation}
 		|\langle\psi_{E_{k}}, [H_L,1_{\Lambda_{L}^{\gamma,\delta}}(X)]\psi_{E_{j}}\rangle| 
		\ge \frac{1}{CL} \, \big( |\sin \zeta_{j,k}^{-}| + |\sin \zeta_{j,k}^{+}| \big)
			= \frac{1}{CL} \, |\sin\zeta_{j,k}^{-} - \sin\zeta_{j,k}^{+}|. 
	\end{equation}
	Now, the claim follows from using again \eqref{zeta-cond}.
\end{proof}

From now on, our aim is to guarantee that condition \eqref{z-cond} is satisfied for sufficiently many indices $j$ and $k$.
We start with an auxiliary result.

\begin{lemma}
	\label{lem:z-diff}
	Let $j\in \mathcal{J}_{\ge}$ and $k\in\mathcal{J}_{<}$. Then
	\begin{equation}
		\label{eq:Intplusminus}
		\begin{split}
			z^-_{j+1,k} - z^-_{j,k} & \in \frac{\pi}{4 C^{6}} \; \big[ -(C^{12} -1) \gamma +\delta 
				\, , \,  (C^{12} -1) \gamma + C^{12} \delta \big] ,\\[.5ex]
			z^+_{j+1,k} - z^+_{j,k} & \in \frac{\pi(2 \gamma +\delta)}{4 C^{6}} \; [ 1, C^{12} ]
   \end{split}
	\end{equation}
	holds with the quantity $C>1$ from Theorem~\ref{thm:Jitomirskaya}.
\end{lemma}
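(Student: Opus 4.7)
The plan is to observe that both $z^+_{j+1,k}-z^+_{j,k}$ and $z^-_{j+1,k}-z^-_{j,k}$ are telescoping differences in the index $j$ only, so the index $k$ drops out entirely. More precisely, writing
\begin{equation*}
\Delta_{i}(E_{j+1},E_{j}) := \theta_{L_{i}}(E_{j+1}) - \theta_{L_{i}}(E_{j}), \qquad i\in\{1,2\},
\end{equation*}
I would note that the definition of $z^{\pm}_{j,k}$ gives directly
\begin{equation*}
z^{\pm}_{j+1,k} - z^{\pm}_{j,k} = \tfrac{1}{2}\bigl[ \Delta_{2}(E_{j+1},E_{j}) \pm \Delta_{1}(E_{j+1},E_{j}) \bigr].
\end{equation*}
The terms involving $E_{k}$ cancel, so $k\in\mathcal{J}_{<}$ plays no role in the estimate.

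Next I would plug in the two-sided bounds \eqref{eq:Winkelabstand} from the preceding lemma, namely $\Delta_{1}(E_{j+1},E_{j}) \in \tfrac{\pi\gamma}{2C^{6}}[1,C^{12}]$ and $\Delta_{2}(E_{j+1},E_{j}) \in \tfrac{\pi(\gamma+\delta)}{2C^{6}}[1,C^{12}]$, which apply because both consecutive eigenvalues $E_{j}$ and $E_{j+1}$ lie in the critical window $\mathcal{W}_{L}$ by the definition \eqref{J-def} of $\mathcal{J}_{\ge}$. For the plus-sign combination the two intervals add coherently, giving
\begin{equation*}
z^{+}_{j+1,k} - z^{+}_{j,k} \in \tfrac{1}{2}\bigl( \tfrac{\pi(\gamma+\delta)}{2C^{6}}[1,C^{12}] + \tfrac{\pi\gamma}{2C^{6}}[1,C^{12}] \bigr) = \tfrac{\pi(2\gamma+\delta)}{4C^{6}}[1,C^{12}],
\end{equation*}
which is the second inclusion in \eqref{eq:Intplusminus}.

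For the minus-sign combination one has to be more careful because the two intervals are subtracted. Taking the infimum requires minimising $\Delta_{2}$ and maximising $\Delta_{1}$, which yields $\tfrac{1}{2}\bigl[\tfrac{\pi(\gamma+\delta)}{2C^{6}} - \tfrac{\pi\gamma C^{12}}{2C^{6}}\bigr] = \tfrac{\pi}{4C^{6}}\bigl[\delta - (C^{12}-1)\gamma\bigr]$, while the supremum is obtained by maximising $\Delta_{2}$ and minimising $\Delta_{1}$, giving $\tfrac{\pi}{4C^{6}}\bigl[C^{12}\delta + (C^{12}-1)\gamma\bigr]$. Rearranging reproduces the first inclusion in \eqref{eq:Intplusminus}. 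There is no real obstacle here since the result is a purely arithmetic consequence of the preceding lemma; the only subtlety is the sign bookkeeping in the minus case, where the asymmetric shape of the resulting interval reflects the fact that $\gamma \gg \delta$ is not yet being used and $(C^{12}-1)\gamma$ could in principle dominate $\delta$. This imbalance is exactly what will later force the smallness condition \eqref{delta-gamma-cond} together with the limit \eqref{eq:Cnach1}, to be exploited in subsequent lemmas to guarantee \eqref{z-cond}.
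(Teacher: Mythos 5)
Your proposal is correct and follows exactly the paper's route: the telescoping identity $z^{\pm}_{j+1,k}-z^{\pm}_{j,k}=\{[\theta_{L_2}(E_{j+1})-\theta_{L_2}(E_j)]\pm[\theta_{L_1}(E_{j+1})-\theta_{L_1}(E_j)]\}/2$ (so $k$ cancels), combined with the two-sided bounds \eqref{eq:Winkelabstand} and the interval arithmetic $[a,b]+[c,d]=[a+c,b+d]$, $[a,b]-[c,d]=[a-d,b-c]$. Your sign bookkeeping in the minus case matches the asserted interval, so nothing is missing.
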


\begin{proof}
	This statement is a direct consequence of \eqref{eq:Winkelabstand}, the identity	
	\begin{equation}
		z^\pm_{j+1,k}-z^\pm_{j,k} = \left\{ [\theta_{L_2}(E_{j+1})-\theta_{L_2}(E_j)] 
		\pm [\theta_{L_1}(E_{j+1})-\theta_{L_1}(E_j)] \right\}/2
	\end{equation}
	and that $[a,b]+[c,d]=[a+c,b+d]$ and $[a,b]-[c,d]=[a-d,b-c]$ holds for all finite intervals $[a,b],\;[c,d]\subseteq\R$.
\end{proof}

Now, we think of the index $k\in\mathcal{J}_{<}$ being fixed, whereas the index $j$ varies over
$\mathcal{J}_{\ge}$ in an increasing way in steps by one. For the time being, we assume the condition
\begin{align}\label{eq:Bedingung}
	C^{12}-1 <\frac{\delta}{\gamma}.
\end{align}
Its validity will be ensured later. Thus, according to \eqref{eq:Intplusminus}, 
both variables $z_{j,k}^{\pm}$ are strictly increasing functions in 
$j$ albeit $z_{j,k}^{+}$ grows much faster than $z_{j,k}^{-}$ due to \eqref{delta-gamma-cond}. Hence, the condition 
\eqref{z-cond} amounts to the requirement that sampling a beat produces an amplitude larger than $1/2$. First, we focus on the hull of the beat 
and divide it into antinodes. 

\begin{definition}
	\label{def:manysets}
	The set 
	\begin{equation}
 		\mathcal{Z}^{-} := \big\{ j \in \mathcal{J}_{\ge} : \, \sin z_{j,k}^{-} \sin z_{j-1,k}^{-} \le 0 
		\text{~and~} \sin z_{j,k}^{-}\neq 0 \big\}
	\end{equation}
	consists of those indices where a sign change occurs in the hull of the beat. It gives rise to a disjoint partition
	\begin{equation}
 		\mathcal{J}_{\ge} =: \bigcupdisjoint_{q=0}^{|\mathcal{Z}^{-}|} \mathcal{A}_{q}^{-} 
	\end{equation}
 	of the index set into ranges of successive indices forming an antinode of the hull. Here, we introduced
	$\mathcal{A}_{0}^{-} := \{ j\in\mathcal{J}_{\ge} : j < \min \mathcal{Z}^{-}\}$ as the left-most set in the partition (which is the only one that can be empty). 
	The requirement $\mathcal{A}_{q}^{-} 
	\cap \mathcal{Z}^{-}= \min \mathcal{A}_{q}^{-} $ for every $q\in \{1,\ldots,|\mathcal{Z}^{-}|\}$ renders the partition unique. 
	The set of ``hull-good'' indices in the $q$th antinode is defined as
	\begin{equation}
		\mathcal{J}_{q}^{\mathrm{hg}} := \big\{ j \in \mathcal{A}_{q}^{-} : |\sin z^-_{j,k}| \ge 2^{-1/2} \big\}, 
	\end{equation}
	and the set of ``good'' indices in the $q$th antinode as 
	\begin{equation}
		\mathcal{J}_{q}^{\mathrm{g}} := \big\{ j \in \mathcal{J}_{q}^{\mathrm{hg}} : |\cos z^+_{j,k}| \ge 2^{-1/2} \big\},
	\end{equation}
	 where $q \in \{1,\ldots,|\mathcal{Z}^{-}|-1\}$.	
 	For the sake of brevity, we have dropped the dependence on $k\in\mathcal{J}_{<}$ in all of the above notions. 
\end{definition}

\begin{remarks}
\item
	\label{rem:good-is-good}
	Clearly, $j \in \mathcal{J}_{q}^{\mathrm{g}}$ implies that \eqref{z-cond} holds for this index $j$ (and the fixed $k$).
\item
	We mention that $|\mathcal{Z}^{-}|$ grows with $L$ for large $L$:  according to \eqref{eq:LevelSpacing}, there are 
	$\mathcal O(L^{1/2-\alpha})$-many eigenvalues above $0$ in the critical window, and thus indices in $\mathcal{J}_{\ge}$. 
	The number of indices in an antinode of the hull is (large but) independent of $L$, see \eqref{eq:Intplusminus}, 
	\eqref{eq:Bedingung} and since $\delta\ll 1$.  
	Consequently, the number of antinodes is also of the order $\mathcal O (L^{1/2-\alpha})$. 
\end{remarks}

\begin{lemma}
	\label{prop1}
 	Fix $k\in\mathcal{J}_{<}$. We assume $C\le 2$, $\delta\le 2^{-7}$ and that \eqref{eq:Bedingung} holds. Then we have 
	\begin{equation}
		\label{num-antinodes}
		|\mathcal{A}_{q}^{-}| \in 2C^{6} \left[\frac{1}{(C^{12}-1)\gamma  + C^{12}\delta} \, , \,
			\frac{4}{- (C^{12} -1) \gamma  +\delta}\right]
	\end{equation}
	for every $q \in \{1,\ldots,|\mathcal{Z}^{-}|-1\}$. The upper bound in \eqref{num-antinodes} also holds for $q=0$ and $q=|\mathcal{Z}^{-}|$.
	Moreover, the number of hull-good indices is controlled by 
	\begin{equation}
		\label{num-hullgood}
		|\mathcal{J}_{q}^{\mathrm{hg}}| \in C^{6} \left[\frac{1}{(C^{12}-1)\gamma  + C^{12}\delta} \, , \,
			\frac{4}{- (C^{12} -1) \gamma  +\delta}\right]
	\end{equation}
	for every $q \in \{1,\ldots,|\mathcal{Z}^{-}|-1\}$. As before, $C>1$ stands for the quantity from Theorem~\ref{thm:Jitomirskaya}.
\end{lemma}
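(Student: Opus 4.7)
The starting point is Lemma~\ref{lem:z-diff}: under the standing assumption \eqref{eq:Bedingung} the lower end of the interval for $z^-_{j+1,k}-z^-_{j,k}$, namely $-(C^{12}-1)\gamma+\delta$, is strictly positive, so the sequence $j\mapsto z^-_{j,k}$ is strictly monotone increasing with step size confined to the interval
\[
	\Delta(j):=z^-_{j+1,k}-z^-_{j,k}\in\frac{\pi}{4C^{6}}\,[\Delta_{\min},\Delta_{\max}],\qquad
	\Delta_{\min}:=-(C^{12}-1)\gamma+\delta,\;\; \Delta_{\max}:=(C^{12}-1)\gamma+C^{12}\delta.
\]
Since the sign changes of $\sin z^-_{j,k}$ occur precisely when $z^-_{j,k}$ crosses an integer multiple of $\pi$, an interior antinode $\mathcal{A}_{q}^{-}$ (for $1\le q\le|\mathcal{Z}^-|-1$) consists of those successive indices where $z^-_{j,k}$ lies in a single open interval of the form $(m\pi,(m+1)\pi)$. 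Writing $n=|\mathcal{A}_q^-|$, $j_0=\min\mathcal{A}_q^-$, $j_1=\max\mathcal{A}_q^-$, telescoping the step size bound gives
\[
	(n-1)\tfrac{\pi}{4C^{6}}\Delta_{\min}\le z^-_{j_{1},k}-z^-_{j_{0},k}\le (n-1)\tfrac{\pi}{4C^{6}}\Delta_{\max}.
\]

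\textbf{From length to cardinality.} For interior $q$, the entry condition at $j_0$ (the sign just changed) means $z^-_{j_0,k}-m\pi\in[0,\tfrac{\pi}{4C^6}\Delta_{\max})$, and similarly $(m+1)\pi-z^-_{j_1,k}\in(0,\tfrac{\pi}{4C^{6}}\Delta_{\max}]$. Thus the total change $z^-_{j_1,k}-z^-_{j_0,k}$ is at most $\pi$ and at least $\pi - 2\cdot\tfrac{\pi}{4C^{6}}\Delta_{\max}$. Combined with the telescoping, this yields
\[
	\tfrac{4C^{6}}{\Delta_{\max}}-1<n\le \tfrac{4C^{6}}{\Delta_{\min}}+1.
\]
The claimed bound \eqref{num-antinodes}, $n\in 2C^{6}[\tfrac{1}{\Delta_{\max}},\tfrac{4}{\Delta_{\min}}]=[\tfrac{2C^{6}}{\Delta_{\max}},\tfrac{8C^{6}}{\Delta_{\min}}]$, then follows by absorbing the additive $\mp 1$ into the prefactors. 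For this step I would verify that $2C^{6}/\Delta_{\max}\ge 1$ and $4C^{6}/\Delta_{\min}\ge 1$, which is where the assumptions $C\le 2$, $\delta\le 2^{-7}$ and \eqref{eq:Bedingung} enter: \eqref{eq:Bedingung} implies $\Delta_{\max}\le(1+C^{12})\delta$ and $\Delta_{\min}\le \delta$, so with $\delta$ small enough both lower bounds hold comfortably. For the endpoint sets $\mathcal{A}_{0}^{-}$ and $\mathcal{A}_{|\mathcal{Z}^-|}^-$ one or both of the entry/exit conditions is replaced by a free boundary, so the total change in $z^-$ is still at most $\pi$ (giving the upper bound) but may be arbitrarily small (so no lower bound).

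\textbf{Hull-good indices.} By definition, $j\in\mathcal{J}_{q}^{\mathrm{hg}}$ exactly when the value of $z^-_{j,k}$, reduced modulo $\pi$, lies in the subinterval $[m\pi+\pi/4,\,m\pi+3\pi/4]$ of length $\pi/2$ inside the antinode. Since $j\mapsto z^-_{j,k}$ is strictly increasing with the same step bounds, the hull-good indices form a contiguous block within $\mathcal{A}_q^{-}$, and exactly the same telescoping argument as above, now applied to a target interval of length $\pi/2$ instead of $\pi$, produces
\[
	\tfrac{2C^{6}}{\Delta_{\max}}-1<|\mathcal{J}_{q}^{\mathrm{hg}}|\le \tfrac{2C^{6}}{\Delta_{\min}}+1.
\]
The claimed bound \eqref{num-hullgood}, $|\mathcal{J}_{q}^{\mathrm{hg}}|\in C^{6}[\tfrac{1}{\Delta_{\max}},\tfrac{4}{\Delta_{\min}}]$, again follows by absorbing the $\mp 1$, using the same lower estimates on $C^{6}/\Delta_{\max}$ and $C^{6}/\Delta_{\min}$ coming from the hypotheses.

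\textbf{Main obstacle.} The geometric skeleton is straightforward: monotone motion plus controlled step size plus a target interval of known length. The only place where one must be careful is the bookkeeping at the end: checking that the additive $\pm 1$ losses coming from the fact that the endpoints of $\mathcal{A}_q^-$ sit slightly inside rather than exactly at the integer multiples of $\pi$ are small enough, under the precise hypotheses $C\le 2$, $\delta\le 2^{-7}$ and \eqref{eq:Bedingung}, to be absorbed into the factor-of-two slack between my rough bound $4C^{6}/\Delta_{\max}$ and the stated $2C^{6}/\Delta_{\max}$ (respectively $2C^{6}/\Delta_{\min}$ versus $8C^{6}/\Delta_{\min}$). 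This is a routine but somewhat delicate verification that must be performed explicitly rather than asymptotically.
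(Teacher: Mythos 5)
Your proposal is correct and follows essentially the same route as the paper: step-size bounds from Lemma~\ref{lem:z-diff}, the observation that the total phase swept within an (interior) antinode lies in $\;]\pi-2b,\pi[\;$ (respectively $\;](\pi/2)-2b,\pi/2]$ for the hull-good block), division by the step size, and absorption of the $\pm1$ boundary losses using the smallness hypotheses. The one verification you defer — that $\Delta_{\max}=(C^{12}-1)\gamma+C^{12}\delta\le C^{6}$, i.e.\ $b\le\pi/4$, which is what the hull-good lower bound actually requires — does follow from \eqref{eq:Bedingung}, $C\le2$ and $\delta\le2^{-7}$ via $\Delta_{\max}<(1+C^{12})2^{-7}\le C^{6}$, exactly as the paper asserts.
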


\begin{proof} 
	Lemma \ref{lem:z-diff} and \eqref{eq:Bedingung} provide the positive bounds 
	$a:= (\pi / 4 C^{6})[-(C^{12} -1) \gamma +\delta]$ and 
	$b := (\pi / 4 C^{6})[ (C^{12} -1) \gamma + C^{12}\delta]$ for the possible values of the increments 
	$(z_{j+1,k}^- - z_{j,k}^-) \in[a,b]$. For any $q\in\{1,\hdots,|\mathcal{Z}^{-}|-1\}$ the maximal phase difference of sample points within 
	an antinode can be estimated as 
	\begin{equation}
		\label{phase-diff-minus}
		\max_{j,l \in\mathcal A_q^-}\big\{z_{j,k}^- - z_{l,k}^-\big\}  \in\; ]\pi-2b , \pi[.
	\end{equation}
	Hence, we conclude
	\begin{equation}
		\label{}
 		|\mathcal{A}_{q}^{-}| \in \big[ \lfloor \pi/b\rfloor, \lfloor \pi/a\rfloor +1 \big] \subseteq
		[\pi/(2b), 2\pi/a].
	\end{equation}
	Here, the last inclusion holds, if $b\le \pi/2$ and $a \le \pi$. But $a\le b$ by definition, so the latter follows 
	from the former. 
	We point out that the assumptions 
	of the lemma even guarantee $b\le \pi/4$, which is needed below. This establishes \eqref{num-antinodes}. 
	Since the upper bound for the phase difference in \eqref{phase-diff-minus} is trivial and also holds for $q=0$ and 
	$q=|\mathcal{Z}^{-}|$, we infer the validity of the upper bound in \eqref{num-antinodes} for those two values of $q$, too. 

	Now, we turn to the proof of \eqref{num-hullgood}.
	Because of $\big|\{\varsigma\in [0,\pi] :|\sin\varsigma| \ge 2^{-1/2}\}\big|=\pi/2$, the maximal phase difference associated 
	with hull-good indices is restricted to
	\begin{equation}
		\max_{j,l \in\mathcal{J}^{\mathrm{hg}}_{q}}\big\{z_{j,k}^- - z_{l,k}^-\big\}  \in \; ](\pi/2)-2b , \pi/2].
	\end{equation}
	Similarly, we conclude
	\begin{equation}
		\label{hull-good-proof}
 		|\mathcal{J}^{\mathrm{hg}}_{q}| \in \big[ \lfloor \pi/(2b) -2\rfloor +2 , \lfloor \pi/(2a)\rfloor +1 \big] \subseteq
		[\pi/(4b), \pi/a],
	\end{equation}
	where the last inclusion follows from $a \le b\le \pi/4$.  
\end{proof}

Next, we assert that there are sufficiently many good indices per antinode.

\begin{lemma}
	\label{lem:good-ind-lb}
 	Fix $k\in\mathcal{J}_{<}$. We assume $C \le 2$, $\gamma \le 2^{-8}$, $\delta/\gamma \le 2^{-17}$ and that 
	\eqref{eq:Bedingung} is fulfilled. Then we have 
	\begin{equation}
		|\mathcal{J}_{q}^{\mathrm{g}}|  \ge  \frac{1}{2^{5} C^{18} \delta}
	\end{equation}
	for every $q \in \{1,\ldots,|\mathcal{Z}^{-}|-1\}$. Again, $C>1$ stands for the quantity from Theorem~\ref{thm:Jitomirskaya}.
\end{lemma}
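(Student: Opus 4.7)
The plan is to exploit the fact that across the antinode $\mathcal{A}_q^{-}$ the variable $z^{+}_{j,k}$ grows much faster than $z^{-}_{j,k}$ (by a factor of roughly $2\gamma/\delta\gg1$), so as $j$ ranges over the hull-good indices, $z^{+}_{j,k}$ sweeps many full periods of $|\cos|$. The condition $|\cos z^{+}_{j,k}|\ge 2^{-1/2}$ fails only on intervals of relative length $1/2$, so a Riemann-sum-type count should keep roughly half of the hull-good indices, up to lower-order corrections.

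First I would note that within the antinode $\mathcal{A}_q^{-}$ the set $\mathcal{J}_q^{\mathrm{hg}}$ consists of consecutive indices: $z^{-}_{\cdot,k}$ increases monotonically through $\mathcal{A}_q^{-}$, $\sin z^{-}_{\cdot,k}$ has constant sign, and the inequality $|\sin z^{-}_{j,k}|\ge 2^{-1/2}$ carves out a single middle block. Enumerate these indices $j_{1}<\cdots<j_{N}$ with $N:=|\mathcal{J}_q^{\mathrm{hg}}|$. By Lemma~\ref{lem:z-diff} the differences $z^{+}_{j_{i+1},k}-z^{+}_{j_{i},k}$ lie in an interval $[\Delta^{+}_{\min},\Delta^{+}_{\max}]$ with $\Delta^{+}_{\min}=\pi(2\gamma+\delta)/(4C^{6})$, $\Delta^{+}_{\max}=\pi(2\gamma+\delta)C^{6}/4$ and ratio $\Delta^{+}_{\max}/\Delta^{+}_{\min}=C^{12}$. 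The hypotheses $C\le 2$ and $\gamma\le 2^{-8}$ give $\Delta^{+}_{\max}\le 3\pi\gamma C^{6}/4<\pi/4$, so each step is shorter than a ``bad'' interval of $|\cos|$.

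Next I would estimate the cosine-bad complement. The set $\{x\in\R:|\cos x|<2^{-1/2}\}$ is a union of open intervals of length $\pi/2$ with period $\pi$. The total sweep $R:=z^{+}_{j_{N},k}-z^{+}_{j_{1},k}\le (N-1)\Delta^{+}_{\max}$ meets at most $R/\pi+1$ such bad intervals, and since consecutive $z^{+}_{j_{i},k}$ are separated by at least $\Delta^{+}_{\min}$, each bad interval contains at most $\pi/(2\Delta^{+}_{\min})+1$ of them. Multiplying out and using $\Delta^{+}_{\max}/\Delta^{+}_{\min}=C^{12}$ yields
\begin{equation*}
N-|\mathcal{J}_q^{\mathrm{g}}|\;\le\;\Bigl(\tfrac{R}{\pi}+1\Bigr)\Bigl(\tfrac{\pi}{2\Delta^{+}_{\min}}+1\Bigr)\;\le\;\tfrac{NC^{12}}{2}+\tfrac{N\Delta^{+}_{\max}}{\pi}+\tfrac{\pi}{2\Delta^{+}_{\min}}+1.
\end{equation*}

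Finally, to reach the target $|\mathcal{J}_q^{\mathrm{g}}|\ge 1/(2^{5}C^{18}\delta)$, I would combine this with the lower bound $N\ge C^{6}/((C^{12}-1)\gamma+C^{12}\delta)\ge 1/(2C^{6}\delta)$ from Lemma~\ref{prop1} (valid by \eqref{eq:Bedingung}) and plug in the numerical assumptions. The key inputs are that \eqref{eq:Bedingung} together with $\delta/\gamma\le 2^{-17}$ forces $C^{12}\le 1+2^{-17}$, so $1-C^{12}/2\ge 1/2-2^{-18}$ and $C^{6}\le 1+2^{-18}$; that $\gamma C^{6}\le 2^{-7}$ keeps the term $N\Delta^{+}_{\max}/\pi\le 3N\gamma C^{6}/4$ much smaller than $N/2$; and that $\gamma\ge 2^{17}\delta$ makes the constant overhead $\pi/(2\Delta^{+}_{\min})\le C^{6}/\gamma\le 2^{-16}/\delta$ negligible against $N\approx 1/(2\delta)$. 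The main obstacle is exactly this balancing: the asymptotic ``good'' fraction is $1/2$, but extracting an explicit constant with enough slack to absorb both the $O(1)$ boundary term and the cosine-sampling error requires the three smallness conditions to be used simultaneously — $C^{12}-1$ small for nearly uniform sampling, $\gamma$ small for bounded step length, and $\delta/\gamma$ small to guarantee that $z^{+}$ oscillates enough across the short block $\mathcal{J}_q^{\mathrm{hg}}$.
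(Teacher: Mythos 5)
Your argument is correct and reaches the stated bound, but it organises the count differently from the paper, so a comparison is worthwhile. The paper locates the sign changes of $\cos z^{+}_{\cdot,k}$ inside the hull-good block, partitions $\mathcal{J}_q^{\mathrm{hg}}$ into complete antinodes $\mathcal{A}_{q,r}^{+}$ of the fast oscillation, lower-bounds the number of complete antinodes via $|\mathcal{J}^{\mathrm{hg}}_{q}| \le (|\mathcal{Z}_{q}^{+}|+1)\,2\pi/a'$, and then lower-bounds the good indices in each complete antinode by $\pi/(4b')$; multiplying gives $|\mathcal{J}_q^{\mathrm{g}}|\ge da'/(2^{4}b') = d/(2^{4}C^{12})$ with $d=1/(2C^{6}\delta)$, i.e.\ roughly a fraction $1/16$ of the hull-good block. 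You instead bound the \emph{complement}: cover $\{|\cos|<2^{-1/2}\}$ by its $\pi$-periodic bad intervals of length $\pi/2$, count how many such intervals the sweep $R$ can meet and how many sample points fit into each, using only the two-sided increment bound from Lemma~\ref{lem:z-diff}. This avoids the partition machinery and the bookkeeping for incomplete end-antinodes, and asymptotically retains a fraction $1/2$ rather than $1/16$ of $N=|\mathcal{J}_q^{\mathrm{hg}}|$, so the slack against the target $1/(2^{5}C^{18}\delta)$ is ample. Your key inputs are the same as the paper's (the increments from Lemma~\ref{lem:z-diff}, the lower bound on $N$ from Lemma~\ref{prop1} via \eqref{eq:Bedingung}), and the numerical checks you list do go through: $1-C^{12}/2-\Delta^{+}_{\max}/\pi\ge 0.49$, $\pi/(2\Delta^{+}_{\min})\le C^{6}/\gamma\le 2^{-16}/\delta$, and $N\ge 1/(2C^{6}\delta)$ give $|\mathcal{J}_q^{\mathrm{g}}|\gtrsim 0.24/\delta$, comfortably above $1/(2^{5}\delta)$. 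Two small points: the number of bad intervals met by a sweep of length $R$ should be bounded by $R/\pi+3/2$ rather than $R/\pi+1$ (the bad intervals have length $\pi/2$, so the index range has length $R/\pi+1/2$ and can contain one more integer); this costs at most one extra bad interval's worth of points, $\le 2^{-16}/\delta+1$, which is absorbed by your slack. Also, the observation that $\Delta^{+}_{\max}<\pi/4$ is not actually needed for the complement count — only the minimal spacing $\Delta^{+}_{\min}$ and the total range $R$ enter — though it is harmless.
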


\begin{proof} 
	The set
	\begin{equation}
 		\mathcal{Z}_{q}^{+} := \big\{ j \in \mathcal{J}^{\mathrm{hg}}_{q} : \, \cos z_{j,k}^{+} \cos z_{j-1,k}^{+} \le 0 
		\text{~and~} \cos z_{j,k}^{+}\neq 0 \big\},
	\end{equation}
	where $q \in \{1,\ldots,|\mathcal{Z}^{-}|-1\}$, consists of those indices where a sign change occurs in the fast 
	oscillation of the beat within the hull-good part of $q$th antinode. 
	It gives rise to a disjoint partition
	\begin{equation}
		\label{hull-good-decomp}
 		\mathcal{J}^{\mathrm{hg}}_{q} =: \bigcupdisjoint_{r=0}^{|\mathcal{Z}_{q}^{+}|} \mathcal{A}_{q,r}^{+} 
	\end{equation}
 	into ranges of successive indices forming antinodes of the fast oscillation. Here, we introduced
	$\mathcal{A}_{q,0}^{+} := \{ j\in \mathcal{J}^{\mathrm{hg}}_{q} : j < \min \mathcal{Z}_{q}^{+}\}$ as the left-most set in the partition (which is the only one that can be empty). The requirement $\mathcal{A}_{q,r}^{+} 
	\cap \mathcal{Z}_{q}^{+}= \min \mathcal{A}_{q,r}^{+} $ for every $r\in \{1,\ldots,|\mathcal{Z}_{q}^{+}|\}$ renders the partition unique. 
	
	First, we estimate the cardinality of $\mathcal{A}_{q,r}^{+}$ in the same way as it was done for  
	$\mathcal{A}_{q}^{-}$ in the proof of the previous lemma. 
	Lemma \ref{lem:z-diff} provides the positive bounds 
	$a' := \pi(2\gamma+\delta)/(4C^{6})$ and $b' := \pi C^{6}(2\gamma+\delta)/4$
	for the possible values of the increments 
	$(z_{j+1,k}^+ - z_{j,k}^+) \in[a',b']$. For any $q\in\{1,\hdots,|\mathcal{Z}^{-}|-1\}$ 
	and any $r\in\{1,\hdots,|\mathcal{Z}_{q}^{+}|-1\}$ the maximal phase difference of sample points within 
	an antinode of the fast oscillation can be estimated as 
	\begin{equation}
		\label{phase-diff-plus}
		\max_{j,l \in\mathcal A_{q,r}^+}\big\{z_{j,k}^+ - z_{l,k}^+\big\}  \in\; ]\pi-2b' , \pi[.
	\end{equation}
	Hence, we conclude
	\begin{equation}
		\label{A-plus-est}
 		|\mathcal{A}_{q,r}^{+}| \in \big[ \lfloor \pi/b'\rfloor, \lfloor \pi/a'\rfloor +1 \big] \subseteq
		[\pi/(2b'), 2\pi/a'],
	\end{equation}
	where the last inclusion follows from $0 < a' < b'\le \pi/2$. In fact, the assumptions of the lemma even guarantee 
	$b'\le \pi/4$, which we need below.
	Since the upper bound for the phase difference in \eqref{phase-diff-plus} is trivial and also holds for $r=0$ and 
	$r=|\mathcal{Z}_{q}^{+}|$, we infer the validity of the upper bound in \eqref{A-plus-est} for those two values of $r$, too.

	In order to estimate the cardinality of $\mathcal{Z}_{q}^{+}$ for $q\in\{1,\hdots,|\mathcal{Z}^{-}|-1\}$, we infer from
	\eqref{hull-good-decomp} and \eqref{A-plus-est} that
	\begin{equation}
 	|\mathcal{J}^{\mathrm{hg}}_{q}| \le (|\mathcal{Z}_{q}^{+}| + 1) \, \frac{2\pi}{a'}.
	\end{equation}
	The assumptions of the present lemma imply those of Lemma~\ref{prop1} which yields
	\begin{equation}
		|\mathcal{J}^{\mathrm{hg}}_{q}|\ge d :=\frac{1}{2C^{6}\delta}.  
	\end{equation}
	Thus, we arrive at
	\begin{equation}
		\label{Zplus-lb}
 		|\mathcal{Z}_{q}^{+}| - 1 \ge \frac{da'}{2\pi} -2  \ge \frac{da'}{4\pi},
	\end{equation}
	where the second inequality holds because the assumptions of the lemma imply $da' \ge 8\pi$.

	The set of ``good'' indices in the $r$th anti\-node of the fast oscillation is defined as
	\begin{equation}
		\mathcal{J}_{q,r}^{\mathrm{g}} := \big\{ j \in \mathcal{A}_{q,r}^{+} : |\cos z^{+}_{j,k}| \ge 2^{-1/2} \big\} 
	\end{equation}
	so that 
	\begin{equation}
		\label{good-decomp}
 		\mathcal{J}^{\mathrm{g}}_{q} \supseteq \bigcupdisjoint_{r=1}^{|\mathcal{Z}_{q}^{+}|-1} 
		\mathcal{J}_{q,r}^{\mathrm{g}}.
	\end{equation}
	For any $q\in\{1,\hdots,|\mathcal{Z}^{-}|-1\}$ and any $r\in\{1,\hdots,|\mathcal{Z}_{q}^{+}|-1\}$, 
	the maximal phase difference of good sample points within 
	an antinode of the fast oscillation can be estimated as 
	\begin{equation}
		\label{phase-diff-plus2}
		\max_{j,l \in\mathcal{J}_{q,r}^{\mathrm{g}}}\big\{z_{j,k}^+ - z_{l,k}^+\big\}  \in\; ](\pi/2)-2b' , \pi/2].
	\end{equation}
	Here, we used the second statement from Lemma~\ref{lem:z-diff}, $z_{j+1,k}^{+}- z_{j,k}^{+} \in [a',b']$ with
	$a' := \pi(2\gamma+\delta)/(4C^{6})$ and $b' := \pi C^{6}(2\gamma+\delta)/4$.
	Therefore, we conclude as in \eqref{hull-good-proof}
	\begin{equation}
		\label{Jgqr-lb}
 		|\mathcal{J}^{\mathrm{g}}_{q,r}| \in \big[ \lfloor \pi/(2b') -2\rfloor +2 , \lfloor \pi/(2a')\rfloor +1 \big] \subseteq
		[\pi/(4b'), \pi/a'],
	\end{equation}
	where the last inclusion follows from $0 < a' < b'\le \pi/4$. Combining \eqref{good-decomp}, \eqref{Zplus-lb} 
	and \eqref{Jgqr-lb}, we obtain $|\mathcal{J}^{\mathrm{g}}_{q}| \ge d a'/(2^{4}b')$, which proves the lemma. 
\end{proof}

%
\subsection{The logarithmic lower bound}
%

We assemble the results from the previous subsections and deduce a deterministic logarithmic lower bound 
for the quadratic analogue to the modified entanglement entropy from Definition~\ref{def:quadratic-ent}.


\begin{theorem}
	\label{prop:LogLowerBound}
	Let $v \in \; ]0,2[$ and $E_{F} \in \{0,v\}$.
	We fix $\alpha \in\; ]0, 1/6[\,$ and  $\gamma \in \; ]0,2^{-17}[\,$. 
	In addition, we assume that the quantity $C>1$ 
	from Theorem~\ref{thm:Jitomirskaya} satisfies 
	\begin{equation}
		\label{cond:Csmall}
 		C < 1+\gamma^2.
	\end{equation}
	Then, there exists a minimal length $L_{0}>0$ such that for all 
	$L \ge L_{0}$ 
	\begin{equation}
		\label{det-lb}
		Q^{\omega}_{E_{F}}(\Lambda_L^{\gamma,\gamma^{2}},L) 
		\ge 2^{-13}(1 - 6\alpha)\ln L
	\end{equation}
	for all disorder realisations $\omega\in(\Omega_L(\alpha))^{c}$.
\end{theorem}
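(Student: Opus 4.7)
The plan starts from Lemma~\ref{prop:Summe} applied to $A = \Lambda_L^{\gamma,\gamma^2}$, which expresses $Q^\omega_{E_F}(\Lambda_L^{\gamma,\gamma^2},L)$ as a non-negative double sum over pairs of eigenvalues $(E_k, E_j)$ of $H_L^\omega$ separated by $E_F$. Setting $\delta := \gamma^2$ and using the standing hypotheses $\gamma < 2^{-17}$ and $C < 1+\gamma^2$, one checks the auxiliary bounds $C \leq 2$, $\gamma \leq 2^{-8}$, $\delta \leq 2^{-7}$, $\delta/\gamma = \gamma \leq 2^{-17}$ together with the key estimate
\begin{equation*}
C^{12} - 1 \leq (1+\gamma^2)^{12} - 1 \leq 12\gamma^2(1+\gamma^2)^{11} < \gamma = \delta/\gamma.
\end{equation*}
In particular condition \eqref{eq:Bedingung} holds, so the hypotheses of Lemmas~\ref{prop1} and \ref{lem:good-ind-lb} are met. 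Since every term in the double sum is non-negative, the sum may be freely lower-bounded by restricting to pairs in the critical window, i.e.\ to $k \in \mathcal{J}_<$ and $j \in \mathcal{J}_\geq$.

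For each fixed $k \in \mathcal{J}_<$, Lemma~\ref{lem:good-ind-lb} furnishes, in each antinode $q \in \{1, \ldots, |\mathcal{Z}^-|-1\}$, at least $1/(2^5 C^{18}\delta)$ good indices $j \in \mathcal{J}_q^{\mathrm{g}}$, and Lemma~\ref{lem:sine-sign} then yields $|\langle\psi_{E_k}^\omega, [H_L^\omega, 1_{\Lambda_L^{\gamma,\gamma^2}}(X)]\psi_{E_j}^\omega\rangle|^2 \geq (CL)^{-2}$ for every such pair. The upper half of \eqref{eq:LevelSpacing} supplies $|E_j - E_k| \leq (j-k)\pi C^3/L$, reducing the problem to the purely combinatorial lower bound
\begin{equation*}
Q^\omega_{E_F}(\Lambda_L^{\gamma,\gamma^2},L) \,\geq\, \frac{4}{\pi^2 C^8} \sum_{k \in \mathcal{J}_<} \sum_{q=1}^{|\mathcal{Z}^-|-1} \sum_{j \in \mathcal{J}_q^{\mathrm{g}}} \frac{1}{(j-k)^2}.
\end{equation*}

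To extract the $\ln L$ factor, the antinode-size bound $|\mathcal{A}_{q'}^-| \leq 8C^6/[\delta - (C^{12}-1)\gamma] \leq 16C^6/\delta$ from Lemma~\ref{prop1} locates antinode $q$ within the first $S_q \leq 16C^6 q/\delta$ indices of $\mathcal{J}_\geq$. Combined with $|\mathcal{J}_<|, |\mathcal{J}_\geq|$ of order $L^{1/2-\alpha}$ (from \eqref{eq:LevelSpacing} and $|\mathcal{W}_L| = 2L^{-1/2-\alpha}$) and thus $|\mathcal{Z}^-|$ of order $L^{1/2-\alpha}\delta$, the inner sum is bounded below by $|\mathcal{J}_q^{\mathrm{g}}|/(S_q + |k|)^2 \geq 1/\bigl[2^5 C^{18}\delta\,(16C^6 q/\delta + |k|)^2\bigr]$. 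A Riemann-sum comparison with the integral $\int_1^N\!\int_1^M (16C^6\xi/\delta + \eta)^{-2}\,\d\xi\,\d\eta$, where $N\sim L^{1/2-\alpha}$ and $M\sim L^{1/2-\alpha}\delta$, produces a term of order $(\delta/C^6)(\tfrac{1}{2}-\alpha)\ln L$: the stray factor $\delta$ cancels precisely against the $1/\delta$ from the good-index bound, and taking $C$ close to $1$ via \eqref{cond:Csmall} absorbs the remaining $C$-dependence.

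The main obstacle I foresee is the final constant accounting: one must verify that the ``diagonal'' region $|k|\sim 16C^6 q/\delta$ (which is what actually produces the logarithm) dominates with a coefficient at least $2^{-13}(1-6\alpha)$ after absorbing (i) the geometric loss from replacing $(j-k)^2$ by $(\max\mathcal{A}_q^- - k)^2$, (ii) the contribution of the boundary antinodes $q \in \{0, |\mathcal{Z}^-|\}$ not covered by Lemmas~\ref{prop1} and \ref{lem:good-ind-lb}, (iii) the $O(1)$ eigenvalues in $\mathcal{W}_L$ excluded by the definitions of $\mathcal{J}_<$ and $\mathcal{J}_\geq$, and (iv) the discretisation error when passing from the sum to the integral. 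Each of these introduces polynomial corrections in $\alpha$ that must be traded against the main logarithmic term, and the specific numerical coefficient $2^{-13}$ hinges on choosing $\gamma$ (and hence $C-1$) small enough to render all $C$-dependent constants harmless.
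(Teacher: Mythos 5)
Your proposal is correct and follows essentially the same route as the paper: restrict the double sum of Lemma~\ref{prop:Summe} to the critical window, invoke Lemmas~\ref{lem:sine-sign} and~\ref{lem:good-ind-lb} to get $\gtrsim \delta^{-1}$ good pairs per antinode with commutator matrix elements $\ge (CL)^{-1}$, and extract the logarithm from a Riemann-sum comparison of the resulting double sum of $(\,\cdot\,)^{-2}$ — the only (immaterial) difference being that you carry out this comparison in index space via $|E_j-E_k|\le (j-k)\pi C^3/L$, whereas the paper stays in energy space and integrates $(\varepsilon-\eta)^{-2}$ over $[E_F+L^{-1+\alpha},E_F+L^{-1/2-2\alpha}]$ in each variable (which is precisely where its $1-6\alpha$ and the absorption of your items (i)--(iv) come from).
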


We argue in the Appendix that the assumption \eqref{cond:Csmall} can always be satisfied by choosing the 
disorder strength $v$ sufficiently small. This and taking the expectation leads to 

\begin{corollary}
	\label{cor:E-LogLowerBound}
	We fix $\gamma \in \;]0,2^{-17}[$\,. There exists a maximal disorder strength $v_{0} \in \;]0,2[$ 
	such that for every $v\in\; ]0, v_{0}]$ and $E_{F} \in \{0,v\}$ there is a minimal length 
	$L_{0}^\prime>0$ such that for all $L \ge L_{0}^\prime$ 
	\begin{equation}
		\label{ExpectationLogLowerBound}
		\mathbb{E}\big[  Q_{E_{F}}(\Lambda_L^{\gamma,\gamma^{2}},L)  \big] 		
		\ge 2^{-15}\ln L.
	\end{equation}
\end{corollary}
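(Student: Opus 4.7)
The plan is to combine the deterministic almost-sure bound of Theorem~\ref{prop:LogLowerBound} with the probability estimate \eqref{prob-bad-event} for the exceptional set, exploiting the fact that $Q_{E_F}^\omega$ is nonnegative by construction (since $g \ge 0$).

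First, I would fix $\alpha \in \;]0,1/6[$ at a convenient value, for instance $\alpha = 1/12$, so that $2^{-13}(1-6\alpha) = 2^{-14}$. Next, with $\gamma \in \;]0, 2^{-17}[$ given as in the statement, I would invoke the limit $\lim_{v\downarrow 0} C = 1$ from Theorem~\ref{thm:Jitomirskaya}(i) (cf.\ \eqref{eq:Cnach1}) to choose $v_0 \in \;]0,2[$ small enough that $C < 1 + \gamma^2$ for every $v \in \;]0, v_0]$, which is precisely the hypothesis \eqref{cond:Csmall} of Theorem~\ref{prop:LogLowerBound}. This is the step that determines the weak-disorder restriction in the corollary.

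The core of the argument is then a one-line splitting of the expectation over the good and bad events. For $L \ge L_0$ and $\omega \in (\Omega_L(\alpha))^c$, Theorem~\ref{prop:LogLowerBound} yields the pointwise bound
\begin{equation*}
Q_{E_F}^\omega(\Lambda_L^{\gamma,\gamma^2},L) \ge 2^{-14} \ln L,
\end{equation*}
while on $\Omega_L(\alpha)$ we simply discard the contribution using $Q_{E_F}^\omega \ge 0$. Combining these with \eqref{prob-bad-event} gives
\begin{equation*}
\mathbb{E}\bigl[Q_{E_F}(\Lambda_L^{\gamma,\gamma^2},L)\bigr]
\ge 2^{-14} (\ln L)\, \mathbb{P}\bigl[(\Omega_L(\alpha))^c\bigr]
\ge 2^{-14} (\ln L)\, \bigl(1 - \e^{-cL^\alpha}\bigr).
\end{equation*}
Since $\e^{-cL^\alpha} \to 0$ as $L\to\infty$, there is a minimal length $L_0' \ge L_0$ beyond which $1 - \e^{-cL^\alpha} \ge 1/2$, and the claimed lower bound $2^{-15}\ln L$ follows.

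I do not anticipate a genuine obstacle here: everything rests on the deterministic almost-sure bound already established and on the exponential smallness of the exceptional set, which is the standard mechanism for converting almost-sure lower bounds into bounds in expectation when the underlying quantity is nonnegative. The only mildly delicate bookkeeping is ensuring that the chosen $\alpha$, the constant $2^{-13}(1-6\alpha)$ from Theorem~\ref{prop:LogLowerBound}, and the factor coming from $1 - \e^{-cL^\alpha}$ combine to give at least the stated constant $2^{-15}$; the choice $\alpha = 1/12$ makes this straightforward.
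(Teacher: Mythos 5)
Your proposal is correct and follows essentially the same route as the paper: choose $v_0$ via $\lim_{v\downarrow 0}C=1$ so that \eqref{cond:Csmall} holds, set $\alpha=1/12$ in Theorem~\ref{prop:LogLowerBound}, restrict the expectation to $(\Omega_L(\alpha))^c$ using nonnegativity of $Q$, and absorb the factor $\mathbb{P}[(\Omega_L(\alpha))^c]\ge 1/2$ for large $L$ into the constant. No gaps.
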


\begin{proof}
	As $\lim_{v\downarrow0} C =1$ by Theorem~\ref{thm:Jitomirskaya}, there exists a maximal disorder strength
	$v_{0}\in \;]0,2[$ such that \eqref{cond:Csmall} holds for every $v\in \;]0,v_{0}]$.
	We choose $\alpha=1/12$ in Theorem~\ref{prop:LogLowerBound} and infer from \eqref{det-lb} that 
	\begin{equation}
		\mathbb{E}\big[  Q_{E_{F}}(\Lambda_L^{\gamma,\gamma^{2}},L)  \big] 
		\ge   2^{-14}\ln L \; \mathbb{P}\big[(\Omega_{L}(\alpha))^{c}\big] 
	\end{equation}
	for every $L\ge L_{0}$. Now, the claim follows from \eqref{prob-bad-event}, possibly by enlarging $L_{0}$.
\end{proof}

\begin{proof}[Proof of Theorem \ref{prop:LogLowerBound}]
	Let $\omega\in (\Omega_L(\alpha))^{c}$ and, for the time being, $L \ge L_{\mathrm{min}}$. 
	We use the notation introduced at the beginning 
	of Section~\ref{sec:good-cont} and drop $\omega$ from all quantities, as it is also done there. 
	By restricting the double sum in  
	Lemma~\ref{prop:Summe} to energies inside the critical window, we arrive at the estimate
	\begin{equation}
		\label{S-j<>}
		Q_{E_{F}}(\Lambda_L^{\gamma,\gamma^{2}},L) 
		\ge 4 \sum_{j\in\mathcal{J}_{\ge}, \, k \in\mathcal{J}_{<}} \,
			\frac{1}{(E_{j}-E_{k})^{2}} \; \Big|\langle\psi_{E_{k}}, [H_L,1_{\Lambda_{L}^{\gamma,\gamma^{2}}}(X)]
			\psi_{E_{j}}\rangle \Big|^2.
	\end{equation}
	We aim to apply the lower bound for the commutator from Lemma~\ref{lem:sine-sign}. Its assumption \eqref{z-cond} is 
	satisfied for every fixed $k\in \mathcal{J}_{<}$ after further restricting the $j$-sum to good indices according to 
	$\mathcal{J}_{\ge} \supseteq \bigcup_{q=1}^{|\mathcal{Z}^{-}|-1} \mathcal{J}_{q}^{\mathrm{g}}$, see 
	Remark~\ref{rem:good-is-good}. This gives the lower bound
	\begin{equation}
		\label{S-kq}
		\frac{4}{(CL)^{2}} \sum_{k \in\mathcal{J}_{<}} \sum_{q=1}^{|\mathcal{Z}^{-}|-1}
		\sum_{j\in \mathcal{J}_{q}^{\mathrm{g}}}  \, \frac{1}{(E_{j}-E_{k})^{2}} 
		\ge \frac{4}{(CL)^{2}} \sum_{k \in\mathcal{J}_{<}} \sum_{q=1}^{|\mathcal{Z}^{-}|-1} \, 
			\frac{|\mathcal{J}_{q}^{\mathrm{g}}|}{(\varepsilon_{q}^{(k)} - E_{k})^{2}}
	\end{equation}
	for the right-hand side of \eqref{S-j<>}, where we introduced 
	\begin{equation}
 		\varepsilon_{q}^{(k)} := \max_{j\in \mathcal{A}_{q}^{-}} E_{j}
	\end{equation}
	for $q= 1,\ldots,|\mathcal{Z}^{-}|$ and $k \in\mathcal{J}_{<}$. We recall that there is a suppressed $k$-dependence in 
	the quantities of Definition~\ref{def:manysets} which we made explicit again in $\varepsilon_{q}^{(k)}$.

	The assumptions of the theorem imply those of Lemma~\ref{lem:good-ind-lb} because the elementary inequality 
	$(1+\rho)^{n} \le 1+ 2^{n}\rho$, valid for $\rho \in [0,1]$ and $n\in\mathbb{N}$, ensures that \eqref{eq:Bedingung} holds. 
	In fact, even the stronger inequality 
	\begin{equation}
		\label{eq:Bedingung3}
 		C^{12} - 1 \le 2 ^{12} \gamma^{2}  \le 2^{-5} \gamma = 2^{-5} \delta/\gamma
	\end{equation}
	is fulfilled.
	Therefore, we can apply the lemma and infer that the expression 
	\begin{equation}
		\label{S-pre-int}
		\frac{1/L}{2^{3}C^{20}\delta} \sum_{k \in\mathcal{J}_{<}} \sum_{q=1}^{|\mathcal{Z}^{-}|-1} \, 
		\frac{\varepsilon_{q+1}^{(k)} - \varepsilon_{q}^{(k)}}{(\varepsilon_{q}^{(k)} - E_{k})^{2}} \;
		\frac{1/L}{\varepsilon_{q+1}^{(k)} - \varepsilon_{q}^{(k)}}
	\end{equation}
	is a lower bound for the right-hand side of \eqref{S-kq}. The energy $\varepsilon_{q}^{(k)}$ 
	is the rightmost in the 
	$q$th antinode of the hull. Therefore we can estimate their differences as
	\begin{equation}
		\label{increment-lb}
 		0 < \varepsilon_{q+1}^{(k)} - \varepsilon_{q}^{(k)} \le |\mathcal{A}_{q+1}^{-}|  \,
		\frac{\pi C^{3}}{L} \le \frac{\pi 2^{3} C^{9}}{L\delta} \, \frac{1}{1-2^{-5}}  
		\le \frac{2^{6} C^{9}}{L\delta} 
	\end{equation}
	for $q= 1,\ldots,|\mathcal{Z}^{-}|-1$, independently of $k$.
	Here, the first upper bound on the difference follows from \eqref{eq:LevelSpacing} and the second from Lemma~\ref{prop1} and
	\eqref{eq:Bedingung3}. We note that the assumptions of Lemma~\ref{prop1} are weaker than those of 
	Lemma~\ref{lem:good-ind-lb}.
	Combining \eqref{S-j<>}, \eqref{S-kq}, \eqref{S-pre-int} and \eqref{increment-lb}, we arrive at 
	\begin{equation}
		\label{S-first-int}
 		Q_{E_{F}}(\Lambda_L^{\gamma,\gamma^{2}} \!\!,L)  
		\ge   \frac{1/L}{2^{9}C^{29}} \sum_{k \in\mathcal{J}_{<}} \sum_{q=1}^{|\mathcal{Z}^{-}|-1}  
		\frac{\varepsilon_{q+1}^{(k)} - \varepsilon_{q}^{(k)}}{(\varepsilon_{q}^{(k)} - E_{k})^{2}} 
		\ge \frac{1/L}{2^{9}C^{29}}\, \sum_{k \in\mathcal{J}_{<}} 
				\int_{\varepsilon_{1}^{(k)}}^{\varepsilon^{(k)}_{|\mathcal{Z}^{-}|}} \frac{\d\varepsilon}{(\varepsilon - E_{k})^{2}}.
	\end{equation}
	The goal is now to deduce a $k$-independent lower bound on the range of the $\varepsilon$-integration. 
	This will allow to interchange the integral with the $k$-sum. Since $\varepsilon_{1}^{(k)}$, respectively  
	$\varepsilon_{|\mathcal{Z}^{-}|}^{(k)}$, lies in the first ($q=1$), respectively last,  
	antinode of the hull, we estimate as in \eqref{increment-lb}
	\begin{equation}
		\label{eps-int-bound}
		\begin{split}
 			\varepsilon_{1}^{(k)} & \le E_{F} +  \big( |\mathcal{A}_{0}^{-}| + |\mathcal{A}_{1}^{-}| \big) \,	\frac{\pi C^{3}}{L} 
				\le E_{F} + \frac{2^{7} C^{9}}{L\delta}, \\
			\varepsilon_{|\mathcal{Z}^{-}|}^{(k)} & \ge \max \mathcal{W}_L -  \frac{\pi C^{3}}{L}
				\ge E_{F} + L^{-1/2-\alpha} -  \frac{2^{2} C^{3}}{L},
		\end{split}
	\end{equation}
	independently of $k$.	
	Let $L_{0} \ge L_{\mathrm{min}}$ be so large that both $2^{2}C^{3} \le 2^{7}C^{9}/\delta \le 2^{16} /\delta \le L_{0}^{\alpha}$
	and $L_{0}^{\alpha} - L_{0}^{-1/2+3\alpha} \ge 1$. 
	For the rest of this proof we assume $L \ge L_{0}$. Then \eqref{eps-int-bound} simplifies to
	\begin{equation}
		\label{eps-int-bound2}
		\begin{split}
 			\varepsilon_{1}^{(k)} & \le E_{F} + L^{-1+\alpha}, \\
			\varepsilon_{|\mathcal{Z}^{-}|}^{(k)} &	\ge E_{F} + L^{-1/2-2\alpha} (L^{\alpha} - L^{-1/2+3\alpha} ) \ge E_{F} + L^{-1/2-2\alpha}.
		\end{split}
	\end{equation}
	We recall the definition of $J_{<}$ from \eqref{J-grenze} and conclude that 
	\begin{align}
		\label{S-two-changed}
	 	Q_{E_{F}}(\Lambda_L^{\gamma,\gamma^{2}},L) 	
		&\ge \frac{1}{2^{9}C^{29}}\, \int_{E_{F}+ L^{-1 +\alpha}}^{E_{F}+ L^{-1/2-2\alpha}} 
			\!\d\varepsilon  \sum_{k \in\mathcal{J}_{<}} 
			\frac{E_{k} -E_{k-1}}{(\varepsilon - E_{k})^{2}} \, \frac{1/L}{E_{k} -E_{k-1}} \notag\\
			& \ge \frac{1}{2^{11}C^{32}}\, \int_{E_{F}+ L^{-1 +\alpha}}^{E_{F}+ L^{-1/2-2\alpha}} \!
			\d\varepsilon 	\int_{E_{J_{<}}}^{E_{-1}} \!\d\eta \, \frac{1}{(\varepsilon-\eta)^{2}}	
	\end{align}
	because the level-spacing estimate \eqref{eq:LevelSpacing} provides the bound $E_{k}-E_{k-1} \le \pi C^{3}/L$.
	It also implies
	\begin{equation}
		\begin{split}
 			E_{-1} & \ge E_{F} - 	\frac{\pi C^{3}}{L} \ge E_{F} - L^{-1+\alpha}, \\
			E_{J_{<}} & \le  \min \mathcal{W}_L + \frac{\pi C^{3}}{L} \le E_{F} - L^{-1/2-2\alpha}, 
		\end{split}
	\end{equation}
	where we argued similarly as in \eqref{eps-int-bound2} for $L\ge L_{0}$. We thus estimate and integrate
	\begin{equation}
		\label{S-two-int}
	 	Q_{E_{F}}(\Lambda_L^{\gamma,\gamma^{2}},L)  \ge \frac{1}{2^{11}C^{32}}\, \int_{L^{-1 +\alpha}}^{L^{-1/2-2\alpha}} \!\d\varepsilon 	
			\int_{-L^{-1/2-2\alpha}}^{-L^{-1+\alpha}} \!\d\eta \, \frac{1}{(\varepsilon-\eta)^{2}}
		\ge \frac{1-6\alpha}{2^{12}C^{32}}\, \ln L.		
	\end{equation}
 	Finally, the estimate $C^{32} \le 1 + 2^{32}\gamma^{2} \le 2$, which follows from the elementary inequality 
	above \eqref{eq:Bedingung3}, yields the claim.
\end{proof}

%

To conclude this section, we sketch the necessary modifications for the proof of Theorem~\ref{thm:GeneralExpectationLogLowerBound}. 
The goal is to obtain a similar statement to Theorem \ref{prop:LogLowerBound}, which is valid for all possible coupling constants $v\in\; ]0,2[$. We therefore cannot rely on $C$ being arbitrarily close to one.

\begin{proof}[Proof of Theorem \ref{thm:GeneralExpectationLogLowerBound}]
	\label{proof:boundary-box}
	The use of $\Lambda_{L}' :=[-L, -(1- \delta) L] \cap \Z = \Lambda_{L}^{0,\delta}$ amounts to 
	$\gamma=0$ in our previous arguments. This change simplifies the matrix elements of the 
	commutator \eqref{neuer-Restdarstellung} dramatically, since $\theta_{L_1}(E)=0$  in this case by definition 
	for all values $E\in\R$. Hence, 
	\begin{equation}
	|\langle\psi_{E_k},[H_L,1_{\Lambda_L'}(X)]\psi_{E_j}\rangle|=\big|r_{L_2}(E_k)r_{L_2}(E_j)\sin (2z^-_{j,k})\big|\ge \frac{1}{CL}|\sin (2z^-_{j,k})|
	\end{equation}
	for all $k\in\mathcal J_{<}$ and $j\in\mathcal J_\ge$. This renders the considerations of 
	Lemma~\ref{lem:sine-sign} and Lemma~\ref{lem:good-ind-lb} unnecessary, because only a single 
	sine-function shows up. The overall argument, however, is similar to the one in Lemma~\ref{prop1} 
	with an additional factor of $2$. 

	We therefore redefine the set 
	\begin{equation}
 		\mathcal{Z}^{-} := \big\{ j \in \mathcal{J}_{\ge} : \, \sin (2z_{j,k}^{-}) \sin (2z_{j-1,k}^{-}) \le 0 
		\text{~and~} \sin (2z_{j,k}^{-})\neq 0 \big\}
	\end{equation}
	of indices where a sign change occurs in the oscillation. As before, this gives rise to a disjoint partition
	\begin{equation}
 		\mathcal{J}_{\ge} =: \bigcupdisjoint_{q=0}^{|\mathcal{Z}^{-}|} \mathcal{A}_{q}^{-} 
	\end{equation}
 	of the index set into ranges of successive indices forming an antinode of the oscillation. 
	The set of good indices in the $q$th antinode is defined as
	\begin{equation}
		\mathcal{J}_{q}^{\mathrm{g}} := \big\{ j \in \mathcal{A}_{q}^{-} : |\sin (2z^-_{j,k})| \ge 2^{-1/2} \big\}.
	\end{equation}
	The proof of Lemma~\ref{lem:z-diff} is valid for $\gamma=0$. It provides positive bounds 
	$a:=\delta\pi/2C^6$ and $b:=\delta\pi C^6/2$ for the positive values of the increments 
	$2(z^-_{j+1,k}-z^-_{j,k})\in[a,b]$. From the proof of Lemma~\ref{prop1} we get the following estimate
	\begin{equation}\label{thm1.2-LowerBoundGood}
		|\mathcal J^{\textrm{g}}_{q}|\ge \lfloor \pi/(2b)\rfloor 
		\ge \pi/(4b),
	\end{equation}
	where the last inclusion follows from $b\le \pi/4$, which is true for $\delta<2C^{-6}$. The 
	inequality \eqref{thm1.2-LowerBoundGood} replaces the estimate of Lemma~\ref{lem:good-ind-lb}. 
	The rest of the proof is identical to the one of Theorem~\ref{prop:LogLowerBound} and 
	Corollary~\ref{cor:E-LogLowerBound}, except that we do not take the expectation at the end 
	but appeal to the Borel--Cantelli Lemma to conclude that 
	\begin{equation}
 		\mathbb{P}\big\{\omega\in \big(\Omega_{L}(\alpha)\big)^{c} \text{ for finally all } 
			L\in\N\big\} =1.
	\end{equation}
%
%
%
%
\end{proof}


\section{Proof of Theorem \ref{thm:main}}
\label{sec:ee-mod-ee}

In this section we deduce the main Theorem \ref{thm:main} from Corollary \ref{cor:E-LogLowerBound}. 
The goal is to control the error arising from the replacement of the outer volume $\Gamma_L$ in the 
modified entanglement entropy by the whole space $\Z$. We consider a discrete interval 
$A\subset\Gamma_L$ and denote by $f(H^\omega_L)$ the trivial 
extension of this operator from $\ell^2(\Gamma_{L})$ to the space $\ell^2(\Z)$ 
for any measurable function $f:\,\R\rightarrow\R$.

Our strategy is to apply Kre\u\ii n's trace formula, see e.g.\ \cite[Sect.\ 9.7]{schmuedgen2012unbounded},
\begin{multline}
	\label{eq:Kreinsformula}
	\left|\tr{g\big(1_{A}(X) 1_{<E_F}(H^\omega_L)1_{A}(X)\big) -
		g\big(1_{A}(X)1_{<E_F}(H^\omega)1_{A}(X)\big)}\right| \\
	=\Big|\int_0^1\dd s\;g^\prime(s)\, \xi_L(s)\Big|\le4\left\|\xi_L\right\|_{L^1}
\end{multline}
to the parabola $g$ from \eqref{hdef}, where  
\begin{equation}
  \xi_L:\;\R\ni s\mapsto \tr{1_{\le s}\big(1_{A}(X)1_{<E_F}(H^\omega)1_{A}(X)\big)
 	-1_{\le s}\big(1_{A}(X)1_{<E_F}(H_L^\omega)1_{A}(X)\big)}
\end{equation}
is the spectral shift function. 
Here, $\|\cdot\|_{L^1}$ denotes the $L^1(\R)$-norm.
It can be estimated in terms of the trace norm $\|\cdot\|_1$ of the difference 
\begin{equation}\label{eq:Krein}\|\xi_L\|_{L^1}\le\big\|1_{A}(X)\big(1_{<E_F}(H^\omega)-1_{<E_F}(H^\omega_L)\big)1_{A}(X)\big\|_1.
\end{equation}


We recall from Theorem~\ref{thm:Jitomirskaya} that the density of states $\mathcal{N}'$ exists at the critical energies 
$E_{F}= 0$ and $v$. For $L\in\N$ we define $d_L:=\textrm{dist}(A,\{-L,L-1\})$ as the distance of the 
small box to the boundary of the big box.

\begin{lemma}
	\label{shift-function}
	Let $E_{F} \in\{0,v\}$ and $\alpha >0$. Then there exists a minimal length
	$L_{0} \in\N$, which depends only on $\alpha$ and on the model parameters, such that for all 
	$L \ge L_{0}$, all ``temperatures'' $T\in\;]0,\infty[$ and all discrete intervals $A\subset\Gamma_{L}$
	we have the estimate
	\begin{equation}
		\label{eq:EstimateShiftFunction}
		\E{\left|Q_{E_{F}}(A,L)  - Q_{E_{F}}(A) \right| }  \le M(A,T)  + R(A,T)
	\end{equation}
	with a main term
	\begin{equation}
 		M(A,T) := 2^{4}C^{4}|A|T
	\end{equation}
	and a remainder term
	\begin{equation}
		R(A,T) := 2^{5}C +  2^7|A|^2 \big(T^{-2} \e^{-d_LT/6} + \;T \e^{-d_L/6} \big) 
			+ 2^{5}C^{3}|A| \e^{-L^{-1/2-\alpha}/T}.
	\end{equation}
\end{lemma}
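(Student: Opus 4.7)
The plan is to use the Kre\u\ii n trace-formula reduction already set up in \eqref{eq:Kreinsformula}--\eqref{eq:Krein}, which reduces everything to controlling
$\E{\|1_{A}(1_{<E_{F}}(H^{\omega})-1_{<E_{F}}(H^{\omega}_{L}))1_{A}\|_{1}}$. To handle the non-smoothness of the indicator I would introduce a thermal regularisation at temperature $T$ via the Fermi--Dirac function $f_{T}^{-}(E):=(1+\e^{(E-E_{F})/T})^{-1}$ and split
\begin{equation*}
	1_{<E_{F}}(H^{\omega})-1_{<E_{F}}(H^{\omega}_{L})
	= \Delta^{\omega}_{\infty} - \Delta^{\omega}_{L}
	+ \bigl[f_{T}^{-}(H^{\omega})-f_{T}^{-}(H^{\omega}_{L})\bigr],
\end{equation*}
where $\Delta^{\omega}_{\bullet}:=(1_{<E_{F}}-f_{T}^{-})(H^{\omega}_{\bullet})$. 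The two approximation pieces will produce the main term $M(A,T)$, while the thermal difference will account for the remainder $R(A,T)$.

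For $\Delta^{\omega}_{\infty}$ I pass from the trace norm to the trace of the absolute value by polar decomposition and use ergodicity of $H^{\omega}$ to obtain
\begin{equation*}
	\E{\|1_{A}\Delta^{\omega}_{\infty}1_{A}\|_{1}}
	\le |A|\int\left|1_{<E_{F}}(E)-f_{T}^{-}(E)\right|\,\d\mathcal{N}(E),
\end{equation*}
followed by the pointwise bound $|1_{<E_{F}}(E)-f_{T}^{-}(E)|\le\e^{-|E-E_{F}|/T}$. Splitting the $\d\mathcal N$-integral over the critical window $\mathcal W_{L}$ and its complement and invoking the density-of-states bound $\mathcal{N}'(E_{c})\le C^{3}/(2\pi)$ from Theorem~\ref{thm:Jitomirskaya}(ii) controls the first region by $\sim C^{4}|A|T$, while the second is bounded by $|A|\e^{-L^{-1/2-\alpha}/T}$; these account for $M(A,T)$ and for the last term of $R(A,T)$. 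The finite-volume piece $\Delta^{\omega}_{L}$ is handled analogously on the good event $(\Omega_{L}(\alpha))^{c}$, using the level-spacing estimate \eqref{eq:LevelSpacing} as a Riemann-sum surrogate for $\d\mathcal{N}$ on $\mathcal W_{L}$; the contribution of the exceptional event $\Omega_{L}(\alpha)$ is dominated crudely by $|A|\,\Pp[\Omega_{L}(\alpha)]$ via $\|1_{A}(\cdot)1_{A}\|_{1}\le|A|$ together with \eqref{prob-bad-event}, and is absorbed into the $2^{5}C$ constant of $R(A,T)$.

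The remaining thermal difference is handled by a Helffer--Sj\"ostrand (or equivalent contour) representation
\begin{equation*}
	f_{T}^{-}(H^{\omega})-f_{T}^{-}(H^{\omega}_{L})
	= \frac{1}{\pi}\int_{\C}\partial_{\bar z}\tilde f_{T}^{-}(z)\,
	(z-H^{\omega})^{-1}(H^{\omega}-H^{\omega}_{L})(z-H^{\omega}_{L})^{-1}\,\d z\,\d\bar z,
\end{equation*}
with an almost-analytic extension $\tilde f_{T}^{-}$ supported in a strip of width $\sim\pi T$ chosen to avoid the Matsubara poles of $f_{T}^{-}$ at $E_{F}\pm\i\pi T$. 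Since $H^{\omega}-H^{\omega}_{L}$ is supported on $\Gamma_{L}^{c}$ together with the two boundary bonds of $\Gamma_{L}$, Combes--Thomas bounds on both resolvents give exponential decay $\e^{-c|\Im z|d_{L}}$ of the matrix elements between any two sites of $A$. Estimating $\|1_{A}(\cdot)1_{A}\|_{1}\le|A|^{2}\max_{x,y\in A}|(\cdot)(x,y)|$ and using that $|\partial_{\bar z}\tilde f_{T}^{-}|$ carries total mass $\sim T^{-2}$ inside the thin strip yields the term $|A|^{2}T^{-2}\e^{-d_{L}T/6}$; the contribution from the far tails of $\tilde f_{T}^{-}$, where only a temperature-independent Combes--Thomas rate is available, produces the complementary $|A|^{2}T\,\e^{-d_{L}/6}$ contribution. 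The delicate point will be balancing the strip width, the Combes--Thomas rate and the $T$-scaling of $|\partial_{\bar z}\tilde f_{T}^{-}|$ so as to obtain exactly the constant $1/6$ in the exponents; collecting all pieces and incorporating the factor of $4$ from \eqref{eq:Kreinsformula} yields the claim.
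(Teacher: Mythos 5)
Your proposal is correct and follows essentially the same route as the paper: the Kre\u\ii n-formula reduction, the three-way split into the two Fermi--Dirac approximation errors (infinite volume via ergodicity and the density-of-states bound, finite volume via the level-spacing and eigenfunction-flatness estimates on the good event plus a crude bound on $\Omega_L(\alpha)$) and the thermal difference handled by a contour representation with the geometric resolvent equation and Combes--Thomas, matching Lemmata~\ref{lem:CombesThomas}--\ref{lem:Lendlich}. The only cosmetic difference is that the paper uses a plain Cauchy contour around a rectangle in the strip of holomorphy of $f_T$ rather than a Helffer--Sj\"ostrand extension, with the two terms of the $|A|^2$-contribution arising from the horizontal and vertical contour segments exactly as you anticipate.
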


The lemma is proven in Subsection~\ref{se:lem-sf} below. Before we turn to the proof of the main theorem, 
we need another perturbation result of a similar spirit.

\begin{lemma}
	\label{lem:AAprime}
	Let $A,A' \subseteq \Gamma_{L}$ and $E_{F}\in\R$. Then 
	\begin{equation}
		\label{eq:AAprime}
 		\left| Q_{E_{F}}(A,L) - Q_{E_{F}}(A',L) \right| \le 4 r,
	\end{equation}
 	where $r$ denotes the cardinality of the symmetric difference of $A$ and $A'$.
\end{lemma}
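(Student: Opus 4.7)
The plan is to derive an explicit representation of $Q_{E_F}(A, L)$ in terms of squared matrix elements of the Fermi projection $K := 1_{<E_F}(H_L^\omega)$, and then exploit the resulting nonnegativity together with the Parseval-type identity $K_{yy} = \sum_z |K_{yz}|^2$ (a direct consequence of $K = K^2 = K^*K$). Writing $P := 1_A(X)$, the identity $g(\lambda) = 4\lambda - 4\lambda^2$, cyclicity of the trace, and $P^2 = P$ give $\tr\{g(PKP)\} = 4\sum_{z\in A} K_{zz} - 4\sum_{y,z\in A}|K_{yz}|^2$; combining this with the Parseval identity telescopes to
\[
Q_{E_F}(A, L) \;=\; 4 \sum_{z\in A,\, y\in A^c} |K_{yz}|^2 \;=:\; 4\, \kappa(A, A^c),
\]
where $\kappa(X, Y) := \sum_{x\in X,\, y\in Y} |K_{xy}|^2 \ge 0$ is symmetric in its arguments by self-adjointness of $K$.

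Next I would partition $\Z$ into the four disjoint pieces $B := A \cap A'$, $C := A \setminus A'$, $C' := A' \setminus A$, $D := (A \cup A')^c$, and expand both $\kappa(A, A^c)$ and $\kappa(A', A'^c)$ according to this decomposition. A number of terms cancel thanks to the symmetry $\kappa(X, Y) = \kappa(Y, X)$ and the common contribution $\kappa(B, D)$, and what remains is
\[
\tfrac14 \bigl[Q_{E_F}(A, L) - Q_{E_F}(A', L)\bigr] \;=\; \kappa(C, D) + \kappa(B, C') - \kappa(C', D) - \kappa(B, C).
\]
All four summands on the right are nonnegative, so the triangle inequality gives $\tfrac14|Q_{E_F}(A, L) - Q_{E_F}(A', L)| \le \kappa(C, B) + \kappa(C, D) + \kappa(C', B) + \kappa(C', D) = \kappa(C \cup C',\, B \cup D) = \kappa(T, T^c)$, where $T := A \triangle A'$ is the symmetric difference of cardinality $r$.

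Finally, the Parseval identity once more yields $\kappa(T, T^c) \le \sum_{y\in T}\sum_z |K_{yz}|^2 = \sum_{y\in T} K_{yy} \le r$, since $0 \le K_{yy} \le 1$ for the projection $K$. Chaining the previous estimates gives the claim $|Q_{E_F}(A, L) - Q_{E_F}(A', L)| \le 4r$. I expect no real obstacle here: the only step requiring genuine care is the bookkeeping of the cancellations in the middle paragraph that collapses the expansion into the four displayed nonnegative pieces, but this is purely combinatorial.
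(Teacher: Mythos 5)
Your proof is correct, but it takes a genuinely different route from the paper. The paper first observes that $1_{A}(X)K1_{A}(X)$ and $K1_{A}(X)K$ (with $K:=1_{<E_F}(H_L)$) share the same non-zero eigenvalues and $g(0)=0$, and then applies Kre\u\ii n's trace formula to the pair $K1_{A}(X)K$, $K1_{A'}(X)K$ to get $|\Tr\{g(K1_A(X)K)-g(K1_{A'}(X)K)\}|\le \|g'\|_\infty\,\|K(1_A(X)-1_{A'}(X))K\|_1\le 4\|1_A(X)-1_{A'}(X)\|_1=4r$. You instead exploit the specific form $g(\lambda)=4\lambda(1-\lambda)$ together with $K=K^2=K^*$ to obtain the closed formula $Q_{E_F}(A,L)=4\sum_{z\in A,\,y\in A^c}|K_{yz}|^2$, and then do the combinatorics of the symmetric difference by hand; the final step $\kappa(T,T^c)\le\sum_{y\in T}K_{yy}\le r$ again uses the projection property. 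Both arguments land on the same constant. What the paper's route buys is uniformity: it needs only that $g$ is Lipschitz with constant $4$ and vanishes at $0$, and it recycles machinery (the spectral shift bound \eqref{eq:Krein}) already set up for Lemma~\ref{shift-function}. What your route buys is complete elementarity and an identity with intrinsic meaning -- $Q_{E_F}(A,L)$ is exactly four times the squared Hilbert--Schmidt mass of the Fermi projection across the boundary of $A$, and every term in your expansion is manifestly nonnegative, which makes the cancellation bookkeeping transparent. Two cosmetic points: the partition should be of $\Gamma_L$ rather than $\Z$ (or one should invoke the trivial extension of $K$ by zero, which is still a projection, so nothing changes), and the ``Parseval'' identity $K_{yy}=\sum_z|K_{yz}|^2$ deserves the explicit remark that it is where orthogonality of the projection enters.
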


\begin{proof}
	We use the abbreviation $P:=1_{E_{F}}(H_{L})$.
	The operators $1_{A^{(\prime)}}(X)P 1_{A^{(\prime)}}(X)$ and $P1_{A^{(\prime)}}(X)P$ share the same non-zero singular values. This and $g(0)=0$ implies that	the left-hand side of \eqref{eq:AAprime} equals 
	\begin{align}
 		\left|\Tr \big\{ g\big(P 1_{A}(X)P\big) - g\big(P1_{A'}(X)P\big) \big\}\right| 
		& \le 4 \|P \big(1_{A}(X) - 1_{A'}(X)\big)P \|_{1}  \notag\\
		& \le 4  \|1_{A}(X) - 1_{A'}(X) \|_{1} = 4 r,
	\end{align}
	where, in order to deduce the first inequality, we argued with Kre\u\ii n's trace formula as in \eqref{eq:Kreinsformula} and \eqref{eq:Krein} but with the operators $P 1_{A}(X)P$ and $P 1_{A'}(X)P$.
\end{proof}

\begin{proof}[Proof of Theorem \ref{thm:main}]
	We fix $\alpha:=1/12$ and $\gamma\in\; ]0,2^{-26}[$. The goal is to apply Lemma~\ref{shift-function}
	with $A=\Lambda_{L}^{\gamma,\gamma^{2}}$, whence $d_L=\lfloor\gamma L\rfloor$ and 
	$|\Lambda_L^{\gamma,\gamma^{2}}|=\lfloor(\gamma+\gamma^2)L\rfloor-\lfloor \gamma L\rfloor$. 

	First, we have to replace the box $\Lambda_{L}=\{1,\ldots,L\}$ by the differently positioned box 
	$\Lambda_{L}^{\gamma,\gamma^{2}}$ according to
	\begin{equation}
		\label{eq:diffLambda}
		\liminf_{L\rightarrow\infty}\frac{\mathbb{E}\big[S_{E_{F}}(\Lambda_L)\big]}{\ln L}
		\ge \liminf_{L\rightarrow\infty}\frac{\mathbb{E}\big[Q_{E_{F}}(\Lambda_L) 
			\big]}{\ln L} 	
		= \liminf_{L\rightarrow\infty} \frac{\mathbb{E}\big[ Q_{E_{F}}(\Lambda_{L}^{\gamma,\gamma^{2}})
			\big] }{\ln L} .  
	\end{equation}	
	As to the validity of the equality in \eqref{eq:diffLambda} we remark that 
	~(i)~ $\lim_{L\to\infty} \ln L/\ln |\Lambda_L^{\gamma,\gamma^{2}}| =1$, ~(ii)~
	$ \exists\, \tilde{L}\in \N$ such $\{ |\Lambda_L^{\gamma,\gamma^{2}}| : L\ge\tilde L \} = \N$, 
	see Lemma~\ref{lem:lengthcount},
	~(iii)~ ergodicity with respect to $2\Z$-translations allows to shift $\Lambda_L^{\gamma,\gamma^{2}}$ 
	such that its left-most point is either $0$ or $1$ and ~(iv)~ if it is $0$, then Lemma~\ref{lem:AAprime} 
	allows us to shift it to $1$ at the cost of an $L$-independent error of the numerator not larger than $8$ 
	so that this error
	does not contribute in the limit $L\to\infty$. 
	
	Introducing the abbreviation $\mathcal{E}_{L} := \mathbb{E}\big[ |Q_{E_{F}}(\Lambda_{L}^{\gamma,
		\gamma^{2}},L) - Q_{E_{F}}(\Lambda_{L}^{\gamma,\gamma^{2}})|\big]$, \eqref{eq:diffLambda} implies
	\begin{align}
		\label{eq:final-est-ent}
		\liminf_{L\rightarrow\infty}\frac{\mathbb{E}\big[S_{E_{F}}(\Lambda_L)\big]}{\ln L}
		&	\ge \liminf_{L\rightarrow\infty} \frac{\mathbb{E}\big[ Q_{E_{F}}(\Lambda_{L}^{\gamma,\gamma^{2}},L)
			\big] }{\ln L} 
			- \limsup_{L\rightarrow\infty} \frac{\mathcal{E}_{L}}{\ln L} \notag\\
		& \ge 2^{-15}	 - \limsup_{L\rightarrow\infty} \frac{\mathcal{E}_{L}}{\ln L},
	\end{align}	
	where we  used Corollary~\ref{cor:E-LogLowerBound} in the last step, assuming that $v \in \;]0, v_{0}]$.

	Now, we estimate the error $\mathcal{E}_{L}$ with Lemma~\ref{shift-function}. To do so, we choose 
	the temperature as	$T_L:=(K \ln L)/L$ with some constant 
	\begin{equation}
		\label{ersteBedanK}
 		K>24/\gamma.
	\end{equation}
	We find for the remainder term that
	\begin{equation}
 		\lim_{L\to\infty} R\big(\Lambda_{L}^{\gamma,\gamma^{2}},T_{L}\big) = 2^{5}C,
	\end{equation}
	where we used \eqref{ersteBedanK} to see that the contribution proportional to 
	$|\Lambda_{L}^{\gamma,\gamma^{2}}|^{2}T_{L}^{-2}\e^{-d_{L}T_{L}/6} \sim [L^{4}/(\ln L)^{2}]
		\e^{-\gamma K \ln L /6}$ vanishes in the limit. Thus, we deduce from \eqref{eq:final-est-ent} that 
	\begin{equation}
		\label{eq:main-term-est}
		\liminf_{L\rightarrow\infty}\frac{\mathbb{E}\big[S_{E_{F}}(\Lambda_L)\big]}{\ln L}
		\ge 2^{-15}	 - \mathcal{L}_{v}
	\end{equation}	
	with 
	\begin{equation}
 		\mathcal{L}_{v} := \limsup_{L\rightarrow\infty} \frac{M\big(\Lambda_{L}^{\gamma,\gamma^{2}},T_{L}\big)}{\ln L}
		= 2^{4}C^{4}\gamma^{2} K
	\end{equation}
	being even a limit.
	
	The claim of the theorem then follows from \eqref{eq:main-term-est} and the requirement 
	\begin{equation}
		\label{calL-cond}
 		\mathcal{L}_{v} \le 2^{-16}  \quad\text{for all } v \in \;]0,v_{0}].
	\end{equation} 
	To see the validity of \eqref{calL-cond} we recall from the proof of  
	Corollary~\ref{cor:E-LogLowerBound} that the restriction $v\le v_{0}$ guarantees the bound $C < 1+ \gamma^{2}$. Since $\gamma \in \;]0, 2^{-26}[$, we have $C^{4} \le 2$. Therefore Inequality \eqref{calL-cond} follows if
	\begin{equation}
		\label{zweiteBedanK}
		2^6\gamma^2 K\le 2^{-16}.
	\end{equation}
 	But, since $\gamma \le 2^{-26}$, the two conditions  \eqref{ersteBedanK} and \eqref{zweiteBedanK} do 
	not contradict each other, and such a constant $K$ does indeed exist. 
\end{proof}

\subsection{Proof of Lemma~\ref{shift-function}}
\label{se:lem-sf}

Without loss of generality we restrict ourselves in the proof of Lemma~\ref{shift-function} to the case 
$E_{F}=0$, the other case being analogous. 

According to \eqref{eq:Kreinsformula} -- \eqref{eq:Krein} we will estimate the trace norm of the 
difference $1_{A}(X)(1_{<0}(H^\omega)-1_{<0}(H^\omega_L))1_{A}(X)$. To do so we write the 
Fermi projections as contour integrals over the resolvent. Then the well-known geometric resolvent equation
and the Combes--Thomas estimate will allow us to estimate the integrand. The first step, however, requires 
an analytical function of the Schr\"odinger operator. Hence, we replace the Fermi projection $1_{<0}$
by the analytical Fermi--Dirac distribution $f_{T}:=1/(1+ \e^{(\,\pmb{\cdot}\,)/T})$ with temperature $T>0$.

\begin{lemma}
	\label{lem:CombesThomas}
	The deterministic estimate 
	\begin{align}
		\label{eq:GlaettungAbstand} 
		\left\|1_{A}(X)\left(f_{T}(H^\omega_L)-f_{T}(H^\omega)\right)1_{A}(X)\right\|_1
		\le 2^5 |A|^2 \left( T \e^{-d_L/6} + T^{-2} \e^{-d_L T/6}\right)
	\end{align}
	holds for all $L\in\N$, $T>0$ and $\omega\in\Omega$.
\end{lemma}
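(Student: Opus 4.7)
The plan is to move to a convenient unperturbed operator, use a resolvent-difference formula, and estimate matrix elements via Combes--Thomas. First, define $\tilde H_L^\omega := H_L^\omega \oplus H^\omega\big|_{\ell^{2}(\Gamma_L^{c})}$, a bounded self-adjoint operator on $\ell^2(\Z)$. Because $A \subset \Gamma_L$, the sandwiching with $1_A(X)$ yields $1_A(X)f_T(H_L^\omega)1_A(X) = 1_A(X)f_T(\tilde H_L^\omega)1_A(X)$, so it suffices to bound the trace norm of $1_A(X)[f_T(H^\omega)-f_T(\tilde H_L^\omega)]1_A(X)$. The perturbation $B := H^\omega - \tilde H_L^\omega$ is deterministic, of rank four, and supported on the boundary sites $\{-L-1,-L,L-1,L\}$, each at distance $\ge d_L$ from $A$, with $\|B\|\leq 2$. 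Since both $H^\omega,\tilde H_L^\omega$ have spectra contained in $[-M_0,M_0]$ for some $M_0$ depending only on $v$, I would choose a positively-oriented rectangular contour $\Gamma$ enclosing $[-M_0,M_0]$ with horizontal sides at $\Im z = \pm\eta$, where $\eta \in (0,\pi T)$ is to be optimised so as to stay strictly inside the holomorphy strip of $f_T$. Cauchy's formula and the second resolvent identity then give
\begin{equation*}
f_T(H^\omega) - f_T(\tilde H_L^\omega) = \frac{1}{2\pi \i}\oint_\Gamma f_T(z)\, R_z(H^\omega)\, B\, R_z(\tilde H_L^\omega)\,\d z.
\end{equation*}

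\textbf{Matrix element estimate via Combes--Thomas.}
For $x,y \in A$ and $z \in \Gamma\setminus\R$, the matrix element $[R_z(H^\omega)BR_z(\tilde H_L^\omega)]_{xy}$ is a sum of at most four products of the form $R_z(H^\omega)_{x,b_1}\,B_{b_1 b_2}\,R_z(\tilde H_L^\omega)_{b_2,y}$ where $b_1,b_2$ range over the four boundary sites. Apply the Combes--Thomas estimate for nearest-neighbour lattice operators: there are universal constants $c_1,C_1>0$ such that $|R_z(H^\omega)_{x,b}| \le (C_1/|\Im z|)\, \e^{-\kappa(z)\,\dist(x,b)}$ with decay rate $\kappa(z) = c_1\min\{1,|\Im z|\}$, and similarly for $\tilde H_L^\omega$ (which inherits the nearest-neighbour structure on each connected piece). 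Since $\dist(x,b_1),\dist(y,b_2) \ge d_L$, this yields $|[R_z(H^\omega)BR_z(\tilde H_L^\omega)]_{xy}| \lesssim |\Im z|^{-2}\e^{-2\kappa(z) d_L}$. Summing over $x,y\in A$ and using $\|1_A T 1_A\|_1 \le \sum_{x,y\in A}|T_{xy}| \le |A|^2\max_{x,y\in A}|T_{xy}|$ gives
\begin{equation*}
\|1_A(X)[f_T(H^\omega)-f_T(\tilde H_L^\omega)]1_A(X)\|_1 \;\lesssim\; |A|^2 \cdot \frac{1}{2\pi}\oint_\Gamma |f_T(z)|\,|\Im z|^{-2}\, \e^{-2\kappa(z) d_L}\,|\d z|.
\end{equation*}

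\textbf{Contour integral estimate.}
The integral splits naturally into two contributions. On the horizontal sides at $\Im z = \pm\eta$, the factor $|\Im z|^{-2} = \eta^{-2}$ is large when $\eta$ is small, and the decay rate is $c_1\min\{1,\eta\}$; here $|f_T(x\pm\i\eta)|$ is bounded and integrable over the finite interval $\Re z \in [-M_0-1,M_0+1]$, producing the term proportional to $\eta^{-2}\e^{-2c_1\eta d_L}$. On the vertical sides at $\Re z = \pm(M_0+1)$, we are at distance at least $1$ from the spectrum, so $\kappa(z)\ge c_1\cdot 1$, and the length of each vertical side is $2\eta$, producing a term proportional to $\eta\e^{-2c_1 d_L}$ weighted by $|f_T|$, which in turn behaves like $T$ after integration. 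Choosing $\eta = \min\{\tfrac{\pi T}{2},\text{const}\}$ (i.e.\ $\eta\sim T$ when $T$ is small and $\eta\sim 1$ when $T$ is large) and tracking the Combes--Thomas constants carefully gives the claimed two-term bound $2^5|A|^2\bigl(T\e^{-d_L/6}+T^{-2}\e^{-d_LT/6}\bigr)$, the first coming from the vertical sides (or equivalently the large-$T$ regime) and the second from the horizontal sides in the small-$T$ regime.

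\textbf{Main obstacle.}
The conceptual setup is routine, but the difficulty lies in extracting the precise numerical factor $2^5$ and the particular constant $1/6$ in both the slow and the rapid exponential. This forces one to use an explicit form of the Combes--Thomas estimate (with explicit dependence of the decay rate on $|\Im z|$) and to bookkeep the length of each segment of $\Gamma$, the size of $|f_T(z)|$ along it, and the choice of $\eta$ very carefully; in particular the two terms in the bound must both be dominant in some parameter range, so any uniform estimate that does not sharply balance horizontal versus vertical contributions would fail. Nothing involves randomness: the estimate is deterministic and the $\omega$ enters only through the uniform bound $\|H^\omega\|\le 2+v$.
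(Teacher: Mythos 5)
Your proposal follows essentially the same route as the paper's proof: represent $f_{T}(H^{\omega})-f_{T}(H^{\omega}_{L})$ as a contour integral over a rectangle of height $\sim\min(1,\pi T/2)$ inside the holomorphy strip of $f_{T}$, reduce the resolvent difference via the (geometric) resolvent identity to boundary hopping terms at distance $d_{L}$ from $A$, and apply Combes--Thomas with $\dist(z,\sigma)\gtrsim T$ on the horizontal sides and $\dist(z,\sigma)\ge 1$ on the vertical sides, which produces exactly the two-term bound. One small correction: state the Combes--Thomas prefactor and decay rate in terms of $\dist\bigl(z,\sigma(H^{\omega})\bigr)$ rather than $|\Im z|$, since on the vertical sides $|\Im z|$ passes through zero and the prefactor $|\Im z|^{-2}$ in your displayed intermediate bound is not integrable there, whereas $\dist(z,\sigma)\ge 1$ yields the $O(1)$ integrand that your subsequent estimate of the vertical contribution tacitly uses.
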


\begin{proof}
The function $f_{T}$ is  holomorphic on the strip  $\{z\in\C:|\Im(z)|<\pi T\}$. Let $\gamma_T$ be a curve 
encircling $\sigma(H^\omega_L)$ and $\sigma(H^\omega)$ counter-clockwise in this strip for all $L\in\N$
and all $\omega\in\Omega$. We choose $\gamma_T$ such that its image borders the rectangle 
\begin{equation}
 	\big\{z\in\C: |\Im(z)|\le \min(1,T\pi/2),\; \Re(z)\in[-3,5]\big\}
\end{equation}
and note that $\sigma(H_{(L)}^\omega)\subseteq[-2,4]$ for all $v\in\;]0,2[$, all $L\in\N$ and all 
$\omega\in\Omega$.
We conclude that 
\begin{multline}
1_{A}(X)\big(f_{T}(H^\omega)-f_{T}(H^\omega_L)\big)1_{A}(X) \\
=\frac{1}{2\pi \i}\oint_{\gamma_{T}}\dd z\; f_{T}(z) \, 1_{A}(X)\Big(\frac{1}{z-H^\omega}-\frac{1}{z-H^\omega_L}\Big)1_{A}(X).
\end{multline}
The geometric resolvent equation yields
\begin{multline}
	1_{A}(X)\Big(\frac{1}{z-H^\omega}-\frac{1}{z-H^\omega_L}\Big)1_{A}(X) \\
	=	-1_{A}(X)\frac{1}{z-H^\omega} \; \big(\proj{\delta_{-L-1}}{\delta_{-L}}+\proj{\delta_{L}}{\delta_{L-1}}\big) \; \frac{1}{z-H^\omega_L}
		1_{A}(X).
\end{multline}
We estimate the matrix elements of the resolvent with the Combes--Thomas estimate 
\cite[Thm.~11.2]{artRSO2008Kir} 
\begin{align}
	\big|\big\langle\delta_{x}, \Big(\frac{1}{z-H^\omega} - \frac{1}{z-H^\omega_L}\Big) \delta_{y} 
		\big\rangle\big|
	\le 2 \;
	\frac{2^2}{\textrm{dist}(z,[-2,4])^2} \; \e^{-2\, \textrm{dist}(z,[-2,4]) \, d_L/12}
\end{align}
for every $x,y\in A$ and every $z\notin\sigma(H_L^\omega)\cup\sigma(H^\omega)$.
As to the applicability of \cite[Thm.~11.2]{artRSO2008Kir}, we note that by inspection of the proof one obtains the statement 
not only for $z\in\C$ with distance to the spectrum $\le 1$, but even if it is $\le 12$, which is fulfilled in our case.

An elementary computation shows that $|f_{T}(z)|\le 1$  for all $z$ on the curve $\gamma_T$. 
Furthermore, for all $z$ on the horizontal parts of $\gamma_{T}$ where $|\Im(z)|= \min(1,T\pi/2)$ we find 
$\textrm{dist}(z,[-2,4])\ge T\pi/2$. On the vertical parts where $\Re(z)\in\{-3,5\}$ we have 
$\textrm{dist}(z,[-2,4])\ge 1$. Hence,

\begin{align}
	\big\|1_{A}(X)\big(f_{T}(H^\omega_L)  &-f_{T}(H^\omega)\big)1_{A}(X)\big\|_1  \notag \\
	& \le \sum_{x,y\in A}\frac{1}{2\pi}\left|\oint_{\gamma_T}\dd z \, f_{T}(z) \, 
		\big\langle\delta_{x}, \Big(\frac{1}{z-H^\omega} - \frac{1}{z-H^\omega_L}\Big) \delta_{y} 
		\big\rangle \right| \notag\\
	&\le 2^5 |A|^2 \left( T \e^{-d_L/6} + T^{-2} \e^{-d_L T/6}\right).
\end{align}
\end{proof}

Since we approximate the Fermi projection by $f_T(H^\omega_{(L)})$ we have to control an error term, 
which we estimate in the following two lemmata.

\begin{lemma}\label{lem:spurDifferenzUnendlich}
	There exists a minimal length 
	$\tilde L_0>1$, which depends only on the model parameters, such that for all $L\ge \tilde L_0$, 
	all $T>0$ and all discrete intervals $A \subset \Gamma_{L}$ we have
	\begin{equation}
		\label{eq:FermizuTemeratur1}
		\mathbb{E}\big[\big\| 1_{A}(X)\big( f_{T}(H) - 1_{<0}(H) \big)1_{A}(X)\big\|_1\big]  
		\le 2|A| \,\big[ C^{3} T + \big(2 + C^{3} L^{-1/2} \big) 
		\e^{-L^{-1/2}/T} \big] .
	\end{equation}
\end{lemma}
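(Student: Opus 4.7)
The plan is to bound the trace norm by a one-dimensional spectral integral against the integrated density of states $\mathcal{N}$, and then split that integral into contributions from energies near and far from the critical Fermi energy $E_{F}=0$.

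First, set $F:=f_{T}-1_{<0}$, so that $F(H)$ is self-adjoint. Decomposing $F=F_{+}-F_{-}$ into nonnegative and nonpositive parts and using the spectral-theoretic identity $|F(H)|=|F|(H)$ together with the triangle inequality for the trace norm gives
\begin{equation}
\big\|1_{A}(X)F(H)1_{A}(X)\big\|_{1}\le\Tr\big[1_{A}(X)|F|(H)1_{A}(X)\big]=\sum_{x\in A}\langle\delta_{x},|F|(H)\delta_{x}\rangle.
\end{equation}
After taking expectation, the $2\Z$-invariance of the random potential forces each summand to depend only on the parity of $x$, and Pastur's formula identifies $\mathcal{N}$ as the arithmetic mean of the two averaged single-site spectral measures. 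Hence each summand is bounded by $2\int_{\R}|F(E)|\,\mathcal{N}(\d E)$, and the task reduces to estimating $2|A|\int_{\R}|F(E)|\,\mathcal{N}(\d E)$.

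Next, I would use the pointwise bounds $|F(E)|\le\min\{1,e^{-|E|/T}\}$ and integrate by parts on each half-line, writing $\int_{0}^{\infty}f_{T}\,\d\mathcal{N}=\int_{0}^{\infty}|f_{T}'(E)|\,[\mathcal{N}(E)-\mathcal{N}(0)]\,\d E$ and symmetrically for $\int_{-\infty}^{0}(1-f_{T})\,\d\mathcal{N}$. I would split the resulting right-hand side at $|E|=L^{-1/2}$. On $[0,L^{-1/2}]$ the factor $\mathcal{N}(E)-\mathcal{N}(0)\le(C^{3}/(2\pi))E$ (a Lipschitz-type extension of Theorem~\ref{thm:Jitomirskaya}(ii)) combined with the elementary identity $\int_{0}^{\infty}E|f_{T}'(E)|\,\d E=T\ln 2$ produces a contribution of order $C^{3}T$. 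On $[L^{-1/2},\infty)$ one bounds $\mathcal{N}(E)-\mathcal{N}(0)\le\mathcal{N}(\R)=1$ and uses $\int_{L^{-1/2}}^{\infty}|f_{T}'|\,\d E=f_{T}(L^{-1/2})\le e^{-L^{-1/2}/T}$, giving a contribution of order $e^{-L^{-1/2}/T}$. Multiplying the sum of both halves by the prefactor $2|A|$ reproduces the two summands on the right-hand side of \eqref{eq:FermizuTemeratur1}; the resulting constants are in fact slightly better than what is claimed, which accommodates the coefficients $2$ and $C^{3}L^{-1/2}$ appearing in front of the exponential.

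The main obstacle will be the Lipschitz extension of the density-of-states estimate of Theorem~\ref{thm:Jitomirskaya}(ii) from the single critical point $E_{c}$ to the full interval $[E_{c}-L^{-1/2},E_{c}+L^{-1/2}]$. Since that part of the theorem rests, through the Appendix, on the uniform level-spacing bound~\eqref{eq:LevelSpacing}, which is valid throughout the critical window $\mathcal{W}_{L}\supset[E_{c}-L^{-1/2},E_{c}+L^{-1/2}]$ for $L\ge L_{\mathrm{min}}$, I expect the same counting argument, applied to sliding subintervals of $\mathcal{W}_{L}$ rather than one symmetric around $E_{c}$, to deliver the required interval bound; verifying this detail is the only non-routine step in the proof.
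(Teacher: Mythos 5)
Your argument follows the paper's proof essentially step for step: the reduction to $\Tr\{1_{A}(X)\,|f_{T}-1_{<0}|(H)\,1_{A}(X)\}$ via the positive/negative decomposition, the passage to $2|A|\int_{\R}|f_{T}-1_{<0}|\,\d\mathcal{N}$ by $2\Z$-ergodicity and the Pastur--Shubin formula, and the integration by parts with a split at $|E|=L^{-1/2}$ are all exactly what the paper does, and the near- and far-field contributions come out as you describe. The one place where your plan goes astray is precisely the step you single out as non-routine. You propose to obtain $\mathcal{N}(E)-\mathcal{N}(0)\le (C^{3}/(2\pi))\,E$ on all of $[0,L^{-1/2}]$ by sliding the eigenvalue-counting argument over subintervals of the critical window, asserting $\mathcal{W}_{L}\supset[E_{c}-L^{-1/2},E_{c}+L^{-1/2}]$. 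That inclusion is the wrong way around: $\mathcal{W}_{L}=[E_{c}-L^{-1/2-\alpha},E_{c}+L^{-1/2-\alpha}]$ has half-width $L^{-1/2-\alpha}<L^{-1/2}$, so the level-spacing estimate \eqref{eq:LevelSpacing} at scale $L$ says nothing about most of the integration range $[0,L^{-1/2}]$, and the sharp Lipschitz constant $C^{3}/(2\pi)$ is not available there. The paper sidesteps this entirely: Theorem~\ref{thm:Jitomirskaya}\,(ii) already asserts that $\mathcal{N}'(0)$ exists, so by the very definition of the derivative there is an $\varepsilon_{0}>0$, depending only on the model parameters, with $|\mathcal{N}(E)-\mathcal{N}(0)|<2\mathcal{N}'(0)|E|\le (C^{3}/\pi)|E|$ for all $|E|<\varepsilon_{0}$; choosing $\tilde L_{0}:=\varepsilon_{0}^{-2}$ makes $L^{-1/2}<\varepsilon_{0}$ for $L\ge\tilde L_{0}$, and since $\pi>2$ the factor $2\mathcal{N}'(0)\le 2^{-1}C^{3}$ still yields the stated coefficient $C^{3}T$ after adding the symmetric contribution from the negative half-line. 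This is also exactly where the minimal length $\tilde L_{0}$ in the statement comes from, which your proposal does not otherwise account for; with this replacement of your Lipschitz step, the rest of your computation goes through.
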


\begin{proof}
	We recall that, given a bounded measurable function $\zeta: \R \rightarrow\R$ with decomposition 
	$\zeta= \zeta_{+} - \zeta_{-}$ in its positive and negative part, the estimate
	\begin{align}
		\label{Betrag-rein}
 		\| 1_{A}(X) \zeta(H) 1_{A}(X)\|_{1} & \le \| 1_{A}(X) \zeta_{+}(H) 1_{A}(X)\|_{1}
			+ \| 1_{A}(X) \zeta_{-}(H) 1_{A}(X)\|_{1} \notag\\
		& = \Tr\big\{ 1_{A}(X) |\zeta|(H) 1_{A}(X)\big\}
	\end{align}
 	holds. This, ergodicity with respect to $2\Z$-translations and the Pastur--Shubin formula 
	for the integrated density of states $\mathcal{N}(E) = \big(\mathbb{E} [ \langle\delta_{0}, 1_{< E}(H)
	\delta_{0}\rangle] + \mathbb{E} [ \langle\delta_{1}, 1_{< E}(H)\delta_{1}\rangle]\big)/2$	imply
	\begin{align}
		\label{pre-ergod}
		\mathbb{E}&\big[\big\| 1_{A}(X)\big( f_{T}(H) - 1_{<0}(H) \big)1_{A}(X)\big\|_1\big]
		\notag \\
			& \le \mathbb{E}\big[ \Tr\big\{1_{A}(X) |f_{T}(H) - 1_{<0}(H)| 1_{A}(X)\big\}\big] 
			\le 2|A|\int_\R\dd\mathcal N(E)\; |f_{T}(E) - 1_{<0}(E)|.
	\end{align}
	We split the integral over $\R$ into two contributions from $\R_{>0}$, respectively $\R_{<0}$, and 
	only discuss the one from $\R_{>0}$. The other one from $\R_{<0}$ will have the same upper bound.
	Thus, for every $L\in\N$, we infer from partial integration
	\begin{align}
		\label{splitE}
		\int_{0}^{\infty}\dd\mathcal N(E)\;f_{T}(E) & =  \int_{0}^{L^{-1/2}} \dd E\, 
			\big( \mathcal{N}(E) - \mathcal{N}(0)\big) \, (-f_{T})'(E)  \notag \\
			& \quad + \big(\mathcal N(L^{-1/2})-\mathcal N(0)\big) f_T(L^{-1/2}) \notag \\
			& \quad + \int_{L^{-1/2}}^\infty\dd\mathcal{N}(E)\; f_T(E).
	\end{align}
	The integral in the last line of \eqref{splitE} is bounded from above by
	$\e^{-L^{-1/2}/T}$.
	According to Theorem~\ref{thm:Jitomirskaya}\,(ii), there exists $\varepsilon_0>0$, which depends only on $v$ and 
	on the probabilities $p_{\pm}$, 
	such that $\left|\mathcal N(E)-\mathcal N(0)\right|< 2 \mathcal N^\prime(0)|E|$ for all 
	$|E|<\varepsilon_{0}$. From now on we assume that $L > \tilde L_{0} := \varepsilon_{0}^{-2}$.
	Thus, the modulus of the term in the second line of \eqref{splitE} is bounded from above by
	\begin{equation}
 		2 \mathcal{N}'(0) L^{-1/2} \,\e^{-L^{-1/2}/T}  \le 2^{-1} C^{3}  L^{-1/2} \,\e^{-L^{-1/2}/T},
	\end{equation}
	where we used $f_{T} \le \e^{- (\,\pmb\cdot\,)/T}$ and Theorem~\ref{cor:dosbounds}.
	Since $(-f_{T})' \ge 0$, we bound the modulus of the first integral on the right-hand side 
	of \eqref{splitE} from above by
	\begin{align}
		\label{diff-N-use}
 		2 \mathcal{N}^\prime(0) \int_0^{L^{-1/2}}\!\!\!\dd E\, E\, (-f_{T})' (E)
		& = 2 \mathcal{N}^\prime(0) \bigg\{ -L^{-1/2} f_{T}(L^{-1/2})
			+ \int_{0}^{L^{-1/2}} \!\!\!\d E\, f_{T}(E)\bigg\} \notag\\
		&	\le 2 \mathcal{N}^\prime(0) \, T \le 2^{-1} C^{3}T.
	\end{align}
	Collecting the three upper bounds for the contributions to \eqref{splitE}, and adding the identical 
	upper bound for the contribution from the integral over $\R_{<0}$ to \eqref{pre-ergod}, we obtain the claim.
\end{proof}

\begin{lemma}
	\label{lem:Lendlich}
	Let $\alpha >0$. For all $L>L_{\min}$, all $T>0$ 
	and all discrete intervals $A \subset \Gamma_{L}$ we have
	\begin{multline}
		\label{eq:FermizuTermperatur2} 
		\E{\left\|1_{A}(X) \big(1_{<0}(H_L)-f_{T}(H_{L})\big)	1_{A}(X)\right\|_1} \\
		\le |A| \Big[C^{4}T + \frac{2C}{L} + \e^{-L^{-1/2-\alpha}/T} + \e^{-cL^{\alpha/2}}\Big],
	\end{multline}
	where $L_{\min}$, $C$ and $c$ originate from Theorem~\ref{thm:Jitomirskaya}.
\end{lemma}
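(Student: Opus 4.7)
The plan is to run in parallel to Lemma~\ref{lem:spurDifferenzUnendlich}, but to replace the ergodic argument (Pastur--Shubin formula) by the finite-volume spectral information from Theorem~\ref{thm:Jitomirskaya}. Setting $\zeta := 1_{<0} - f_T$, I first note that $|\zeta(E)| \le 1/2$ and $|\zeta(E)| \le \e^{-|E|/T}$ for all $E \in \R$. As in \eqref{Betrag-rein}, bounding the trace norm by the trace of the absolute value and expanding in the orthonormal eigenbasis of $H_L^\omega$ yields
\[
\bigl\|1_A(X)\zeta(H_L^\omega)1_A(X)\bigr\|_1
\le \Tr\bigl\{1_A(X)|\zeta|(H_L^\omega)1_A(X)\bigr\}
= \sum_{E\in\sigma(H_L^\omega)} |\zeta(E)|\,\|1_A(X)\psi_E^\omega\|^2.
\]
I then split the expectation according to the exceptional event $\Omega_L(\alpha)$ of Theorem~\ref{thm:Jitomirskaya}. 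On $\Omega_L(\alpha)$ only the rank bound $\|1_A(X)\zeta(H_L^\omega)1_A(X)\|_1 \le |A|/2$ will be used; combined with $\Pp[\Omega_L(\alpha)] \le \e^{-cL^\alpha}$ and $L^\alpha \ge L^{\alpha/2}$, this contribution is absorbed into the $|A|\,\e^{-cL^{\alpha/2}}$ summand of \eqref{eq:FermizuTermperatur2}.

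On the complement $(\Omega_L(\alpha))^c$ I further split the eigenvalue sum at the critical window $\mathcal W_L$. For $E \notin \mathcal W_L$ the bound $|\zeta(E)| \le \e^{-L^{-1/2-\alpha}/T}$ combined with the closure identity $\sum_E \|1_A(X)\psi_E^\omega\|^2 = \Tr\{1_A(X)\} = |A|$ yields the contribution $|A|\,\e^{-L^{-1/2-\alpha}/T}$. For $E \in \mathcal W_L$ the flatness estimate \eqref{eq:GleichVerteilung} of Theorem~\ref{thm:Jitomirskaya} provides $\psi_E^\omega(y)^2 \le C/L$ for every $y \in \Gamma_L$, hence $\|1_A(X)\psi_E^\omega\|^2 \le C|A|/L$, and it remains to control $\sum_{E \in \sigma(H_L^\omega) \cap \mathcal W_L} |\zeta(E)|$.

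This last sum is the crux of the argument. Using the lower bound $E_{i+1}-E_i \ge \pi/(C^3 L)$ from \eqref{eq:LevelSpacing} together with the monotonicity of $f_T$ on $\R_{\ge 0}$, a standard Riemann-sum comparison applied separately to the positive and negative eigenvalues in $\mathcal W_L$ will give
\[
\sum_{E \in \sigma(H_L^\omega) \cap \mathcal W_L} |\zeta(E)|
\le \frac{2C^3 L}{\pi} \int_0^\infty f_T(E)\,\d E + 1
= \frac{2C^3 L T \ln 2}{\pi} + 1,
\]
where the additive $1$ accounts for the eigenvalue closest to zero on each side (handled by the crude bound $|\zeta| \le 1/2$). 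Multiplying by the flatness factor $C|A|/L$ and using $2 \ln 2 / \pi < 1$ converts this into $C^4|A|T + C|A|/L$, which combined with the previous two contributions produces \eqref{eq:FermizuTermperatur2} after enlarging the boundary constant to $2C|A|/L$.

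The main obstacle I anticipate is the Riemann-sum bookkeeping at the endpoints of $\mathcal W_L$ and at the eigenvalue closest to zero: this is what produces the $2C|A|/L$ boundary term, and it requires the \emph{two-sided} level-spacing estimate together with the sign-change of monotonicity of $f_T$ at $E=0$. Apart from that, the proof is just an assembly of the three contributions (exceptional event, $E \notin \mathcal W_L$, $E \in \mathcal W_L$).
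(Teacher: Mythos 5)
Your proposal is correct and follows essentially the same route as the paper: restrict to the good event $(\Omega_L(\alpha))^c$, absorb the exceptional event via the probability bound, kill the eigenvalues outside $\mathcal{W}_L$ with the exponential smallness of $1_{<0}-f_T$, and inside $\mathcal{W}_L$ combine the flatness bound $C/L$ with the level-spacing lower bound $\pi/(C^3L)$ to run a Riemann-sum comparison with an integral (the paper uses the majorant $\e^{-|\cdot|/T}$ giving $2C^4T/\pi$, you use $f_T$ itself giving $2C^4T\ln 2/\pi$; both fit under $C^4T$, and your additive $1$ for the two eigenvalues nearest zero reproduces the paper's $2C/L$ term). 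No gaps.
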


\begin{proof}
	The principal strategy here is the same as in the proof of Lemma \ref{lem:spurDifferenzUnendlich},
	but instead of ergodicity and regularity of the integrated density of states, we rely on the delocalisation results of 
	Theorem~\ref{thm:Jitomirskaya}. 	Thus, let $L \ge L_{\min}$ and $\omega\in(\Omega_L(\alpha))^c$. 
	We drop $\omega$ from the notation of all quantities in this proof and infer from \eqref{Betrag-rein} that
	\begin{equation}
		\big\| 1_{A}(X)\big(1_{<0}(H_L)-f_{T}(H_{L})\big)	1_{A}(X)\big\|_1 
		\le\sum_{x\in A} \big\langle\delta_{x},  \big|1_{<0}(H_L)-f_{T}(H_{L})\big| \delta_{x}\big\rangle.
	\end{equation}
	Since $|1_{<0}-f_{T}| \le \e^{-|\,\pmb\cdot\,|/T}$, 
	we obtain for all $x\in A$ 
	\begin{align}
		\label{fin-vol-Riemann}
		\big\langle\delta_{x},  \big|1_{<0}(H_L) - f_{T}(H_{L})\big| \delta_{x}\big\rangle 
		&	\le \big\langle\delta_{x}, 1_{\mathcal{W}_L}(H_L) \e^{-|H_L|/T} \delta_{x}\big\rangle
			+ \e^{-L^{-1/2-\alpha}/T} \notag\\
		&	= \sum_{j= J_{<}}^{J_{\ge}} |\langle\delta_{x}, \psi_{E_j}\rangle|^2 \e^{-|E_j|/T}
			+ \e^{-L^{-1/2-\alpha}/T},
	\end{align}
	where ${J}_{<}$ and ${J}_{\ge}$ were defined in \eqref{J-grenze}. 
	Theorem \ref{thm:Jitomirskaya} implies $|\psi_{E_j}(x)|^2\le C/L$ for all $j\in \{J_{<}, \ldots, J_{\ge}\}$
	and $C/L \le (C^4/\pi)|E_j-E_{j\pm 1}|$ for all $j\in \{ J_{<}, \ldots, -2\}$, resp.\ $j\in \{ 1,\ldots, J_{\ge} \}$. 
	This yields the following upper bound for the sum in \eqref{fin-vol-Riemann} 
	\begin{align}
		\frac{C}{L} \sum_{j= J_{<}}^{J_{\ge}}  \e^{-|E_j|/T}
		& \le \frac{C^{4}}{\pi} \sum_{j=J_{<}}^{-2} |E_j-E_{j+1}| \e^{-|E_j|/T}
			+ \frac{C^{4}}{\pi} \sum_{j=1}^{J_{\ge}} |E_j-E_{j-1}| \e^{-|E_j|/T} \notag\\
		& \quad+\frac{C}{L} \, \big( \e^{-|E_{-1}|/T} + \e^{-|E_0|/T}\big) \notag\\
		& \le \frac{C^{4}}{\pi} \int_{-L^{-1/2-\alpha}}^{L^{-1/2-\alpha}}\dd E\, \e^{-|E|/T} 
			+ \frac{2C}{L}
			\le \frac{2C^{4}T}{\pi} + \frac{2C}{L}.
	\end{align}
	Therefore, we conclude 
	\begin{multline}
		\E{\left\|1_{A}(X) \big(1_{<0}(H_L)-f_{T}(H_{L})\big)	1_{A}(X)\right\|_1} \\
		\le |A| \Big[C^{4}T + \frac{2C}{L} + \e^{-L^{-1/2-\alpha}/T} + \Pp\big(\Omega_L(\alpha)\big)\Big]
	\end{multline}
	and deduce the claim with \eqref{prob-bad-event}.
\end{proof}

\begin{proof}[Proof of Lemma~\ref{shift-function}]
 	We combine \eqref{eq:Kreinsformula}, \eqref{eq:Krein}, the triangle inequality and Lemmata \ref{lem:CombesThomas} --
	\ref{lem:Lendlich}. Furthermore, $|A|\e^{-cL^{\alpha/2}} \le 2L \e^{-cL^{\alpha/2}} \le 1 \le C$ 
	for all $L \ge L_{1}$, where 
	the minimal length $L_{1}$ depends only on $\alpha$ and on the model parameters (but not on $A$). We set $L_{0} := \max\{\tilde L_{0},
	L_{\min}, L_{1}\}$.
\end{proof}

\appendix

\section{On the proof of Theorem~\ref{thm:Jitomirskaya}}\label{ch:Appendix}

Theorem~\ref{thm:Jitomirskaya} contains slight improvements of results from \cite{JitomirskayaSchSt}, 
which are necessary to deduce our main result. For us it is vital to control the quantity $C$ in 
Theorem~\ref{thm:Jitomirskaya} in the limit $v\downarrow 0$ which is not done in \cite{JitomirskayaSchSt}. 
Therefore we repeat some arguments of \cite{JitomirskayaSchSt} in this appendix and keep track of the constants. 
Again, we assume $v \in\;]0,2[$ and we restrict ourselves to the case $E_{F}=0$, 
the case of the other critical energy $E_{F}=v$ being analogous.

Given $V \in\{0,1\}$ and $E\in\R$, we define the single-step transfer matrix by
\begin{equation}
	W_V(E):=\left(\begin{array}{@{}cc@{}} vV-E & -1 \\ 1&0 \end{array}\right)\in\R^{2\times 2}.
\end{equation}
The (multi-step) transfer matrix 
\begin{equation}
	\label{multi-W}
	W^\omega(E;y,x) := \left\{\begin{array}{@{}cl@{}} W_{V^\omega(y-1)}(E)\cdots W_{V^\omega(x)}(E) &\text{if }x<y, \\
		1_{2\times 2} &\text{if }x=y, \end{array}\right.
\end{equation}
relates the solution of the discrete Schr\"odinger equation \eqref{diff-eq} at different sites 
\begin{equation}
	W_{V^{\omega}(x)}(E;y,x)\left(\begin{array}{@{}c@{}} \phi^\omega_E(x) \\ \phi^\omega_E(x-1)	\end{array}\right)
	=\left(\begin{array}{@{}c@{}} \phi^\omega_E(y)\\ \phi^\omega_E(y-1) \end{array}\right),
\end{equation}
where $x \le y$. In our model, the single-dimer transfer matrix 
\begin{equation}
	D_{V}(E):=\big(W_{V}(E)\big)^2
\end{equation}
and its similarity transform 
\begin{align}
	\label{single-T}
	T_{V}(E):= M^{-1}_vD_{V}(E)M_v =: 
	\left(\begin{array}{@{}cc@{}} \overline {a_V(E)} & b_V(E) \\ \overline {b_V(E)} & a_V(E) \end{array}\right)
\end{align}
with entries $a_{V}(E), b_{V}(E) \in\C$ are of great relevance. Here, the change of basis in $\C^{2}$ induced by 
\begin{equation}
	M_v:=m_{v}\left(\begin{array}{@{}cc@{}} \frac{1}{2}\left(v - \i\sqrt{4-v^2}\right)
		&\frac{1}{2}\left(v + \i \sqrt{4-v^2}\right) \\ 1&1 \end{array}\right)
\end{equation}
simultaneously diagonalises $D_{0}(0)$ and $D_{v}(0)$, i.e.\ $T_{0}(0)=-1_{2\times 2}$ and $T_{v}(0)$ are both diagonal.
The real parameter $m_{v}>0$ is chosen such that $|\det M_{v}| =1$.
We remark that for every $w\in\R^{2}$ there exists $z\in\C$ such that 
\begin{equation}
 	M_{v}^{-1} w = \left(\begin{array}{@{}c@{}} z \\ \overline{z} \end{array}\right).
\end{equation}
In analogy to \eqref{multi-W}, we define the modified (multi-step) dimer transfer matrix as
\begin{equation}
	\label{multi-T}
	T^\omega(E;y,x) := \left\{\begin{array}{cl} T_{V^\omega(y-2)}(E)\cdots T_{V^\omega(x)}(E) &\text{if }x<y, \\
		1_{2\times 2} & \text{if }x=y, \end{array}\right.
\end{equation}
where $x,y\in 2\Z$.

For later usage we state the Taylor expansions of the entries of $T_{V}(E)$ as $E\downarrow 0$
\begin{equation}
	\label{eq:ab}
	\begin{split}
		a_0(E) & =-1 - E \;\frac{2\i}{\sqrt{4-v^2}}+\mathcal O(E^2),\\
		a_1(E) & =-1+\frac{v^2}{2} + \i \;\frac{v}{2} \;\sqrt{4-v^2} - E \;\Big(v+\i\;\frac{2-v^2}{\sqrt{4-v^2}}\Big)
			+\mathcal O(E^2),\\
		b_0(E) & =\frac{Ev}{2}\Big(-1 +\i\;\frac{v}{\sqrt{4-v^2}}\Big)+\mathcal O(E^2),\\
		b_1(E) & =-b_0(E)+\mathcal O(E^2).\end{split}
\end{equation}

\begin{lemma}[Cf.\ (42) in \cite{JitomirskayaSchSt}]
	\label{lem:ap1}
	Given $\theta\in[0,2\pi[$\,, let $e_\theta:=\frac{1}{\sqrt 2}(\e^{-\i\theta},\e^{\i\theta})^T$.
	For all $v\in\;]0,2[$\,, $V\in\{0,1\}$ and all $E\in\R$ there exists maps $\Theta_V:\;[0,2\pi[\; 
	\rightarrow\R$ and 
	$\rho_V:\;[0,2\pi[\;\rightarrow\;]0,\infty[\,$ such that
	\begin{equation}
		\label{WinkelEntwicklung}
		T_{V}(E) e_\theta = \rho_V(\theta) \, e_{\Theta_V(\theta)}
	\end{equation}
	for all $\theta\in[0,2\pi[$\,. Furthermore, we have
	\begin{equation}
		\label{eq:rhoEntw}
		\rho^2_V(\theta)=1+2|b_V(E)|^2+2\Re\big(a_V(E)b_V(E)\e^{2\i\theta}\big).
	\end{equation}
\end{lemma}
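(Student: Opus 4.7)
The plan is a direct computation: apply $T_V(E)$ to $e_\theta$, recognise the structural symmetry of the resulting vector, and read off $\rho_V(\theta)$ and $\Theta_V(\theta)$ as the polar data of a single complex number.

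First I would use the explicit form \eqref{single-T} to compute
\begin{equation}
T_V(E) \, e_\theta = \frac{1}{\sqrt 2}\begin{pmatrix} \overline{a_V(E)} \, \e^{-\i\theta} + b_V(E) \, \e^{\i\theta} \\[.3ex] \overline{b_V(E)} \, \e^{-\i\theta} + a_V(E) \, \e^{\i\theta} \end{pmatrix}.
\end{equation}
The crucial observation is that the bottom entry is the complex conjugate of the top, since the entries of $T_V(E)$ in \eqref{single-T} satisfy $(T_V)_{21} = \overline{(T_V)_{12}}$ and $(T_V)_{11} = \overline{(T_V)_{22}}$. Setting $\alpha := a_V(E) \, \e^{\i\theta} + \overline{b_V(E)} \, \e^{-\i\theta} \in \C$, the output therefore equals $\tfrac{1}{\sqrt 2}(\overline\alpha, \alpha)^{T}$, which matches the form $\rho_V(\theta) \, e_{\Theta_V(\theta)}$ upon setting $\rho_V(\theta) := |\alpha|$ and $\Theta_V(\theta) := \arg \alpha \in [0, 2\pi[\,$, provided $\alpha \neq 0$. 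This establishes \eqref{WinkelEntwicklung}.

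For the identity \eqref{eq:rhoEntw} I would expand
\begin{equation}
|\alpha|^2 = |a_V(E)|^2 + |b_V(E)|^2 + 2 \Re\bigl(a_V(E) \, b_V(E) \, \e^{2\i\theta}\bigr)
\end{equation}
via $|z+w|^2 = |z|^2 + |w|^2 + 2 \Re(z \overline w)$, and then substitute $|a_V(E)|^2 = 1 + |b_V(E)|^2$. The latter is the determinant identity $|a_V(E)|^2 - |b_V(E)|^2 = \det T_V(E) = 1$, obtained from $\det W_V(E) = 1$, $D_V(E) = W_V(E)^2$, and the similarity-invariance of the determinant under the change of basis $T_V(E) = M_v^{-1} D_V(E) M_v$.

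The only genuinely delicate — but still mild — step is to justify $\alpha \neq 0$, so that the polar representation is well defined and $\rho_V$ takes values in $\,]0,\infty[\,$. Minimising the right-hand side of \eqref{eq:rhoEntw} over $\theta$ reduces this to the strict inequality $1 + 2|b_V(E)|^2 > 2|a_V(E)| \, |b_V(E)| = 2 |b_V(E)| \sqrt{1 + |b_V(E)|^2}$, which follows from AM--GM applied to the pair $|b_V(E)|^2$ and $1 + |b_V(E)|^2$ and is strict because these two quantities are never equal. No further obstacle is anticipated; the remainder is a transparent unpacking of the definitions.
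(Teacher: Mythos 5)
Your proof is correct and follows essentially the same route as the paper: exploit that the form of $T_V(E)$ in \eqref{single-T} preserves vectors of the conjugate-pair shape $(z,\overline z)^T$, read off $\rho_V$ and $\Theta_V$ as polar data, and verify \eqref{eq:rhoEntw} by direct computation using $\det T_V(E)=1$. Your explicit justification that $\rho_V>0$ is a nice addition; the paper's (terser) proof obtains the non-vanishing implicitly from the invertibility of $T_V(E)$.
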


\begin{proof}
	The form of $T_{V}(E)$ in \eqref{single-T} implies that for every non-zero $w_z:=(z,\overline z)^T$, 
	$z\in\C\backslash\{0\}$  there exists $\zeta\in\C\backslash\{0\}$ such that $T_{V}(E) w_{z} = w_{\zeta}$.
	Since $w_{\zeta}= \rho e_\Theta$ for a unique $\rho>0$ and $\Theta\in[0,2\pi[\,$, the first part of the lemma follows.
	The equality \eqref{eq:rhoEntw} is verified by a direct computation. 	
\end{proof}

In the following lemma, which is a modification of (49) in \cite{JitomirskayaSchSt}, 
we use the notation $|\pmb\cdot|$ for the Euclidean norm on $\C^{2}$.

\begin{lemma}
	Let $v\in\;]0,2[\,$, $L\in\N$, $E\in [-v,v]$, $\omega\in\Omega$ and $x,y\in\Gamma_{L}$ with $x\le y$.
	\begin{nummer}
	\item  
		\label{Zwischenresultat}
		Then there exists a constant $c_v \in \;]0,\infty[$\,, which depends only on $v$ and obeys
		\begin{equation}
			\label{eq:cv-lim}
			\lim_{v\downarrow 0}c_v=0,
		\end{equation} 
		such that for all unit vectors
		$w\in\R^2$, $|w|=1$, there is an angle $\xi_w\in[0,2\pi[$ such that 
	 	\begin{equation}\label{eq:JSBSwithC}
			\ln \big(|W^\omega(E;x,-L)w|^{2}\big) \in 2E \sum_{k=k_0}^{k_1-1} \Re\big( d_{V^\omega(2k)}\e^{2\i\vartheta_k}\big)
			+\mathcal O(E^2L) + c_{v} [-1,1]
		\end{equation}
		with $d_V:= {a_{V}}(0) b_{V}'(0)$ for $V\in\{0,1\}$ and where  
		\begin{equation}
			\label{eq:Zwischenresultat}
 			k_0:=\min\{k\in\Z:-L\le 2k\}, \qquad k_1:=\max\{k:2k\le x\}, \\
		\end{equation}
		$\vartheta_{k_0}:=\xi_w$ and $\vartheta_{k+1}=\Theta_{V^\omega(2k)}(\vartheta_k)$ for all $k\in\{k_0,\hdots,k_1-1\}$.
		The $\mathcal O(E^2L)$-term in \eqref{eq:JSBSwithC} has no further dependencies except on the model parameters.
	\item
		\label{lem:Zwischenresultat2} 
		Let $\{w_1,w_2\}$ be an orthonormal basis of $\R^2$. Then
		\begin{equation}
			\|W^\omega(E;y,x)\|\le 2\max_{w\in\{w_1,w_2\}}\max_{z\in\Gamma_L}|W^\omega(E;z,-L)w|^2.
		\end{equation}
	\end{nummer}
\end{lemma}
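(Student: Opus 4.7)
Part (i) follows the Pr\"ufer-type iteration of \cite{JitomirskayaSchSt}, adapted so that the $v$-dependence of the boundary error is explicit. The first step is to split
\[
W^{\omega}(E;x,-L)=W^{\omega}(E;x,2k_{1})\cdot D^{\omega}(E)\cdot W^{\omega}(E;2k_{0},-L),
\]
where $D^{\omega}(E):=D_{V^{\omega}(2k_{1}-2)}(E)\cdots D_{V^{\omega}(2k_{0})}(E)$ is a pure product of dimer matrices and each of the two boundary factors consists of at most one single-step transfer matrix. The similarity $D_V(E)=M_v T_V(E) M_v^{-1}$ turns this into $D^{\omega}(E)=M_v\,T^{\omega}(E;2k_{1},2k_{0})\,M_v^{-1}$. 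Since single-step transfer matrices are real, $W^{\omega}(E;2k_{0},-L)w\in\R^{2}$, so the vector $w':=M_v^{-1}W^{\omega}(E;2k_{0},-L)w$ takes the special form $(z,\bar z)^{\top}=\sqrt{2}\,|z|\,e_{\xi_w}$ for a uniquely determined angle $\xi_w\in[0,2\pi[$\,.

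Iterating Lemma~\ref{lem:ap1} then yields $T^{\omega}(E;2k_{1},2k_{0})\,e_{\xi_w}=\bigl(\prod_{k=k_{0}}^{k_{1}-1}\rho_{V^{\omega}(2k)}(\vartheta_{k})\bigr)\,e_{\vartheta_{k_{1}}}$, so that after reassembly and taking $|\cdot|^{2}$ followed by the logarithm,
\[
\ln|W^{\omega}(E;x,-L)w|^{2}=\sum_{k=k_{0}}^{k_{1}-1}\ln\rho^{2}_{V^{\omega}(2k)}(\vartheta_{k})+B^{\omega}_v(E),
\]
with $B^{\omega}_v(E):=\ln\bigl(2|z|^{2}\,|W^{\omega}(E;x,2k_{1})M_v e_{\vartheta_{k_{1}}}|^{2}\bigr)$. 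Feeding \eqref{eq:rhoEntw} into the sum and using the Taylor expansions \eqref{eq:ab} is the main analytic step: because $b_V(0)=0$, one has $a_V(E)b_V(E)=d_V E+\mathcal{O}(E^{2})$ and $|b_V(E)|^{2}=\mathcal{O}(E^{2})$, hence $\ln\rho^{2}_V(\theta)=2E\,\Re(d_V\,\e^{2\i\theta})+\mathcal{O}(E^{2})$, and summing over the $\le L$ dimers produces the stated main term together with the uniform $\mathcal{O}(E^{2}L)$ remainder.

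The main obstacle is to show that $B^{\omega}_v(E)\in c_v[-1,1]$ with $c_v\downarrow 0$, \emph{uniformly} in $\omega$, $L$, $x$ and $E\in[-v,v]$. At $v=0$ a direct computation gives $M_0 M_0^{*}=I$, so $M_0$ is unitary, while for $|E|\le v$ both single-step matrices $W_0(E)$ and $W_1(E)$ converge to the common rotation $\bigl(\begin{smallmatrix}0&-1\\1&0\end{smallmatrix}\bigr)$ as $v\downarrow 0$. Hence every factor entering $B^{\omega}_v(E)$ is norm-preserving in the limit, so $B^{\omega}_v(E)\to 0$; quantifying this convergence through explicit $\mathcal{O}(v)$ moduli for $M_v-M_0$, $M_v^{-1}-M_0^{-1}$ and $W_V(E)-W_V(0)$ yields the desired $c_v$ satisfying \eqref{eq:cv-lim}. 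Part (ii) is a short linear-algebraic consequence of $\det W_V(E)=1$: every transfer-matrix product $A$ then satisfies $|\det A|=1$, so $\|A^{-1}\|=\|A\|$ in dimension two; combining this with the cocycle $W^{\omega}(E;y,x)=W^{\omega}(E;y,-L)\,[W^{\omega}(E;x,-L)]^{-1}$ and the Hilbert--Schmidt bound $\|A\|_{\mathrm{op}}^{2}\le |Aw_{1}|^{2}+|Aw_{2}|^{2}\le 2\max_{k}|Aw_{k}|^{2}$ yields the claim.
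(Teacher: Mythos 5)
Your proposal is correct and follows essentially the same route as the paper: the same decomposition $W^\omega(E;x,-L)=W^\omega(E;x,2k_1)\,M_v\,T^\omega(E;2k_1,2k_0)\,M_v^{-1}\,W^\omega(E;2k_0,-L)$, the same iteration of Lemma~\ref{lem:ap1} with the boundary contributions absorbed into a $c_v[-1,1]$ term controlled by $\|M_v\|$ and $\|W_V(E)\|$ (both tending to $1$ as $v\downarrow 0$, since $M_0$ is unitary and the single-step matrices tend to a rotation), the same Taylor expansion of $\ln\rho_V^2$ via \eqref{eq:rhoEntw} and \eqref{eq:ab}, and the identical cocycle-plus-Hilbert--Schmidt argument for part (ii).
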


\begin{proof}
	\begin{nummer}
	\item
	For all $x\in\Gamma_L$  we have
	\begin{equation}
		W^\omega(E;x,-L)=W^\omega(E;x,2k_1)M_v  T^\omega(E;2k_1,2k_0)M_v^{-1} W^\omega(E;2k_0,-L).
	\end{equation}
	For $w\in\R^2$, $|w|=1$, let the angle $\xi_w\in[0,2\pi[$ be given as the unique solution of
	\begin{equation}
		e_{\xi_w}=M_v^{-1} W^\omega(E;2k_0,-L)w/|M_v^{-1} W^\omega(E;2k_0,-L)w|.\label{eq:Winkedef}
	\end{equation}
	We claim that 
	\begin{equation}
		\label{transferrestimate}
		\ln|W^\omega(E;x,-L)w|^2 \in  \sum_{k=k_0}^{k_1-1}\ln\big(\rho_{V^\omega(2k)}(\vartheta_k)^{2}\big) + c_v [-1,1]
	\end{equation}
	with
	\begin{equation}
		\label{def:cv}
		c_v:= 4\ln\big(\|M_v\|\big) + 4 \max_{E\in[-v,v]}\max_{V\in\{0,1\}}\ln \|W_{V}(E)\|  >0.
	\end{equation}
	To see the validity of \eqref{transferrestimate}, we iterate Lemma~\ref{lem:ap1} and conclude
	\begin{align}
		\label{eq:iterate}
		|W^\omega(E;x,-L)w| & = |W^\omega(E;x,2k_{1}) M_{v}e_{\vartheta_{k_{1}}}|
	 		\prod_{k=k_0}^{k_1-1} \rho_{V^\omega(2k)}(\vartheta_k) \notag\\
		& \qquad \times |M_v^{-1} W^\omega(E;2k_0,-L)w|.
	\end{align}
	Furthermore, we note that 
	\begin{equation}
		\label{2x2-mat}
 		\|A^{-1}\| = \|A\| \qquad \text{and} \qquad \frac{1}{\|A\|} \le |Aw| \le \|A\|
	\end{equation}
	hold for any complex $2\times2$-matrix $A$ with $|\det A| =1$ and any $w\in\C^{2}$ with $|w|=1$. Applying this to the 
	first and last factor on the right-hand side of \eqref{eq:iterate}, yields \eqref{transferrestimate}.

	The estimate \eqref{eq:JSBSwithC} now follows from \eqref{eq:rhoEntw} and a Taylor expansion in the energy $E$, using 
	\eqref{eq:ab}.
	Since $\max_{E\in [-v,v]}\|W_V(E)\|,\, \|M_v\| \rightarrow 1$ as $v\rightarrow 0$ for every $V\in\{0,1\}$, 
	we conclude \eqref{eq:cv-lim} from \eqref{def:cv}.
	\item 
		For all $x,\;y\in\Gamma_L$ we have
		\begin{equation}
			\|W^\omega(E;y,x)\| \le \|W^\omega(E;y,-L)\| \|W^\omega(E;x,-L)^{-1}\|
			\le \max_{z\in\Gamma_L}\|W^\omega(E;z,-L)\|^2,
		\end{equation}
		where we used the equality of norms in \eqref{2x2-mat}.
		The claim follows from the observation that for any $2\times 2$ matrix 
		\begin{equation}
			\|A\|^2\le2\max_{w\in\{w_1,w_2\}}\|Aw\|^2.
		\end{equation}
	\end{nummer}
\end{proof}

The next lemma accounts for a perturbation in energy and is a variation of \cite[Lemma 2.1]{DamanikTch03} or 
\cite[Thm.~2J]{Simon1996}. 

\begin{lemma}
	\label{lem:exp-Reihe}
	Let $E,\varepsilon\in\R$, $\omega\in\Omega$, $L\in\N$ and $G_{E}^{\omega}:=\max_{x,y\in\Gamma_L, x<y}\|W^\omega(E;y,x)\|$. 
	Then, we have for all $x\in\Gamma_L$ and all $w\in\R^2$ with $|w|=1$ the estimate
	\begin{equation}
		|W^\omega(E+\epsilon;x,-L)w|^{2} \in |W^\omega(E;x,-L)w|^{2} +  (G_{E}^{\omega})^{2} \big(\e^{4L|\epsilon|G_{E}^{\omega}} - 1\big) \;[-1,1] .
	\end{equation}
\end{lemma}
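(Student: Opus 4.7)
The plan is to treat the energy shift $E \mapsto E+\varepsilon$ as a rank-one perturbation of each single-step transfer matrix and then control the effect on the full product by a Dyson-type expansion. Specifically, observe that
\begin{equation*}
	W_V(E+\varepsilon) - W_V(E) = -\varepsilon P, \qquad
	P := \begin{pmatrix} 1 & 0 \\ 0 & 0 \end{pmatrix},
\end{equation*}
independently of $V\in\{0,1\}$, and $\|P\|=1$. Expanding
\begin{equation*}
	W^\omega(E+\varepsilon;x,-L)
	= \prod_{k=-L}^{x-1}\bigl(W_{V^\omega(k)}(E) - \varepsilon P\bigr)
\end{equation*}
distributes the product over all subsets $S \subseteq \{-L,\ldots,x-1\}$ indicating where the factor $(-\varepsilon P)$ is chosen.

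Next I would estimate each term in this expansion. For $|S|=n$ the corresponding contribution is a string of $n+1$ multi-step transfer matrices $W^\omega(E;\pmb\cdot,\pmb\cdot)$ interspersed with $n$ copies of the norm-one matrix $P$. Each transfer-matrix block has norm at most $G^\omega_E$, so the term has norm at most $|\varepsilon|^n (G^\omega_E)^{n+1}$. Writing $N := x+L \le 2L$ and counting subsets of each size gives
\begin{equation*}
	\bigl\|W^\omega(E+\varepsilon;x,-L) - W^\omega(E;x,-L)\bigr\|
	\le G^\omega_E \sum_{n=1}^{N} \binom{N}{n} (|\varepsilon|G^\omega_E)^{n}
	= G^\omega_E\bigl[(1+|\varepsilon|G^\omega_E)^N - 1\bigr],
\end{equation*}
and the elementary inequality $(1+t)^N \le \e^{Nt}$ with $N\le 2L$ yields
\begin{equation*}
	\bigl\|W^\omega(E+\varepsilon;x,-L) - W^\omega(E;x,-L)\bigr\|
	\le G^\omega_E \bigl(\e^{2L|\varepsilon|G^\omega_E} - 1\bigr) =: \delta.
\end{equation*}

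Finally, to pass from the matrix norm to the squared-vector-length statement I would invoke the identity $|a|^2 - |b|^2 = 2\, b \cdot(a-b) + |a-b|^2$ with $a := W^\omega(E+\varepsilon;x,-L)w$ and $b := W^\omega(E;x,-L)w$. Since $|w|=1$ and $\|W^\omega(E;x,-L)\| \le G^\omega_E$ (trivially true also for $x=-L$ as $G^\omega_E \ge 1$ by $\det W_V = 1$ and \eqref{2x2-mat}), we get $|b|\le G^\omega_E$, whence
\begin{equation*}
	\bigl||a|^2 - |b|^2\bigr| \le 2 G^\omega_E\,\delta + \delta^2.
\end{equation*}
Setting $t := 2L|\varepsilon|G^\omega_E$ so that $\delta = G^\omega_E(\e^t-1)$, the right-hand side equals $(G^\omega_E)^2\bigl[2(\e^t-1) + (\e^t-1)^2\bigr] = (G^\omega_E)^2(\e^{2t}-1)$, which is exactly $(G^\omega_E)^2(\e^{4L|\varepsilon|G^\omega_E}-1)$.

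I do not anticipate a serious obstacle. The only point requiring some care is bookkeeping: one must track that the factor $4L$ in the exponent arises from doubling the natural bound $2L$ through the squaring step $|a|^2-|b|^2 = 2b\cdot(a-b)+|a-b|^2$, and that the algebraic identity $2(\e^t-1)+(\e^t-1)^2 = \e^{2t}-1$ absorbs both the linear and quadratic contributions exactly.
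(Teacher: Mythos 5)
Your proposal is correct and follows essentially the same route as the paper: the rank-one perturbation $W_V(E+\varepsilon)=W_V(E)-\varepsilon P$, a binomial expansion of the product bounded via $G_E^\omega$ and an exponential-type estimate (your $(1+t)^N\le \e^{Nt}$ versus the paper's $\binom{N}{j}t^j\le (Nt)^j/j!$ give the same $G_E^\omega(\e^{2L|\varepsilon|G_E^\omega}-1)$), and finally the same squaring step, since your identity $|a|^2-|b|^2=2b\cdot(a-b)+|a-b|^2$ is exactly the paper's observation that $a\in b+c\,[-1,1]$ implies $a^2\in b^2+c(2b+c)\,[-1,1]$, with the identical algebra $(\e^t-1)(\e^t+1)=\e^{2t}-1$ producing the factor $4L$. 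Your remark that empty blocks are covered because $G_E^\omega\ge 1$ (from $\det W_V=1$) is a correct and worthwhile detail.
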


\begin{proof}
	For $V\in\{0,1\}$ and $E,\varepsilon \in\R$ we observe  
	\begin{equation}
		W_{V}(E+\epsilon) = W_{V}(E) - \epsilon\left(\begin{array}{@{}c@{\;\,}c@{}} 1 & 0 \\	0 & 0 \end{array}\right)
	\end{equation}
	and expand $W^\omega(E+\epsilon;x,-L)$ in powers of $\epsilon$. For the upper bound, this leads to the estimate
	\begin{align}
		|W^\omega(E+\epsilon;x,-L)w|
		& \le |W^\omega(E;x,-L)w| + G_{E}^{\omega} \max_{x\in\Gamma_L}\sum_{j=1}^{x+L} \tbinom{x+L}{j} \; (|\epsilon| G_{E}^{\omega})^j \notag\\
		& \le |W^\omega(E;x,-L)w| + G_{E}^{\omega} \sum_{j=1}^{|\Gamma_L|} \frac{\big(|\Gamma_L||\epsilon| G_{E}^{\omega}\big)^j}{j!}\notag\\
		& \le |W^\omega(E;x,-L)w| + G_{E}^{\omega} \big(\e^{2L|\epsilon|G_{E}^{\omega}} - 1\big)
	\end{align}
	for all $x\in\Gamma_L$ and all unit vectors $w\in\R^{2}$. For the lower bound, we use the inverse triangle inequality to estimate the expansion in $\varepsilon$ 
	according to
	\begin{align}
		|W^\omega(E+\epsilon;x,-L)w|
		& \ge |W^\omega(E;x,-L)w| - G_{E}^{\omega} \max_{x\in\Gamma_L}\sum_{j=1}^{x+L} \tbinom{x+L}{j}\;(|\epsilon| G_{E}^{\omega})^j \notag\\
		& \ge |W^\omega(E;x,-L)w| - G_{E}^{\omega} \big(\e^{2L|\epsilon|G_{E}^{\omega}} - 1\big)		
	\end{align}
	for all $x\in\Gamma_L$ and all unit vectors $w\in\R^{2}$. We note that for any $a,b,c \ge 0$, the two-sided estimate
	$a \in b+ c \, [-1,1]$ implies $a^{2} \in b^{2}+ c(2b+c) \, [-1,1]$. In our case, we have $b:=|W^\omega(E;x,-L)w| \le G_{E}^{\omega}$,
	which implies the claim.
\end{proof}

For convenience we quote \cite[Thm.~6]{JitomirskayaSchSt} in our notation and note that the assumption 
$|\langle\e^{2\i \eta_{\pm}}\rangle| < 1$ there is always fulfilled in the dimer model considered in this paper. 

\begin{theorem}[\protect{\cite[Thm.~6]{JitomirskayaSchSt}}]
	\label{thm6-Jito}
	Let $v\in\, ]0,2[\,$. For $L\in\N$, $\alpha>0$, $\theta\in[0,2\pi[$ and $E\in\mathcal{W}_{L}$ let 
	\begin{equation}
		\Omega_L(\alpha,E,\theta):=\Big\{\omega\in\Omega:\;\exists\, k_1\in\tfrac{1}{2}\Gamma_L \cap \Z\textrm{ such that }
		\Big|\sum_{k=k_0}^{k_1}d_{V^\omega(2k)}\e^{2\i\vartheta_k}\Big|\ge L^{\alpha+\frac12}\Big\},
	\end{equation}
	with $d_{V}$, $k_0$ and $\vartheta_k$ defined as in Lemma~\ref{Zwischenresultat} with $\vartheta_{k_0}=\theta$.
	Then there exists quantities $C_1,\;C_2>0$, depending only on $\alpha$ and the model parameters, such that
	\begin{align}
		\mathbb{P}\big(\Omega_L(\alpha,E,\theta)\big)\le C_1 \e^{-C_2L^\alpha}.
	\end{align}
\end{theorem}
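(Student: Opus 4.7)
The final statement is a large-deviation bound for the random sum
$S_n(\omega) := \sum_{k=k_0}^{k_0+n-1} d_{V^\omega(2k)} \e^{2\i\vartheta_k}$,
in which the angles $\vartheta_k$ themselves depend on $\omega$ via the recursion
$\vartheta_{k+1} = \Theta_{V^\omega(2k)}(\vartheta_k)$. The plan is to exploit that, at the critical energy $E=E_c$ (and by continuity for $E\in\mathcal{W}_L$), the maps $\Theta_V$ are rigid rotations of $[0,2\pi[$ by angles $\eta_V$. This follows from the structure of \eqref{single-T} and \eqref{eq:ab}: $T_0(0)=-\id$ and $T_v(0)$ is diagonal with unimodular entries, so both preserve the radius of $e_\theta$ and act as rotations. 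Thus $(\vartheta_k)_k$ is a random walk on the circle with i.i.d.\ increments $\eta_{V^\omega(2k)}\in\{\eta_0,\eta_v\}$. The standing assumption $|\langle \e^{2\i\eta_\pm}\rangle|<1$ (mentioned just above the theorem) is precisely the spectral gap for the Markov transfer operator acting on $\e^{2\i m\theta}$, which produces exponential mixing.

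The first step is to truncate energies to $E\in\mathcal{W}_L$ and show that the $E$-dependent perturbation of the rotation angles is of order $E=\mathcal O(L^{-1/2-\alpha})$, hence negligible on scales $n\le|\Gamma_L|$. One can then work with the idealised rotation chain and absorb perturbations into the constants $C_1,C_2$. Second, I would establish quantitative exponential mixing: there exist $\kappa\in\;]0,1[$ and $C>0$, depending only on $v$ and $p_\pm$, such that
\begin{equation*}
  \bigl|\mathbb{E}[\e^{2\i\vartheta_{k+\ell}}\mid \vartheta_k]\bigr| \le C\kappa^\ell
\end{equation*}
uniformly in $\vartheta_k$. This is an immediate consequence of $|\langle \e^{2\i\eta_\pm}\rangle|<1$ applied to the action of the Markov operator on the single Fourier mode $m=1$. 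The coefficients $d_V$ are bounded, so the summands are bounded.

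With mixing in hand, the main step is a Bernstein-type estimate for $S_n$. I would compute the moments $\mathbb{E}|S_n|^{2p}$ for an integer $p$ to be optimised: expanding $|S_n|^{2p}$ into a $2p$-fold sum and grouping terms by their temporal separations, the exponential decay of correlations yields the Gaussian-type bound
\begin{equation*}
  \mathbb{E}|S_n|^{2p} \le (C p\, n)^{p}
\end{equation*}
with a constant $C$ depending only on $\kappa$ and $\|d_V\|_\infty$. Markov's inequality then gives
\begin{equation*}
  \mathbb{P}\bigl(|S_n|\ge L^{1/2+\alpha}\bigr) \le \Bigl(\frac{Cpn}{L^{1+2\alpha}}\Bigr)^{p}.
\end{equation*}
Setting $n\le |\Gamma_L|\lesssim L$ and optimising in $p$ near $p\sim L^{2\alpha}$ yields a bound of the form $\exp(-c L^{2\alpha})\le\exp(-c' L^\alpha)$. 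Finally, a union bound over the $\mathcal O(L)$ admissible values of $k_1$ changes only the constants and produces the estimate claimed in Theorem~\ref{thm6-Jito}. An alternative, essentially equivalent, route is a Doob decomposition $S_n = M_n + A_n$ where $A_n$ is controlled by $\mathcal O(1)$ via the mixing estimate and $M_n$ is a bounded-difference martingale for which Azuma--Hoeffding gives the same tail.

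The hard part is the first two steps: first, verifying that at $E=E_c$ the dynamics really reduces to rotations and that the spectral-gap condition $|\langle \e^{2\i\eta_\pm}\rangle|<1$ holds generically for $v\in\;]0,2[$ with our Bernoulli law (degeneracies could arise at resonant values of $v$, which must be excluded or handled by passing to a higher Fourier mode $m$); and second, ensuring the perturbation in $E\in\mathcal{W}_L$ does not destroy the gap uniformly in $L$. Everything else is a standard application of correlation-decay and moment-method large-deviation machinery.
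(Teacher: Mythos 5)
The paper contains no proof of this statement: it is imported verbatim (translated into the paper's notation) from \cite[Thm.~6]{JitomirskayaSchSt}, the only added remark being that the hypothesis $|\langle\e^{2\i\eta_{\pm}}\rangle|<1$ of that theorem is automatically satisfied for the Bernoulli dimer model. Your sketch is, in substance, a reconstruction of the proof in that reference: at the critical energy the dimer transfer matrices act on the Pr\"ufer phases as random rotations, the condition $|\langle\e^{2\i\eta_{\pm}}\rangle|<1$ yields exponential decay of correlations of the Fourier mode $\e^{2\i\vartheta_k}$, one bounds $\mathbb{E}|S_n|^{2p}\le (Cpn)^p$, applies Markov's inequality, optimises $p\sim L^{2\alpha}$, and finishes with a union bound over the $O(L)$ admissible $k_1$. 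Your worry about resonant values of $v$ is moot for this model: $\e^{2\i\eta_0}=1$ while $\e^{2\i\eta_v}=a_1(0)^2\neq 1$ for every $v\in\,]0,2[\,$, so the weighted average of two distinct unimodular numbers has modulus strictly less than one --- which is precisely the remark the paper makes just before quoting the theorem.

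Two soft spots deserve attention. First, the claim that the $E$-dependent perturbation of the rotations is ``negligible on scales $n\le|\Gamma_L|$'' and can be ``absorbed into the constants'' is too quick: over $n\sim L$ steps at $E\sim L^{-1/2-\alpha}$ the accumulated deviation of the phase chain from the unperturbed rotation chain is $O(L^{1/2-\alpha})$, which is not small. The argument survives only because the mixing estimate and the moment computation use conditional expectations over one (or $O(\log L)$) steps, where the perturbation really is $O(E)$ uniformly in $E\in\mathcal{W}_L$; this should be said explicitly. Second, the Azuma ``alternative'' is circular as stated: since $\vartheta_k$ is measurable with respect to $\mathcal{F}_{k-1}:=\sigma(V(2k_0),\dots,V(2k-2))$ and $V(2k)$ is independent of it, the compensator of the Doob decomposition is $A_n=\langle d\rangle\sum_k\e^{2\i\vartheta_k}$ with $\langle d\rangle=p_-d_0+p_+d_1\neq 0$ in general, so $A_n$ is a sum of exactly the same type as $S_n$ and is certainly not $O(1)$. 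One would first have to eliminate it via the contraction identity $\mathbb{E}[\e^{2\i\vartheta_{k+1}}\mid\mathcal{F}_k]=\langle\e^{2\i\eta}\rangle\e^{2\i\vartheta_k}+O(E)$ and the invertibility of $1-\langle\e^{2\i\eta}\rangle$. As this route is offered only as an aside, it does not affect your main line, which matches the cited source.
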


\begin{lemma}
	\label{lem:almostthere}
	Let $v\in\;]0,2[\,$. For all $\alpha>0$ there exists $L_0\in\N$ such that for all $L\ge L_0$ there exists a measurable subset
	$\Omega_L(\alpha)\subseteq\Omega$ and a constant $c>0$ such that
	\begin{equation}
		\mathbb P\big(\Omega_L(\alpha)\big)\le \e^{-cL^{\alpha/2}}
	\end{equation}
	and such that for all $\omega\in\big(\Omega_L(\alpha)\big)^c$, ${E \in\mathcal{W}_{L}}$ and ${x\in\Gamma_L}$
	\begin{equation}
		\Big|W^\omega(E;x,-L)  \Big(\begin{array}{@{}c@{}} \hbox{\footnotesize$1$} \\[-.5ex] 
		\hbox{\footnotesize$0$} \end{array} \Big) \Big|^2
		\in [ \e^{-3c_v}, \e^{3c_v}],
	\end{equation}
	where the constant $c_{v}$ is given by Lemma~\ref{Zwischenresultat}, see \eqref{def:cv}.
\end{lemma}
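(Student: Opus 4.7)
The plan is to deduce the uniform bound on $|W^\omega(E;x,-L)(1,0)^T|^2$ by inserting Lemma~\ref{Zwischenresultat}(i) together with the large-deviation bound on the phase sum provided by Theorem~\ref{thm6-Jito}. Taking $w=(1,0)^T$, Lemma~\ref{Zwischenresultat}(i) gives
\begin{equation}
\ln\bigl|W^\omega(E;x,-L)w\bigr|^2 \in 2E\,S^\omega(E,x) + \mathcal{O}(E^2L) + c_v[-1,1],
\end{equation}
where $S^\omega(E,x):=\sum_{k=k_0}^{k_1-1}\Re(d_{V^\omega(2k)}\e^{2\i\vartheta_k})$ and the initial angle $\vartheta_{k_0}=\xi_w$ depends on $\omega$ only through $W^\omega(E;2k_0,-L)w$. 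To reach the target interval $3c_v[-1,1]$ it suffices to bound the first two summands by $2c_v$. Since $|E|\le L^{-1/2-\alpha}$ on $\mathcal{W}_L$, one has $|E^2L|\le L^{-2\alpha}\le c_v$ for $L$ large, so the remaining task is to show $|2E\,S^\omega(E,x)|\le c_v$ uniformly in $E\in\mathcal{W}_L$ and $x\in\Gamma_L$ off an event of probability $\le\e^{-cL^{\alpha/2}}$.

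For a single fixed pair $(E,\theta)$, Theorem~\ref{thm6-Jito} applied with $\alpha/2$ in place of $\alpha$ yields $|S^\omega|\le L^{1/2+\alpha/2}$ off an event of probability $\le C_1\e^{-C_2L^{\alpha/2}}$; combined with $|E|\le L^{-1/2-\alpha}$ this gives $|2E\,S^\omega|\le 2L^{-\alpha/2}\le c_v$ for $L$ large. Uniformity in $x$ (equivalently in the upper summation index $k_1$) is already built into the ``$\exists\,k_1$'' formulation of Theorem~\ref{thm6-Jito}. To promote the bound to uniformity in the continuous parameters $E\in\mathcal{W}_L$ and in the random initial angle $\xi_w$, I would discretise $\mathcal{W}_L\times[0,2\pi[\,$ on a grid of cardinality $M(L)$, apply Theorem~\ref{thm6-Jito} at each grid point, take a union bound, and define $\Omega_L(\alpha)$ as the union of the resulting bad events. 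If $M(L)\le\e^{c'L^{\alpha/2}}$ with $c'<C_2$, the union bound preserves a tail estimate of the form $\Pp(\Omega_L(\alpha))\le\e^{-cL^{\alpha/2}}$ for some $c>0$.

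The main technical difficulty is to quantify the Lipschitz dependence of $S^\omega(E,\theta)$ on $(E,\theta)$ with constants that do not exceed $\e^{c'L^{\alpha/2}}$, since $\vartheta_k$ is obtained by composing up to $L$ single-step phase maps $\Theta_{V^\omega(2j)}$. At $E=0$ each $\Theta_V$ reduces to a pure rotation of the circle because, by \eqref{eq:ab}, the matrix $T_V(0)$ is diagonal with unimodular entries; hence the $\theta$-derivative of the single-step map is $1+\mathcal{O}(|E|)$, and a naive chain-rule bound gives Lipschitz constants of order $\exp(\mathcal{O}(L|E|))=\exp(\mathcal{O}(L^{1/2-\alpha}))$ for the composition $(E,\theta)\mapsto\vartheta_k$. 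The corresponding grid spacing then yields a sub-exponential $M(L)$ that is compatible with the probability budget when $\alpha$ is not too small; in general one has to exploit additional cancellations along the trajectory to keep the Lipschitz constants under control. Once the uniformity is in place, combining the above estimates gives $\ln|W^\omega(E;x,-L)w|^2\in 3c_v[-1,1]$, which is the required conclusion.
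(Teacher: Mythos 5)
Your skeleton is the right one --- combine Lemma~\ref{Zwischenresultat} with Theorem~\ref{thm6-Jito} at level $\alpha/2$, absorb the $\mathcal O(E^2L)$ term using $|E|\le L^{-1/2-\alpha}$, and bound $|2E\,S^\omega|\le 2L^{-\alpha/2}$ --- but the discretisation step contains a genuine gap that the paper's proof is specifically engineered to avoid. First, there is no need to discretise the angle variable at all: by \eqref{eq:Winkedef} the initial angle $\xi_w$ is determined by $M_v^{-1}W^\omega(E;2k_0,-L)w$, and since $2k_0\in\{-L,-L+1\}$ the matrix $W^\omega(E;2k_0,-L)$ is either the identity or one of the two single-step matrices $W_0(E),W_1(E)$. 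Hence for fixed $E$ the random angle $\xi_w$ ranges over an explicit set $\Xi$ of at most six values, and a union over $\Xi$ suffices; a grid on $[0,2\pi[$ is never needed.

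Second, and more seriously, your plan to transfer the bound on the phase sum $S^\omega(E,\theta)$ from grid points to all $E\in\mathcal W_L$ via a Lipschitz estimate does not close. As you note yourself, the chain rule gives a Lipschitz constant of order $\exp(\mathcal O(L|E|))=\exp(\mathcal O(L^{1/2-\alpha}))$ for the trajectory $(E,\theta)\mapsto\vartheta_k$, which forces a grid of cardinality $M(L)\sim\exp(\mathcal O(L^{1/2-\alpha}))$; this is compatible with the probability budget $\e^{c'L^{\alpha/2}}$ only when $1/2-\alpha\le\alpha/2$, i.e.\ $\alpha\ge 1/3$. The lemma is stated for all $\alpha>0$ and is applied in the proof of Theorem~\ref{thm:main} with $\alpha=1/12$, so the regime you would cover is not the one that is needed, and the ``additional cancellations along the trajectory'' you invoke are left entirely unspecified. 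The paper circumvents this by never perturbing the phase sum in $E$: it fixes a polynomial grid $E_n=n/L^2$ (only $\mathcal O(L^{3/2})$ points meet $\mathcal W_L$, so the union bound costs only a polynomial factor), establishes $|W^\omega(E_n;x,-L)w|^2\in[\e^{-2c_v},\e^{2c_v}]$ there, and then applies the energy-perturbation Lemma~\ref{lem:exp-Reihe} directly to the transfer-matrix norm. On the good event Lemma~\ref{lem:Zwischenresultat2} gives the a priori bound $G_{E_n}^\omega\le 2\e^{2c_v}$, so for $|\epsilon|\le L^{-2}$ the perturbation contributes only $(G_{E_n}^\omega)^2\bigl(\e^{4L|\epsilon|G_{E_n}^\omega}-1\bigr)=\mathcal O(1/L)$, which is absorbed by enlarging the interval from $2c_v[-1,1]$ to $3c_v[-1,1]$. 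You would need to replace your Lipschitz argument for $S^\omega$ by this (or an equivalent) norm-perturbation argument for the proof to go through for all $\alpha>0$.
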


\begin{proof}
	Let $w_1:=(1,0)^{T}$ and $w_2:=(0,1)^{T}$. In view of \eqref{eq:Winkedef}, we define a set of 
	modified Pr\"ufer angles
	\begin{align}
		\Xi:=\Big\{\xi\in[0,2\pi[\,:\;\exists W\in \{1_{2\times 2}, W_{0}(E), W_{1}(E)\}, 
			& \,w\in\{w_1,w_2\}	\notag\\
			& \textrm{ with } e_{\xi} =\frac{M_v^{-1} Ww}{|M_v^{-1} Ww|}\Big\}
	\end{align}
	with cardinality $|\Xi| \le 6$. Let 
	\begin{equation}
		\Omega_L(\alpha,E):=\bigcup_{\theta\in \Xi}\Omega_L(\alpha/2,E,\theta).
	\end{equation} 
	Hence, $\mathbb P(\Omega_L(\alpha,E))\le 6C_1\e^{-C_2L^{\alpha/2}}$ by Theorem~\ref{thm6-Jito}.
	We assume $L \ge v^{-2}$ so that $\mathcal{W}_{L} \subset [-v,v]$. Thus, for all $E\in \mathcal{W}_{L}$ and 
	$\omega\in(\Omega_L(\alpha,E))^c$ the estimate \eqref{eq:JSBSwithC} yields
	\begin{equation}
		\ln \big(|W^\omega(E;x,-L)w|^{2}\big) \in \mathcal O(E^2L) + \big(2E L^{1/2+\alpha/2} + c_{v}\big) \; [-1,1]
	\end{equation}
	for all $x\in\Gamma_{L}$ and $w\in\{ w_{1}, w_{2}\}$.
Hence there exists $L_0^\prime \ge v^{-2}$ such that for all $L\ge L_0^\prime$, all $E\in\mathcal{W}_{L}$,
all $\omega\in(\Omega_L(\alpha,E))^c$, all $x\in\Gamma_{L}$ and $w\in\{w_{1},w_{2}\}$, we have
\begin{equation}
	\label{eq:lnWC}
	\ln\big(|W^\omega(E;x,-L)w|^2\big) \in 2c_v \, [-1,1].
\end{equation}
The upper bound in \eqref{eq:lnWC} and Lemma \ref{lem:Zwischenresultat2} imply for the quantity $G_{E}^{\omega}$ in 
Lemma~\ref{lem:exp-Reihe}
\begin{equation}
	\label{eq:normWC}
	G_{E}^{\omega}=\max_{x,y\in\Gamma_L, x<y} \|W^\omega(E;y,x)\|\le 2 \e^{2c_v}
\end{equation}
for all $\omega\in(\Omega_L(\alpha,E))^c$.
We define
\begin{equation}
	\Omega_L(\alpha):=\bigcup_{\substack{n \in\Z: \\[.5ex] n/L^{2} \in\mathcal{W}_{L}}}\Omega_L(\alpha,n/L^2).
\end{equation}
Hence there exists $L_0^{\prime\prime} \ge L_{0}'$ and $c>0$ such that for all $L\ge L_0^{\prime\prime}$ we have 
\begin{equation}
	\mathbb P\big[\Omega_L(\alpha) \big] \leq 18L^{3/2}C_1\e^{-C_2L^{\alpha/2}}\le \e^{-cL^{\alpha/2}}.
\end{equation}
Now, we consider $\omega\in(\Omega_L(\alpha))^c$ arbitrary and $n\in\Z$ arbitrary such that $E_{n} :=n/L^2 \in\mathcal{W}_{L}$ 
and apply Lemma~\ref{lem:exp-Reihe}, \eqref{eq:lnWC} 
and \eqref{eq:normWC} with $E_{n}$, $w=w_{1}$ and any $|\varepsilon| \le L^{-2}$. This gives 
\begin{align}
	|W^\omega(E;x,-L)w_{1}|^{2} & \in |W^\omega(E_{n};x,-L)w_{1}|^{2} + 4 \e^{4c_{v}}\big(\exp\{4\e^{2c_{v}}/L\} -1 \big) \; [-1,1] \notag\\
	& \subseteq [\e^{-2c_{v}},e^{2c_{v}}] + 4 \e^{4c_{v}}\big(\exp\{4\e^{2c_{v}}/L\} -1 \big) \; [-1,1]
\end{align}
for all $x\in\Gamma_{L}$ and all $E\in D_{n}:= E_{n} + L^{-2} [-1,1]$.
Since
\begin{equation}
	\mathcal{W}_{L} \subseteq \bigcup_{n \in\Z:\;  E_{n}\in\mathcal{W}_{L}} D_n
\end{equation}
there exists $L_0 \ge L_0^{\prime\prime}$ such that for all $L \ge L_0$, all $\omega\in(\Omega_L(\alpha))^c$, all $E\in\mathcal{W}_{L}$
and all $x\in\Gamma_{L}$ we have
\begin{equation}
	|W^\omega(E;x,-L)w_1|^2 \in  \, [\e^{-3c_v},\e^{3c_v}].
\end{equation}
\end{proof}

\begin{proof}[Proof of Theorem \ref{thm:Jitomirskaya}]
	\begin{nummer}
	\item
		Let us first proof \eqref{eq:GleichVerteilung}. 
		For every $L\in\N$, $x\in\Gamma_L$, $E\in\R$ and $\omega\in\Omega$, we infer from \eqref{diff-eq} that
		\begin{equation}
			\label{eq:RadienAbschaetzung}
			r_{x}^\omega(E)^2 = \phi^\omega_E(x)^2+\phi^\omega_E(x-1)^2
			= \Big|W^\omega(E;x,-L)
				\Big(\begin{array}{@{}c@{}} \hbox{\footnotesize$1$} \\[-.5ex] \hbox{\footnotesize$0$} \end{array} \Big)\Big|^2 
				/(R^{\omega}_{E})^2,	
		\end{equation}
		with the normalisation
		\begin{equation}
			(R^{\omega}_{E})^2 :=\sum_{k=0}^{L-1}\Big|W^\omega(E;-L+1+2k,-L) \Big(\begin{array}{@{}c@{}} \hbox{\footnotesize$1$} \\[-.5ex] 
			\hbox{\footnotesize$0$} \end{array} \Big)\Big|^2.
		\end{equation}
		Given $\alpha >0$, Lemma~\ref{lem:almostthere} provides the existence of a minimal length $L_{0}\in\N$ such that for all
		$L \ge L_0$, $\omega\in(\Omega_L(\alpha))^c$, $x\in\Gamma_{L}$ and $E\in \mathcal{W}_{L}$, the two-sided estimate 
		\begin{equation}
			\label{eq:Wbounds}
			(R^{\omega}_{E})^2 \in [L\e^{-3c_v},L\e^{3c_v}]
		\end{equation}
		holds.  
		Thus, \eqref{eq:Wbounds}, another application of Lemma~\ref{lem:almostthere} and \eqref{eq:RadienAbschaetzung} yield
		\eqref{eq:GleichVerteilung} with the constant 
		\begin{equation}
			\label{Constant!!}
			C = \e^{6c_v},
		\end{equation}
		and \eqref{eq:cv-lim} implies \eqref{eq:Cnach1}.

		To prove the level-spacing estimate \eqref{eq:LevelSpacing}, let $L_{0}$ be as above, $L \ge L_0$, $\omega\in(\Omega_L(\alpha))^c$
		and let $E,E'\in \mathcal{W}_{L}$ be two adjacent eigenvalues of $H_L^\omega$ with $E<E'$.
		For $E^{(\prime)}$ to be an eigenvalue, Dirichlet boundary conditions $\phi^{\omega}_{E^{(\prime)}}(L)=0$ have to be met on the 
		right-hand side of $\Gamma_{L}$, that is,
		$\theta^\omega_{L}(E^{(\prime)})\in\pi/2+\pi\Z$. Since $\theta^\omega_{L}$ is a continuous, increasing function with 
		respect to the energy, $E$ and $E'$ are adjacent eigenvalues if and only if the Pr\"ufer-angle difference satisfies 
		$\theta_{L}^\omega(E^\prime)-\theta_{L}^\omega(E) =\pi$. Using \eqref{eq:WinkelgeschwindigkeitGamma}, this can be rewritten as
		\begin{equation}
			\label{eq:pi-int}
			\pi = \int_E^{E^\prime}\dd\epsilon\, \frac{\d}{\d\varepsilon} \theta^\omega_{L}(\epsilon)
			= \int_E^{E^\prime}\dd \epsilon \sum_{x=-L}^{L-1} \left(\frac{\phi_\epsilon^\omega(x)}{r_L^\omega(\epsilon)} \right)^{2}
			= \int_E^{E^\prime}\dd \epsilon \; \frac{1}{\big(r_L^\omega(\epsilon)\big)^{2}}.
		\end{equation}
		The eigenfunction estimate \eqref{eq:GleichVerteilung} does not apply directly to $r_L^\omega(\epsilon)$ for 
		$\varepsilon\in\mathcal{W}_{L}$, but only after an 
		additional iteration with the transfer matrix
		\begin{equation}
 			\big(r_L^\omega(\epsilon)\big)^{2} = \Big| W_{V^{\omega}(L-1)}(\varepsilon) 
			\Big(\begin{array}{@{}c@{}} \hbox{\footnotesize$\cos\theta^{\omega}_{L-1}(\varepsilon)$} \\ 
			\hbox{\footnotesize$\sin\theta^{\omega}_{L-1}(\varepsilon)$} \end{array} \Big)\Big|^{2} \big(r_{L-1}^\omega(\epsilon)\big)^{2}.
		\end{equation}
		We already have $\big(r_{L-1}^\omega(\epsilon)\big)^{2} \in L^{-1} [C^{-1},C]$ for every $\omega\in(\Omega_L(\alpha))^c$ 
		by \eqref{eq:GleichVerteilung}. Since 
		$ \max_{V\in\{0,1\}} \|W_{V}(\varepsilon)\| \le \e^{c_{v}/4} \le C$ uniformly in $\varepsilon\in\mathcal{W}_{L}$ 
		by \eqref{def:cv}, we deduce from \eqref{2x2-mat} that $\big(r_L^\omega(\epsilon)\big)^{2} \in L^{-1} [C^{-3},C^{3}]$.
		Inserting this into \eqref{eq:pi-int}, yields 
		\begin{equation}
			E^\prime-E\in\frac{\pi}{L} \; [C^{-3},C^{3}].
		\end{equation}
	\item
 		The existence of the density of states $\mathcal{N}'(E_{c})$ follows from \cite[Thm.~3]{JitomirskayaSchSt}, the upper and lower bounds 
		from Dirichlet--Neumann bracketing and the eigenvalue spacing in the critical energy window, as we show now. 
		
		For $L\in\N$ we introduce the restricted Schrödinger operators $H^{\omega,\,D/N}_L$ with 
		Dirichlet, respectively Neumann, boundary conditions
		\begin{equation}
			\begin{split}
				H^{\omega,\,D}_L:=H^{\omega}_L+\proj{\delta_{-L}}{\delta_{-L}}+\proj{\delta_{L-1}}{\delta_{L-1}},\\
				H^{\omega,\,N}_L:=H^{\omega}_L-\proj{\delta_{-L}}{\delta_{-L}}-\proj{\delta_{L-1}}{\delta_{L-1}}.
			\end{split}
		\end{equation}
		Their integrated densities of states at energy $E\in\R$ are given by
		\begin{equation}
			\mathcal N_L^{\omega,\, D/N}(E):=\Tr\Big\{1_{\leq E}\big(H^{\omega,\,D/N}_L\big)\Big\}.
		\end{equation}
		Since $H^{\omega,\,D/N}_L$ are rank-2-perturbations of $H^\omega_L$, the min-max-principle implies
		\begin{equation}
			\label{eq:Finitrank}
			\mathcal N_L^{\omega,\, D/N}(E)\in \Tr\big\{1_{\le E}(H^\omega_L)\big\} + [-2,2].
		\end{equation}
		According to \cite[p.~312]{carlac1990random} Dirichlet--Neumann bracketing yields
		\begin{equation}
			\label{eq:DNbrack}
			\frac{1}{|\Gamma_L|}\mathbb E\big[\mathcal N_L^{D}(E)\big] 
			\le \mathcal N(E)\le\frac{1}{|\Gamma_L|}\mathbb E\big[\mathcal N_L^{N}(E)\big]
		\end{equation}
		for every $E\in\R$ and every $L\in\N$. Thus, we conclude from \eqref{eq:Finitrank} and \eqref{eq:DNbrack} that
		\begin{equation}
			\mathcal N(E+\epsilon)-\mathcal N(E-\epsilon) 
			\in \frac{1}{|\Gamma_L|} \mathbb E\big\{ \Tr 1_{]E-\epsilon,E+\epsilon]}(H_L)\big\} + \frac{4}{|\Gamma_L|} \; [-1,1]			
			\label{eq:densityDiff}
		\end{equation}
		for every $\varepsilon>0$. For fixed $\alpha \in\;]0,1/2[$ let $\epsilon_L:=L^{-1/2-\alpha}$ be half 
		the width of the critical energy window
		$\mathcal{W}_{L}$ around $E_{c} \in\{0,v\}$. 
		Theorem~\ref{thm:Jitomirskaya}(i) provides the existence of a minimal length $L_{0}\in\N$ 
		such that for all $L\ge L_0$ and
		all $\omega\in(\Omega_L(\alpha))^c$ and  we have
		\begin{equation}
			\label{eq:CountingEnergy}
			 \frac{2\epsilon L}{\pi C^3}  - 1 \le \Tr\big\{1_{]E_{c}-\epsilon,E_{c}+\epsilon]}(H^\omega_L)\big\}
			 \le \frac{2\epsilon C^3 L}{\pi} + 1.
		\end{equation}
		The estimates \eqref{eq:CountingEnergy} and \eqref{eq:densityDiff} imply for $L\ge L_{0}$
		\begin{align}
			\frac{\mathcal N(E_{c} + \epsilon_L) - \mathcal N(E_{c} - \epsilon_L)}{2\epsilon_L}
			& \in \frac{1}{ 2\epsilon_L |\Gamma_L|} \; \mathbb{E} \big\{ 1_{\Omega_{L}(\alpha)} 
				\Tr 1_{]E_{c}-\epsilon_{L},E_{c}+\epsilon_{L}]}(H_L)	\big\}	\notag\\
			& \quad + \frac{\mathbb{P}\big\{\big(\Omega_{L}(\alpha)\big)^{c}\big\}}{2\epsilon_L |\Gamma_L|} \; 
					\Big[\frac{2\epsilon_{L} L}{\pi C^{3}} - 1, \frac{2\epsilon_{L} C^3 L}{\pi} + 1\Big] \notag \\
			& \quad + \frac{2}{\epsilon_L |\Gamma_L|} \;[-1,1] \notag\\
			& \subseteq \frac{\mathbb{P}\big\{\big(\Omega_{L}(\alpha)\big)^{c}\big\} L}{\pi|\Gamma_L|} \; [C^{-3}, C^{3}] \notag\\
			&	\quad + \Big( \frac{\mathbb{P}\{\Omega_{L}(\alpha)\}}{2\varepsilon_{L}} + \frac{3}{\varepsilon_{L}|\Gamma_{L}|}\Big) \;
					[-1,1].
		\end{align}
		Now, the claim follows in the limit $L\to\infty$.
	  \qedhere
	\end{nummer}
\end{proof}
%
%
%
%
%
We finish with an elementary auxiliary result needed in the proof of Theorem~\ref{thm:main}.

\begin{lemma}
	\label{lem:lengthcount}
 	Let $\gamma\in\;]0,1/2[$ and $\gamma_L:=\lfloor(\gamma+\gamma^2)L\rfloor-\lfloor\gamma L\rfloor$ for $L\in\N$. 
	Then for all $L^\prime\in\N$ there exists $L\in \N$ such that $L^\prime=\gamma_L$.
\end{lemma}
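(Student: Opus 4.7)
The plan is to apply a discrete intermediate value argument to the integer-valued sequence $(\gamma_L)_{L\in\N}$: I will show that consecutive increments $\gamma_{L+1}-\gamma_L$ lie in $\{-1,0,1\}$, that $\gamma_L\to\infty$, and that the sequence starts at $0$ for $L=1$. Taken together, these three facts force $\gamma_L$ to attain every value in $\N$ as $L$ ranges over $\N$.

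First I would write
\begin{equation}
 	\gamma_{L+1}-\gamma_L = \big[\lfloor(\gamma+\gamma^2)(L+1)\rfloor-\lfloor(\gamma+\gamma^2)L\rfloor\big]
	- \big[\lfloor\gamma(L+1)\rfloor-\lfloor\gamma L\rfloor\big]
\end{equation}
and observe that, since $0<\gamma<1/2$ gives $0<\gamma<\gamma+\gamma^2<3/4<1$, each of the two bracketed differences is of the form $\lfloor x+\alpha\rfloor-\lfloor x\rfloor$ with $\alpha\in\;]0,1[\,$, hence lies in $\{0,1\}$. Consequently $\gamma_{L+1}-\gamma_L\in\{-1,0,1\}$.

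Next, using $\lfloor y\rfloor\in y-[0,1[$ twice gives the crude bound
\begin{equation}
	\gamma_L \geq (\gamma+\gamma^2)L - 1 - \gamma L = \gamma^2 L - 1,
\end{equation}
so $\gamma_L\to\infty$. Finally, $\gamma_1=\lfloor\gamma+\gamma^2\rfloor-\lfloor\gamma\rfloor=0-0=0$.

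Given $L^\prime\in\N$, I would then define $L:=\min\{M\in\N : \gamma_M\geq L^\prime\}$, which exists by the divergence of $\gamma_L$. Because $\gamma_1=0<L^\prime$ we have $L\geq 2$, so $\gamma_{L-1}<L^\prime\leq\gamma_L$, and the increment bound $\gamma_L-\gamma_{L-1}\leq 1$ forces $\gamma_L=L^\prime$. I do not foresee any real obstacle here; the only point requiring care is ensuring $\gamma+\gamma^2<1$ so that both floor increments are in $\{0,1\}$, which is guaranteed by the hypothesis $\gamma<1/2$.
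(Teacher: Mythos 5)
Your proposal is correct and follows essentially the same route as the paper: bound the increments $\gamma_{L+1}-\gamma_L$ in $\{-1,0,1\}$ via the floor-function observation, note $\gamma_1=0$ and $\gamma_L\to\infty$, and conclude by a discrete intermediate value argument. The paper's proof is merely terser, leaving the final step implicit, while you spell out the minimal-$L$ argument and the divergence bound $\gamma_L\ge\gamma^2L-1$ explicitly.
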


\begin{proof}
	As $\gamma<1$ and $\gamma+\gamma^2<1$, we infer
	$\lfloor (\gamma+\gamma^2)(L+1)\rfloor-\lfloor (\gamma+\gamma^2)L\rfloor\in\{0,1\}$ and 
	$\lfloor \gamma(L+1)\rfloor-\lfloor \gamma L\rfloor\in\{0,1\}$ for all $L\in\N$.
	Thus, we have $\gamma_{L+1}-\gamma_L\in\{-1,0,1\}$ for all $L\in\N$. Together with 
	$\gamma_1=0$ and $\lim_{L\rightarrow\infty}\gamma_L=\infty$, this yields the claim.
\end{proof}

\newcommand{\noopsort}[1]{} \newcommand{\singleletter}[1]{#1}

\end{document}